\documentclass[]{easychair}

\usepackage{graphicx}
\usepackage{amssymb}
\usepackage{amsmath}
\usepackage{mathtools}
\usepackage{paralist}
\usepackage{todonotes}
\usepackage{txfonts}
\usepackage{relsize}

\newtheorem{definition}{Definition}
\newtheorem{lemma}{Lemma}
\newtheorem{theorem}{Theorem}

\newtheorem{fact}{Fact}
\newtheorem{claim}{Claim}
\newtheorem{proposition}{Proposition}
\newtheorem{assumption}{Assumption}

\newcommand{\nat}{\mathbb{N}}

\newcommand{\arity}{\#}
\newcommand{\arityof}[1]{{\arity{#1}}}

\newcommand{\cardof}[1]{||{#1}||}

\newcommand{\width}[1]{\mathrm{wd}({#1})}
\newcommand{\twd}[1]{\mathrm{twd}({#1})}
\newcommand{\adh}[3]{\mathrm{adh}_{#1}({#2},{#3})}

\newcommand{\univ}{\mathsf{U}}
\newcommand{\universeOf}[1]{\mathsf{#1}}

\newcommand{\vars}{\mathcal{V}}
\newcommand{\Vars}{\mathcal{X}}

\newcommand{\isdef}{\stackrel{\scalebox{0.5}{$\mathsf{def}$}}{=}}

\newcommand{\interv}[2]{[{#1},{#2}]}
\newcommand{\tuple}[1]{\langle {#1} \rangle}

\newcommand{\set}[1]{\{ {#1} \}}

\newcommand{\pow}[1]{2^{#1}}

\newcommand{\dom}[1]{\mathrm{dom}({#1})}

\newcommand{\np}{$\mathsf{NP}$}
\newcommand{\ptime}{$\mathsf{PTIME}$}

\newcommand{\signature}{\mathcal{F}}
\newcommand{\trianglesign}{\signature^\basictriangle}
\newcommand{\trianglefansign}{\signature^{\fan{}}}

\newcommand{\hrsignature}{\signature^{\mathsf{HR}}}

\newcommand{\relations}{\mathcal{R}}

\newcommand{\arel}{\mathsf{r}}
\newcommand{\qrel}{\mathsf{q}}

\newcommand{\arrow}[2]{\xrightarrow{{\scriptscriptstyle #1}}_{{\scriptstyle #2}}}

\newcommand{\auto}[2]{\mathcal{A}_{
    {#1}
    \ifthenelse{\equal{#2}{}}{}{,{#2}}
}}

\newcommand{\autsat}[2]{\mathcal{A}^{\scriptscriptstyle\mathsf{sat}}_{
    {#1}
    \ifthenelse{\equal{#2}{}}{}{,{#2}}
}}

\newcommand{\autcut}[2]{\mathcal{A}^{\scriptscriptstyle\mathsf{cut}}_{
    {#1}
    \ifthenelse{\equal{#2}{}}{}{,{#2}}
}}

\newcommand{\autcst}[2]{\mathcal{A}^{\scriptscriptstyle\mathsf{cst}}_{
    {#1}
    \ifthenelse{\equal{#2}{}}{}{,{#2}}
  }
}

\makeatletter
\newcommand*{\da@rightarrow}{\mathchar"0\hexnumber@\symAMSa 4B }
\newcommand*{\da@leftarrow}{\mathchar"0\hexnumber@\symAMSa 4C }
\newcommand*{\xdashrightarrow}[2][]{%
  \mathrel{%
    \mathpalette{\da@xarrow{#1}{#2}{}\da@rightarrow{\,}{}}{}%
  }%
}
\newcommand{\xdashleftarrow}[2][]{%
  \mathrel{%
    \mathpalette{\da@xarrow{#1}{#2}\da@leftarrow{}{}{\,}}{}%
  }%
}
\newcommand*{\da@xarrow}[7]{%
  \sbox0{$\ifx#7\scriptstyle\scriptscriptstyle\else\scriptstyle\fi#5#1#6\m@th$}%
  \sbox2{$\ifx#7\scriptstyle\scriptscriptstyle\else\scriptstyle\fi#5#2#6\m@th$}%
  \sbox4{$#7\dabar@\m@th$}%
  \dimen@=\wd0 %
  \ifdim\wd2 >\dimen@
    \dimen@=\wd2 %
  \fi
  \count@=2 %
  \def\da@bars{\dabar@\dabar@}%
  \@whiledim\count@\wd4<\dimen@\do{%
    \advance\count@\@ne
    \expandafter\def\expandafter\da@bars\expandafter{%
      \da@bars
      \dabar@
    }%
  }%
  \mathrel{#3}%
  \mathrel{%
    \mathop{\da@bars}\limits
    \ifx\\#1\\%
    \else
      _{\copy0}%
    \fi
    \ifx\\#2\\%
    \else
      ^{\copy2}%
    \fi
  }%
  \mathrel{#4}%
}
\makeatother

\newcommand{\store}{\mathfrak{s}}
\newcommand{\struc}{\sigma}

\newcommand{\cardconstr}[3]{\mathsf{card}_{{#2},{#3}}({#1})}
\newcommand{\astruc}{\mathcal{S}}

\newcommand{\false}{\mathrm{false}}

\DeclareMathOperator*{\Pop}{\mathop{\scalebox{1.5}{\raisebox{-0.2ex}{$\parallel$}}}\hspace*{1pt}}%

\newcommand{\cmso}{$\mathsf{CMSO}$}

\newcommand{\mso}{$\mathsf{MSO}$}

\newcommand{\Models}{\models}
\newcommand{\edgrel}{\mathsf{edge}}
\newcommand{\srcrel}[1]{\mathsf{source}_{#1}}

\renewcommand{\mod}{~\mathrm{mod}~}

\newcommand{\sintfusion}[2]{\widetilde{\mathtt{IF}}({#1}\ifthenelse{\equal{#2}{}}{}{,{#2}})}

\newcommand{\step}[1]{\Rightarrow_{\scriptscriptstyle{#1}}}

\newcommand{\langof}[2]{\mathcal{L}^{#1}({#2})}

\newcommand{\proj}[2]{{#1}\!\!\downharpoonleft_{\scriptscriptstyle{#2}}}

\newcommand{\graphof}[3]{\mathrm{subgraph}_{#1}[{#2}
\ifthenelse{\equal{#3}{}}{]}{,{#3}]}}

\newcommand{\graph}{G}
\newcommand{\hgraph}{H}
\newcommand{\kgraph}{K}

\newcommand{\strucof}[1]{\astruc({#1})}
\newcommand{\graphsof}[1]{{\mathcal{G}^{\scriptscriptstyle{#1}}}}
\newcommand{\graphs}{\graphsof{}}

\newcommand{\vertices}{V}
\newcommand{\vertof}[1]{\vertices_{\scriptscriptstyle{#1}}}
\newcommand{\sep}[2]{\mathcal{S}({#1},{#2})}

\newcommand{\edgeof}[1]{\edges_{\scriptscriptstyle{#1}}}

\newcommand{\sources}{\xi}
\newcommand{\sourceof}[1]{\sources_{\scriptscriptstyle{#1}}}

\newcommand{\algof}[1]{\mathcal{#1}}

\newcommand{\emptygraph}{\therefore}

\newcommand{\grammar}{\Gamma}

\newcommand{\period}[2]{\pi_{#2}\ifthenelse{\equal{#1}{}}{}{({#1})}}

\newcommand{\base}[2]{\beta_{#2}\ifthenelse{\equal{#1}{}}{}{({#1})}}

\newcommand{\hval}{\mathbf{val}}

\newcommand{\node}{\mathsf{node}}

\newcommand{\parent}{\mathsf{parent}}
\newcommand{\trans}{\delta}
\newcommand{\scheme}{\Theta}
\newcommand{\defdof}[2]{\mathrm{def}^{{#2}}_{{#1}}}
\newcommand{\defd}[1]{\mathrm{def}_{#1}}
\newcommand{\hr}{$\mathsf{HR}$}
\newcommand{\vr}{$\mathsf{VR}$}

\newcommand{\rules}{\mathcal{R}}

\newcommand{\nonterm}{\mathcal{N}}
\newcommand{\axioms}{\mathcal{X}}

\newcommand{\edges}{{E}}

\newcommand{\tree}{T}
\newcommand{\bag}{\beta}

\newcommand{\twof}[1]{\mathrm{tw}({#1})}

\newcommand{\unit}[2]{\overline{1}_{
        {#1}
        \ifthenelse{\equal{#2}{}}{}{,{#2}}
}}

\newcommand{\zero}[2]{\overline{0}_{
        {#1}
        \ifthenelse{\equal{#2}{}}{}{,{#2}}
}}

\newcommand{\vertex}{\bullet}

\newcommand{\pop}{\parallel}
\newcommand{\fan}[1]{\triangledown^{#1}}
\newcommand{\basictriangle}{\triangle}

\newcommand{\trianglegraphs}{\graphs^\basictriangle}
\newcommand{\trianglefangraphs}{\graphs^{\fan{}}}
\newcommand{\spgraphs}{\graphs^{\mathit{sp}}}
\newcommand{\trianglealg}{\mathlarger{\mathlarger{\basictriangle}}}
\newcommand{\fanalg}{\mathlarger{\mathlarger{\fan{}}}}

\newcommand{\sop}{\circ}

\newcommand{\rename}[1]{\mathsf{rename}_{{#1}}}

\newcommand{\forget}[1]{\mathsf{forget}_{{#1}}}
\newcommand{\bridge}{\multimapboth}

\newcommand*{\langu}{\mathcal{L}}
\newcommand*{\klangu}{\mathcal{K}}

\let\terms\undefined
\newcommand*{\terms}[4]{\mathfrak{T}_{{#1}\ifthenelse{\equal{#2}{}}{}{,{#2}}\ifthenelse{\equal{#3}{}}{}{,{#3}}}({#4})}

\newcommand{\threshold}[1]{\theta\ifthenelse{\equal{#1}{}}{}{({#1})}}

\newcommand{\pseudo}[3]{\mathcal{E}_{{#1},{#2}}\ifthenelse{\equal{#3}{}}{}{[{#3}]}}

\newcommand{\copyof}[1]{\mathsf{copy}_{#1}}
\newcommand{\domof}{\mathit{dom}}
\newcommand{\univof}[1]{\mathit{univ}_{#1}}
\newcommand{\parentof}{\overline{\mathit{parent}}}
\newcommand{\bagof}{\mathsf{bag}}

\newif\ifLongVersion\LongVersiontrue

\ifLongVersion 
\usepackage[createShortEnv, conf={normal}]{proof-at-the-end}
\newenvironment{proofTriangleTextEnd}{}{}
\newenvironment{proofFanTextEnd}{}{}
\else 
\usepackage[createShortEnv,conf={end,no link to proof,restate}]{proof-at-the-end}

\fi

\title{Robust Algebraic Theories of Triangle Graphs}
\author{
  Marius Bozga\inst{1}
  \and
  Radu Iosif\inst{1}
  \and
  Florian Zuleger\inst{2}
}

\institute{
  CNRS, Université Grenoble Alpes, Grenoble, France\\
  \email{Marius.Bozga@univ-grenoble-alpes.fr, Radu.Iosif@univ-grenoble-alpes.fr}
  \and
  Technische Universtit\"{a}t M\"{u}nchen, M\"{u}nchen, Germany\\
  \email{Florian.Zuleger@mytum.de}
}

\authorrunning{Bozga, Iosif, Zuleger}
\titlerunning{Triangle Graphs}

\begin{document}

\maketitle


\begin{abstract}
Triangle graphs are graphs of tree-width at most three in which every
edge belongs to a triangle. This class encompasses well-known graph
families such as Apollonian networks. We also consider fan graphs, a
subclass of triangle graphs closely related to the 3-connected
triangle graphs.

Our main result is an algebraic characterization of both classes. We
introduce two graph algebras based on parallel composition and a
ternary serial composition, and show that they generate exactly the
triangle and fan graphs, respectively. These algebras provide a
natural extension of the classical algebra of series-parallel graphs
from tree-width two to tree-width three.

Building on these characterizations, we investigate context-free,
recognizable, and logically-definable graph languages. We show that
counting monadic second-order logic (CMSO) is decidable over the
context-free sets of triangle and fan graphs. Moreover, we prove that
recognizable graph languages coincide with languages definable in CMSO
for both algebras.
\end{abstract}

\section{Introduction}

Graphs in which every edge belongs to a triangle have been studied
extensively in graph theory and include well-known families such as
Apollonian networks\footnote{The name is derived from the packing of
mutually-tangent circles due to Apollonius of Perga
(262-190BC).}~\cite{PhysRevLett.94.018702,MA2024115486}, which are
obtained by recursively subdividing triangular faces, as in
Figure~\ref{fig:triangles} (a). Such graphs and related graph classes
arise in a variety of settings, including graph drawing, network
design and graph coloring~\cite{BENANTAR199585,GoldnerHarary}. The
motivation of studying these classes of graphs lies in their
usefulness as models of, e.g., evolutionary
processes~\cite{MA2024115486} or resource distribution in parallel
computing~\cite{BENANTAR199585}. Since the term \emph{triangle graph}
is used with different meanings in the literature, we use it here to
denote a simple graph of tree-width at most three in which every edge
belongs to a triangle. We also consider \emph{fan graphs}, a subclass
of triangle graphs that is defined by a connectivity
condition. Triangle and fan graphs are the two classes studied in this
paper.

Our interest in triangle graphs stems from formal language theory: we
are interested in finding finite representations of infinite sets of
graphs and in understanding the expressive power of the resulting
specification formalisms. Such representations play a central role in
many areas of computer science, including verification, synthesis, and
learning. Broadly speaking, one can distinguish constructive
representations, such as grammars and algebras, from descriptive
representations based on logic. A central objective is to understand
the relationships between the corresponding language classes, such as
context-free, recognizable, and logically-definable graph languages.
For instance, the equivalences between recognizability and
definability in Monadic Second Order (\mso) logic for
words~\cite{Buechi90} and trees~\cite{Doner70} are among the pillars
of formal language theory, and constitute the theoretical basis of
automata-based model-checking~\cite{DBLP:books/daglib/0020348} and
synthesis~\cite{Buchi1990} techniques.  Extending such correspondences
to graph languages is a longstanding challenge.  In this paper, we
investigate these questions for the triangle and fan graphs introduced
above.

Among graph classes, graphs of bounded tree-width occupy a
distinguished position.  Bounded tree-width is closely linked to
algorithmic tractability: many \np-complete graph problems such as
Hamiltonicity and $k$-Colorability become \ptime, when restricted to
inputs whose tree-width is bounded, see, e.g., \cite[Chapter
  11]{DBLP:series/txtcs/FlumG06}. Moreover, bounding the tree-width
sets a sharp frontier between the decidability and undecidability of
Monadic Second Order (\mso) logical
theories~\cite{CourcelleI,Seese91}.  Further, the tree-width parameter
is related to connectivity: graphs of tree-width at most $k$ are at
most $k$-connected\footnote{A graph is $k$-\emph{connected} if it has
at least $k$ vertices and cannot be disconnected by deleting fever
than $k$ vertices.}. Connectivity-based graph decompositions (i.e.,
how can $k$-connected graphs be decomposed into $(k-1)$-connected
graphs, etc.) are important tools for understanding the structure of
graphs and devising graph algebras
\cite{Cunningham_Edmonds_1980,10353159}. Unfortunately, little is
currently known about such decompositions, for
$k\geq3$~\cite{DBLP:conf/soda/KurkofkaP26}.

The extension of formal language-theoretic methods from words and
trees to graphs has been particularly successful for series-parallel
graphs, that is, the 2-connected graphs of tree-width at most two.
These graphs admit elegant algebraic descriptions based on serial and
parallel composition, which have led to logical
characterizations~\cite{CourcelleV}, regular
expressions~\cite{DBLP:conf/icalp/Doumane22}, and regular
grammars~\cite{Lics25}.  However, beyond tree-width two, natural graph
classes admitting comparably simple algebraic descriptions remain
scarce.  This raises the question whether the successful theory of
series-parallel graphs can be extended to graph classes of tree-width
three.

Our starting point is the observation that triangle and fan graphs
provide natural candidates for such an extension. We introduce two
graph algebras based on parallel composition and a ternary serial
composition and show that they generate exactly the classes of
triangle and fan graphs, respectively. Thus graph classes defined by
tree-width at most three, the triangle property, and possibly
connectivity constraints admit precise algebraic characterizations. In
this sense, the resulting algebras can be viewed as natural
generalizations of the classical algebra of series-parallel graphs
from tree-width two to tree-width three. For instance, Figure
\ref{fig:triangles} (c) shows a series-parallel graph encoded by a
triangle graph with a central vertex linked by a dashed edge to every
other vertex.

\begin{figure}[t!]
  \centerline{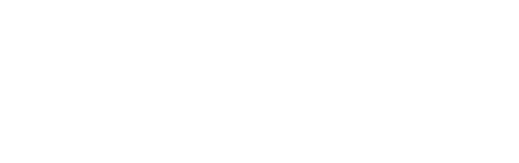}
  \caption{Relations between recognizable, context-free and \cmso-definable sets of graphs}
  \label{fig:sets}
\end{figure}

\vspace*{-\baselineskip}
\paragraph*{Contributions}
We summarize below the contributions of this
paper: \begin{compactenum}
\item We identify the classes of \emph{triangle} graphs, see
  Figure~\ref{fig:triangles} (a), and \emph{fan} graphs, see
  Figure~\ref{fig:triangles} (b) as natural graph-theoretic
  counterparts of series-parallel graphs in the setting of tree-width
  three. The distinction between the two classes is closely tied to
  connectivity: while triangle graphs may have connectivity between
  zero and three, fan graphs are precisely those 2-connected graphs
  that become 3-connected after adding a basic triangle.
\item We define two graph algebras based on the standard
  Courcelle-style parallel composition and a generalized ternary
  serial composition. The first algebra generates the class of
  triangle graphs whereas the second generates the subclass of fan
  graphs.
\item We investigate the structure of set for triangle and fan graphs
  from three complementary perspectives: grammars (context-free
  languages), finite algebras (recognizable languages), and logic
  (CMSO-definable languages). We establish several connections between
  these notions and, in particular, prove that recognizable and
  CMSO-definable languages coincide for both
  algebras. Figure~\ref{fig:sets} summarizes the resulting
  relationships between the language classes considered here.
\end{compactenum}

Beyond their role in describing graph languages, the algebras
introduced in this paper may provide useful tools for studying the
structure of graphs of tree-width at most three.  Possible
applications include the axiomatisation of syntactic equivalence of
graph
expressions~\cite{DBLP:conf/mfcs/DoczkalP18,conf/icalp/Doumane0P24},
and the development of canonical graph decompositions.  A particularly
intriguing connection arises from a recent canonical decomposition
theorem for 3-connected graphs~\cite{10353159}, based on
totally-nested mixed-separators of cardinality
three\footnote{Mixed-separators consist of both vertices and
edges. Totally-nested separators are such that none ``cuts through''
another.}.  The resulting decomposition consists of wheels and
thickened $K_{3,m}$'s (i.e., bipartite graphs with a side being a
$K_3$), both of which admit simple descriptions in our triangle
algebras. This observation suggests that triangle graphs may play a
broader role in the structural theory of graphs of tree-width at most
three and raises the question of whether our algebraic descriptions
can contribute to future decomposition results.

\vspace*{-\baselineskip}
\paragraph*{Related work}
The field of graph algebras was pioneered by Courcelle and Engelfriet
(see~\cite{courcelle_engelfriet_2012} for a comprehensive survey),
with the introduction of hyperedge-replacement (\hr) and
vertex-replacement (\vr) algebras. These algebras consist of low-level
operations that capture the graph-theoretic notions of tree-width (for
\hr) and clique-width (for \vr) as the least number of vertex labels
(also called sources, or ports) used by a term that evaluates to a
given graph. The main inconvenience of these low-level algebras is the
unrestricted use of forget and rename operations, that make it fairly
hard to parse a graph, i.e., retrieve a term that evaluates to that
graph, directly from its structure. This problem has been noticed early
by Arnborg et al.~\cite{10.1007/BFb0017382}, who introduce a
generalized $n$-ary series composition, for $n\geq2$, and give a
linear-time parsing algorithm in this derived algebra.

For graphs of tree-width $2$ at most, an algebra based on generalized
serial composition, that enables the development of regular grammars
capturing its recognizable sets\footnote{Equivalently, definable in
\mso{} with modulo constraints on cardinality of sets.}, has been
introduced in~\cite{Lics25}. This work complements an earlier result
of Doumane, that defines \hr-based regular expressions for the same
class of recognizable graph
languages~\cite{DBLP:conf/icalp/Doumane22}. The generalized serial
composition is also used in~\cite{conf/icalp/Doumane0P24} to give a
finite axiomatisation of term equivalence for the class of graphs of
tree-width at most three. Both our algebras may be viewed as a
tree-width three (at most) analogue of the classical algebra of
series-parallel graphs~\cite{DUFFIN1965303}. These algebras combine
parallel composition with a serial composition, the latter becoming
ternary in the tree-width three setting. In particular, the Apollonian
networks (Figure~\ref{fig:triangles}(a)) are precisely the graphs
generated from basic triangles using ternary serial composition alone.

\vspace*{-\baselineskip}
\paragraph*{Notations}
The set of natural numbers is denoted by $\nat$. The powerset of a
finite set $A$ is denoted $\pow{A}$ and its cardinality by
$\cardof{A}$. The \emph{disjoint union} of two sets $A$ and $B$ is
denoted $A \uplus B$, assumed to be undefined if $A$ and $B$ are not
disjoint. A partial function is denoted $f : A \rightharpoonup B$ and
its domain as $\dom{f}$. The \emph{domain-restriction} of $f$ to a
subset $X \subseteq A$ is the partial function $\proj{f}{X}$ that
agrees with $f$ over $X$ and is undefined outside $X$.

\section{Definitions}

We consider simple undirected binary graphs having some vertices
designated by labels from $\nat$. Formally, a graph is a triple
$\graph=(\vertof{},\edgeof{},\sourceof{})$, where $\vertof{}$ is a
finite set of vertices, $\edgeof{} \subseteq \set{e \in
  \pow{\vertof{}} \mid \cardof{e} = 2}$ is a set of undirected edges
(i.e., there are no duplicated edges or self-loops) and $\sourceof{} :
\nat \rightharpoonup \vertof{}$ is a partial injective function,
having finite domain, that labels vertices with numbers. The
\emph{type} of $\graph$ is the finite set $\dom{\sourceof{}} \subseteq
\nat$. We denote by $\vertof{\graph}$, $\edgeof{\graph}$ and
$\sourceof{\graph}$ the components of $\graph$, respectively.

The \emph{sources} of $\graph$ are the vertices in the range of
$\sourceof{\graph}$, omitted when they are not important. The vertices
that are not sources are called \emph{inner vertices}. A
\emph{$n$-graph} is a graph of type $\set{1, 2, \ldots, n}$. An
\emph{empty $n$-graph} is a $n$-graph having no other vertices than
its sources and no edges. An \emph{unlabeled} graph is a $0$-graph,
i.e., a graph without sources.

A set of vertices $X \subseteq \vertof{\graph}$ of a graph $\graph$
\emph{induces} the subgraph $\graph[X] \isdef
(X,\edgeof{\graph}\cap\pow{X},\proj{\sourceof{\graph}}{X})$ of
$\graph$. We denote by $\graph - X$ the graph obtained from $\graph$
by deleting all vertices in $X$ and all edges incident to a vertex in
$X$, i.e., $\graph - X \isdef \graph[\vertof{\graph} \setminus X]$. A
graph $\graph$ is said to be $(A,B)$-\emph{bipartite} if
$\vertof{\graph}=A \uplus B$ and each edge is between a vertex in $A$
and one in $B$.

In the following, we shall not distinguish between isomorphic graphs,
i.e., graphs that differ only by a renaming of vertices. An edge
$\set{x,y}\in\edgeof{\graph}$ will be denoted as $xy$. A \emph{bridge}
is a $2$-graph consisting of two vertices and an edge between them. We
denote by $xyz$ the unlabeled graph consisting of the vertices $x,y,z$
and edges $xy$, $yz$ and $xz$, and by $K_3$ any graph whose underlying
unlabeled graph is of this form.

A path in a graph $\graph$ is a sequence of vertices $x_1,\ldots,x_n
\in \vertof{\graph}$, for some $n\geq2$, such that
$x_1,\ldots,x_{n-1}$ are pairwise distinct and $x_ix_{i+1} \in
\edgeof{\graph}$, for each $1 \leq i < n$. A graph is \emph{connected}
if between any two vertices there is a path, otherwise it is
\emph{disconnected}. A graph is $k$-\emph{connected}, for $k\geq1$, if
it has at least $k+1$ vertices and deleting fewer than $k$ vertices
and their incident edges does not disconnect it. For instance, $K_3$
is $2$-connected, but not $3$-connected.

A graph is \emph{acyclic} if it contains no path $x_1, \ldots, x_n$
such that $n\geq3$ and $x_1=x_n$. A \emph{tree} is a connected and
acyclic graph. The vertices of a tree are called \emph{nodes}. Note
that we consider non-rooted trees, i.e., trees without a designated
root node, and chose to introduce rooted trees later, when needed.

\begin{definition}\label{def:tw}
  A \emph{tree decomposition} of a graph $\graph$ is a pair
  $(\tree,\bag)$, where $\tree$ is a tree and $\bag : \vertof{\tree}
  \rightarrow \pow{\vertof{\graph}}$ is a labeling of the nodes in
  $\tree$ with sets of vertices from $\graph$, called \emph{bags},
  such that: \begin{compactenum}
  \item\label{it1:def:tw} the sources of $\graph$ are all contained in some bag,
  \item\label{it2:def:tw} each edge of $\graph$ is a subset of some bag,
  \item\label{it3:def:tw} for each vertex $x \in \vertof{\graph}$, the set of nodes of
    $\tree$ whose bags contain $x$ induce a nonempty and connected
    subgraph of $\tree$.
  \end{compactenum}
  The width of $(\tree,\bag)$ is $\width{\tree,\bag} \isdef
  \max\{\cardof{\bag(n)} \mid n \in \vertof{\tree}\}-1$ and the
  \emph{tree-width} of $\graph$ is $\twd{\graph} \isdef \min
  \{\width{\tree,\bag} \mid (\tree,\bag) \text{ is a
    tree-decomposition of } \graph\}$. A tree decomposition
  $(\tree,\bag)$ of a graph $\graph$ is \emph{optimal} if
  $\width{\tree,\bag} = \twof{\graph}$.
\end{definition}

For a node $n \in \vertof{\tree}$ and an edge $nm\in\edgeof{\tree}$,
the set of vertices $\bag(n)\cap\bag(m)$ is called the \emph{adhesion}
of $n$ for $m$, denoted $\adh{\tree}{n}{m}$, or $\adh{}{n}{m}$,
whenever the tree decomposition in question is understood. In the
following, we assume that each tree decomposition has the following property:

\begin{assumption}\label{ass:td}
  Let $(\tree,\bag)$ be a tree decomposition. For each edge $nm \in
  \edgeof{\tree}$, we have $\bag(n)\setminus\adh{}{n}{m} \neq
  \emptyset$.
\end{assumption}
This assumption loses no generality, because each edge $nm$ of $\tree$
such that $\bag(n)\subseteq\bag(m)$ can be contracted, the result
being a tree decomposition of the same graph, having the same width.

\subsection{Triangle Graphs}

\begin{figure*}[t!]
  \centerline{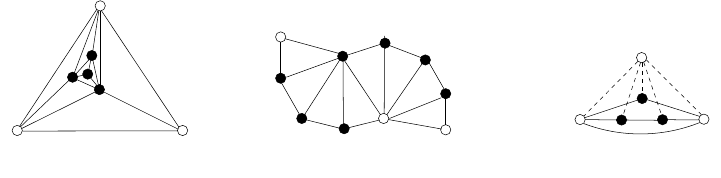}
  \caption{Triangle graph (a) Fan graph (b) Series-parallel graph
    encoded by triangle graph (c). Hollow circles denote sources and
    filled circles are inner vertices.}
  \label{fig:triangles}
\end{figure*}

First, we define triangle graphs by means of a so-called
\emph{triangle property}, which requires that each edge be part of a
$K_3$ induced subgraph:

\begin{definition}\label{def:triangle-property}
  A graph $\graph$ has the \emph{triangle property} if and only if,
  for each edge $xy\in\edgeof{\graph}$ there exists a vertex $z \in
  \vertof{\graph}\setminus\set{x,y}$ and edges $xz, yz \in
  \edgeof{\graph}$. The vertex $z$ is said to be a \emph{witness} of
  the edge $xy$ in $\graph$.

  Given a graph $\graph \in \trianglegraphs$ and a set $X \subseteq
  \vertof{\graph}$ the \emph{triangle subgraph of $\graph$ induced by
  $X$} is the graph $\graph[X]^\basictriangle \isdef
  (X,\edgeof{},\sourceof{})$, where
  $\edgeof{}\subseteq\edgeof{\graph}$ is the set of edges whose
  endpoints and witness all belong to $X$ and $\sourceof{}$ is the
  domain-restriction of $\sourceof{\graph}$ to $X$, i.e.,
  $\sourceof{}\isdef\proj{\sourceof{\graph}}{X}$.
\end{definition}
For instance, any clique (i.e., a graph having an edge between any two
vertices) has the triangle property. The following definition makes
the notion of triangle graphs precise:

\begin{definition}\label{def:triangle-graph}
  A \emph{triangle graph} is a $3$-graph of tree-width at most $3$
  having the triangle property. The set of triangle graphs is denoted
  as $\trianglegraphs$.
\end{definition}
For instance, a clique with more than $4$ vertices has tree-width more
than $3$, hence is not a triangle graph. On the other hand, the graphs
in Figure \ref{fig:triangles} are triangle graphs because they have
tree-width $3$ and the triangle property. As a remark, any $3$-graph
must have tree-width at least $2$, because the three sources of the
graph must belong to some bag, by Definition \ref{def:tw}
(\ref{it1:def:tw}). Hence, triangle graphs have tree-width between $2$
and $3$. Note that there is no restriction on the connectivity of a
triangle graph, that can be disconnected, $1$-, $2$- or $3$-connected.

Second, we define a subclass of triangle graphs, that are almost
$3$-connected. A \emph{thickening} of a graph $\graph$ is a graph
$\overline{\graph}$ obtained by adding an edge between each pair of
sources.

\begin{definition}\label{def:fan-graph}
  A \emph{fan graph} is a $2$-connected triangle graph $\graph$ such
  that exactly one of the following holds: \begin{compactenum}
  \item $\graph$ is $K_3$, 
  \item $\graph$ is $3$-connected,
  \item $\overline\graph$ is $3$-connected.
  \end{compactenum}
  The set of fan graphs is denoted as $\trianglefangraphs$.
\end{definition}
\noindent For instance, the graphs in Figure \ref{fig:triangles} are
fan graphs.  On the other hand, the $3$-graph consisting of four
disconnected vertices is a triangle graph, but not a fan graph.

\section{Algebras of Triangle Graphs}
\label{sec:algebras}

We characterize triangle graphs by means of graph algebras, i.e., sets
of graphs that are closed under certain operations. To begin with, we
recall a standard graph algebra, known as \emph{hyperedge replacement}
(\hr) ~\cite{CourcelleI} and define the algebra of triangle graphs as
a derived algebra of \hr. The elements of the \hr{} algebra are graphs
of any finite type $\tau \subseteq \nat$ and its operations are as
follows:

\begin{figure*}[t!]
  \centerline{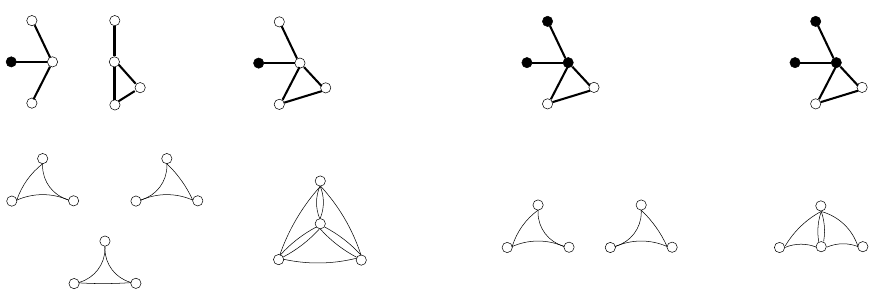}
  \caption{The parallel composition (a) forget (b) and rename (c)
    operations of the \hr{} algebra. Serial composition (d) and fan
    (e) operations of the triangle algebras.}
  \label{fig:graphs}
\end{figure*}

\begin{compactitem}[-]
\item $\vertex$ is a constant denoting an empty $1$-graph,
\item $\bridge$ is a constant denoting a bridge, i.e., a $2$-graph
  consisting of one edge and its endpoints,
\item $\forget{i}(\graph)$ removes the label of $\sourceof{\graph}(i)$
  (not the vertex itself); for a finite set $I \subseteq \nat$, we
  denote by $\forget{I}(\graph)$ the composition of the functions
  $\forget{i}$, for each $i \in I$, where the order is not important,
\item $\rename{i\leftrightarrow j}(\graph)$ swaps the labels of
  $\sourceof{\graph}(i)$ and $\sourceof{\graph}(j)$, for some $i \neq
  j \in \nat$; we denote by $\rename{i_1\leftrightarrow j_1, \ldots,
    i_k\leftrightarrow j_k}$ the composition of the functions
  $\rename{i_1\leftrightarrow j_1} \ldots \rename{i_k\leftrightarrow j_k}$
  given in this order,
\item $\graph_1 \pop \graph_2$ (\emph{parallel composition}) joins the
  sources having the same label in the disjoint union (i.e., the union
  of two disjoint copies) of $\graph_1$ and $\graph_2$ and deletes
  duplicated edges.
\end{compactitem}
Figure \ref{fig:graphs} gives examples of applying the parallel
composition (a) forget (b) and rename (c) operations. For simplicity,
in the following we use the same notation for a function symbol and
its interpretation, whenever the algebra in question is clear from the
context. Note that we consider \hr{} operations to be polymorphic,
because they apply to graphs of any type. For instance, two graphs of
type $\set{1,2,3}$ and $\set{1,2,3,4}$ are composed in parallel in
Figure \ref{fig:graphs} (a), the result being a graph of type
$\set{1,2,3,4}$, to which $\forget{\set{1,2}}$ is applied, resulting
in a graph of type $\set{3,4}$, in Figure \ref{fig:graphs} (b). Then,
$\rename{3\leftrightarrow1,4\leftrightarrow2}$ is applied to this
graph to obtain a graph of type $\set{1,2}$.

The signature of the \hr{} algebra is the set $\hrsignature \isdef
\set{\vertex,\bridge,\pop} \cup \set{\forget{i} \mid i \in \nat} \cup
\set{\rename{i\leftrightarrow j} \mid i\neq j \in \nat}$. A term is
built as usual from the function symbols in $\hrsignature$ and
variables, i.e., symbols of zero arity, not in $\hrsignature$. We
denote by $t[x_1,\ldots,x_n]$ a term whose variables are
$x_1,\ldots,x_n$, where a variable may occur one or more times. A term
is \emph{ground} if it contains no variables. The graph obtained by
interpreting the function symbols from the ground term $t$ in the
\hr{} algebra is denoted $\hval(t)$, also called the \emph{value} of
$t$. For instance, the leftmost graph in Figure \ref{fig:graphs} (a)
is $\hval(\bridge \pop \forget{1}(\bridge) \pop \rename{1
  \leftrightarrow 3}(\bridge))$. Although, in general, there might be
several ground terms that evaluate to the same graph, each graph is
the value of at least one ground term, see e.g.~\cite[Proposition
  2.33]{courcelle_engelfriet_2012} for a proof of this statement. The
\hr{} algebra allows to give an alternative equivalent definition of
tree-width (Definition \ref{def:tw}):

\begin{lemma}[Proposition 1.19 in \cite{courcelle_engelfriet_2012}]
  \label{lemma:hr-tw}
  A graph $\graph$ has tree-width at most $k$ if and only if there
  exists a ground term $t$ such that $\hval(t)=\graph$ and $t$ uses at
  most $k+1$ source labels.
\end{lemma}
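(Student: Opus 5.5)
The plan is to prove the two implications separately, translating between tree decompositions of width at most $k$ (Definition~\ref{def:tw}) and ground terms using at most $k+1$ labels.

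\emph{From terms to decompositions.} Proceed by structural induction on a ground term $t$ using labels in a set $L$ with $\cardof{L}\le k+1$. For every subterm $u$ of $t$, build a tree decomposition $(\tree_u,\bag_u)$ of $\hval(u)$ that has a distinguished node $r_u$ whose bag is exactly the set of sources of $\hval(u)$. These sources carry labels from $L$, so that bag has at most $k+1$ elements. The base cases are immediate:
\begin{compactitem}[-]
\item for $\vertex$, take a single node whose bag is the vertex;
\item for $\bridge$, take a single node whose bag is both endpoints.
\end{compactitem}
The unary operations are handled as follows. For $\rename{i\leftrightarrow j}$, keep the decomposition unchanged. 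For $\forget{i}$, add a fresh root node whose bag is the new (smaller) source set, attached to the old root; the connectivity condition holds because the new bag is contained in the old root bag. For $u_1\pop u_2$, take the disjoint union of the two decompositions, identify the vertices that are glued, and add a fresh root whose bag is the union of the source sets, adjacent to both old roots. This new bag still has at most $k+1$ elements, since all its vertices carry distinct labels from $L$. Every edge lies in some bag by induction. A glued vertex $x$ appears in both old roots and in the new root, so the nodes containing $x$ remain connected. Finally, contract edges as needed to meet Assumption~\ref{ass:td}, which preserves width.

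\emph{From decompositions to terms.} Given a decomposition of width at most $k$, first root the tree at a node $r$ whose bag contains all sources; such a node exists by condition~\ref{it1:def:tw}. Then assign to each vertex $x$ a label $c(x)\in\{1,\ldots,k+1\}$ such that vertices sharing a bag receive distinct labels. This is possible by processing the nodes top-down: the vertices of a node's bag that also occur in its parent's bag are already labeled, and the remaining vertices get free labels, since the bag has at most $k+1$ elements. Next define a term $t_n$ for each node $n$ by bottom-up induction. The value of $t_n$ should be the subgraph formed by the vertices in the subtree of $n$, with every edge assigned to a unique node in that subtree, and with the vertices of $\bag(n)$ as sources labeled by $c$. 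Concretely, $t_n$ is the parallel composition of three kinds of pieces:
\begin{compactitem}[-]
\item one copy of $\vertex$, suitably renamed, for each $x\in\bag(n)$;
\item one renamed $\bridge$ for each edge assigned to $n$;
\item for each child $m$, the term $\forget{I_m}(t_m)$, where $I_m$ is the set of labels of vertices in $\bag(m)\setminus\bag(n)$.
\end{compactitem}

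Two facts make this construction correct. First, condition~\ref{it3:def:tw} ensures that a forgotten vertex never reappears higher up in the tree, so no unwanted gluing occurs. Second, a vertex shared by $\bag(m)$ and $\bag(n)$ has the same label on both sides, so parallel composition glues exactly the right vertices. For the renamings, I would use the fact that $\vertex$ carries label $1$ and $\bridge$ carries labels $1,2$; composing the transpositions $\rename{i\leftrightarrow j}$ moves these to any target labels in $\{1,\ldots,k+1\}$. When $k=0$ a bridge cannot occur, since there are no edges. At the root, apply a final relabelling so that each source of $\graph$ carries its original label. This step is where I expect the main technical difficulty: the source labels of $\graph$ may lie outside $\{1,\ldots,k+1\}$, and the available renamings are only swaps. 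I would resolve it by choosing $c$ at the root to agree with $\sourceof{\graph}$ on the sources from the start, and letting the label pool be the type of $\graph$ together with fresh labels, at most $k+1$ in total. I would also check that duplicated edges are harmlessly deleted by $\pop$.
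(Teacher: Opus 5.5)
Your two translations are the standard argument behind the cited Proposition 1.19. The paper gives no proof of its own, and your top-down labelling is exactly the $k$-history of Lemma~\ref{lemma:history}. The inductive invariants are the right ones: the root bag is the source set in the first direction, and $\hval(t_n)$ is the graph of the subtree of $n$ with sources $\bag(n)$ in the second.

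The one step that fails is your fix at the root. In this signature $\vertex$ and $\bridge$ carry the fixed labels $1$ and $1,2$. Renaming them into a pool $\tau\cup F$ that omits $1$ or $2$ therefore mentions labels outside the pool, so the count can exceed $k+1$. Under the reading ``labels occurring in $t$'' this cannot be repaired. A single vertex with source label $5$ has tree-width $0$, yet any term for it must move the label $1$ of $\vertex$ to $5$ by renamings, so it mentions both $1$ and $5$.

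You therefore have to fix the reading of the statement, in one of two ways.
\begin{itemize}
\item Read ``uses at most $k+1$ labels'' as ``the value of every subterm has at most $k+1$ sources''. Your first direction then goes through verbatim, since every bag you create is the source set of some subterm value. In the second direction, the root step is $\forget{I}$, with $I$ the labels of the non-source vertices of $\bag(r)$, followed by whatever transpositions are needed; none of these increases the number of sources.
\item Restrict to graphs whose type is contained in $\{1,\ldots,k+1\}$. Count all labels occurring in $t$, including the labels $1,2$ of the constants, take the pool $\{1,\ldots,k+1\}$, and choose $c$ on $\bag(r)$ to agree with $\sourceof{\graph}$ on the sources. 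This is all the paper needs, since it only applies the lemma to $3$-graphs with $k\le 3$.
\end{itemize}
In both versions you must explicitly forget the labels of the non-source vertices of $\bag(r)$. Your final ``relabelling'' leaves this implicit, but $\bag(r)$ need not consist of sources only.
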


\vspace*{-\baselineskip}
\paragraph*{Triangle algebra.}
The signature of the \emph{triangle algebra} $\trianglealg$ is the set
$\trianglesign\isdef\set{\basictriangle,\emptygraph,\sop,\pop}$, where
the parallel composition is interpreted the same as in \hr{} and the
other operations are defined below: \begin{compactitem}[-]
\item $\basictriangle$ is the constant symbol denoting the $K_3$
  $3$-graph, formally:
  \[\basictriangle \isdef\ \bridge \pop \rename{1 \leftrightarrow
    3}(\bridge) \pop \rename{2 \leftrightarrow 3}(\bridge)\]
\item $\emptygraph$ is the constant symbol denoting the empty
  $3$-graph, formally:
  \[\emptygraph\ \isdef \vertex \pop \rename{1 \leftrightarrow
    2}(\vertex) \pop \rename{1 \leftrightarrow 3}(\vertex)\]
\item $\sop(\graph,\hgraph,\kgraph)$ (\emph{serial composition}) is
  the operation depicted in Figure \ref{fig:graphs} (d), formally:
  \[\sop(\graph,\hgraph,\kgraph) \isdef \forget{4}(\rename{1
    \leftrightarrow 4}(\graph) \pop \rename{2 \leftrightarrow
    4}(\hgraph) \pop \rename{3 \leftrightarrow 4}(\kgraph))\]
\end{compactitem}
Note that the occurrences of the \hr{} forget operation are confined
to the serial composition. A \emph{triangle term} is a term built from
the function symbols in $\trianglesign$ and variables. The value
$\hval(t)$ of a ground triangle term $t$ is the value of the \hr{}
term obtained by replacing the symbols from $\trianglesign$ with their
definitions.

\vspace*{-\baselineskip}
\paragraph*{Fan algebra.}
The signature of the \emph{fan algebra} $\fanalg$ is the set
$\trianglefansign \isdef
\set{\basictriangle,\fan{1},\fan{2},\fan{3},\sop,\pop}$, whose
operations, depicted in Figure \ref{fig:graphs} (e), have the
following formal definitions, where $\basictriangle$, $\sop$ and $\pop$ are the same as above:
\begin{align*}
  \graph \fan{1} \hgraph \isdef & ~\sop(\emptygraph,\graph,\hgraph) \hspace*{5mm}
  \graph \fan{2} \hgraph \isdef \sop(\graph,\emptygraph,\hgraph) \hspace*{5mm}
  \graph \fan{3} \hgraph \isdef \sop(\graph,\hgraph,\emptygraph)
\end{align*}
Note that $\trianglefansign$ is obtained from $\trianglesign$ by
restricting the empty graph symbol $\emptygraph$ to occur at most once
within a serial composition. A \emph{fan term} is a term over the
signature $\trianglefansign$. The value $\hval(t)$ of a ground fan
term $t$ is the value of the triangle term obtained by replacing the
fan operations from $t$ with their definitions.

\subsection{Relation with Series-Parallel Graphs}

The algebra of series-parallel graphs consists of $2$-graphs built
from bridges ($\bridge$) using binary serial composition (i.e.,
$\graph_1 \sop \graph_2$ joins the $2$-source of $\graph_1$ with the
$1$-source of $\graph_2$, the resulting vertex being unlabeled) and
parallel composition~\cite{DUFFIN1965303}. We denote by $\spgraphs$
the set of series-parallel graphs. We prove that these are exactly the
graphs obtained from basic triangles composed using parallel
composition and the $\fan{3}$ operation, from which the $3$-source is
deleted, together with all the edges incident to it, as in Figure
\ref{fig:triangles} (c).

Formally, given a $2$-graph $\graph$, we denote by
$\widetilde{\graph}$ the $3$-graph obtained from $\graph$ by adding a
$3$-source and connect it via an edge to each vertex from
$\vertof{\graph}$. The set of ground terms consisting of
$\basictriangle$, $\fan{3}$ and $\pop$ is denoted
$\mathcal{T}_3$. Clearly, the set of interpretations of the terms in
$\mathcal{T}_3$ forms a subset of $\trianglefangraphs$. The following
statement formalizes the relation between series-parallel and fan
graphs:

\begin{propositionE}
  $\set{\widetilde{\graph} \mid \graph \in \spgraphs} = \set{\hval(t)
    \mid t \in \mathcal{T}_3}$. 
\end{propositionE}
\begin{proofE}
  ``$\subseteq$'' Let $\graph \in \spgraphs$ be a series-parallel
  graph. There exist a ground term over the operations $\bridge$,
  $\sop$ and $\pop$ such that $\hval(t)=\graph$. Let $\widetilde{t}$
  be the ground term obtained from $t$ by substituting each occurrence
  of $\bridge$ by $\basictriangle$ and of $\sop$ by
  $\fan{3}$. Clearly, we have $\widetilde{t} \in \mathcal{T}_3$. We
  prove that $\widetilde{\graph} = \hval(\widetilde{t})$ by induction
  on the structure of $t$, by considering the cases
  below: \begin{compactitem}[-]
  \item $t = \bridge$: in this case, $\widetilde{t} =
    \basictriangle$ and $\widetilde{\hval(t)} = \hval(\widetilde{t})$
    is obvious.
  \item $t = t_1 \sop t_2$: in this case, $\widetilde{t} =
    \widetilde{t}_1 \fan{3} \widetilde{t}_2$. Note that $\fan{3}$
    joins the $2$-source of $\hval(t_1)$ with the $1$-source of
    $\hval(t_2)$, removes the source label of the joined node, and
    joins the $3$-sources of both $\hval(t_1)$ and $\hval(t_2)$. We
    compute: \begin{align*}
      \widetilde{\hval(t)} = & ~\widetilde{\hval(t_1) \sop \hval(t_2)} \\
      = & ~\widetilde{\hval(t_1)} \fan{3} \widetilde{\hval(t_2)} \text{, by the above argument} \\
      = & ~\hval(\widetilde{t}_1) \fan{3} \hval(\widetilde{t}_2) \text{, by the inductive hypothesis} \\
      = & ~\hval(\widetilde{t})
    \end{align*}
  \item $t = t_1 \pop t_2$: in this case, $\widetilde{t} =
    \widetilde{t}_1 \pop \widetilde{t}_2$. Note that $\pop$ joins the
    $i$-the sources of $\hval(t_1)$ and $\hval(t_2)$, for
    $i=1,2,3$. We compute: \begin{align*}
      \widetilde{\hval(t)} = & ~\widetilde{\hval(t_1) \pop \hval(t_2)} \\
      = &~\widetilde{\hval(t_1)} \pop \widetilde{\hval(t_2)} \text{, by the above argument} \\
      = & ~\hval(\widetilde{t}_1) \pop \hval(\widetilde{t}_2) \text{, by the inductive hypothesis} \\
      = & ~\hval(\widetilde{t})
    \end{align*}
  \end{compactitem}

  \noindent''$\supseteq$'' Let $\widetilde{t} \in \mathcal{T}_3$ be a
  ground term and $t$ be the ground term obtained from $t$ by
  replacing each occurrence of $\basictriangle$ with $\bridge$ and of
  $\fan{3}$ with $\sop$. Clearly, $t$ is a ground term over the
  signature of series-parallel graphs. We prove that
  $\hval(\widetilde{t}) = \widetilde{\hval(t)}$ by induction on the
  structure of $\widetilde{t}$, considering the cases
  below: \begin{compactitem}[-]
  \item $\widetilde{t} = \basictriangle$: in this case $t = \bridge$
    and $ \hval(\widetilde{t}) = \widetilde{\hval(t)}$ is obvious. 
  \item $\widetilde{t} = \widetilde{t}_1 \fan{3} \widetilde{t}_2$: in
    this case, $t = t_1 \sop t_2$. We compute:
    \begin{align*}
      \hval(\widetilde{t}) = &~ \hval(\widetilde{t}_1) \fan{3} \hval(\widetilde{t}_2) \\
      = &~ \widetilde{\hval(t_1)} \fan{3} \widetilde{\hval(t_2)} \text{, by the inductive hypothesis} \\
      = &~ \widetilde{\hval(t_1) \sop \hval(t_2)} \\
      = &~ \widetilde{\hval(t_1 \sop t_2)} 
    \end{align*}
    The second step uses the same argument as in the case $t=t_1 \sop
    t_2$ of ``$\subseteq$''. 
  \item $\widetilde{t} = \widetilde{t}_1 \pop \widetilde{t}_2$: in
    this case, $t = t_1 \pop t_2$. We compute:
    \begin{align*}
      \hval(\widetilde{t}) = &~ \hval(\widetilde{t}_1) \pop \hval(\widetilde{t}_2) \\
      = &~ \widetilde{\hval(t_1)} \pop \widetilde{\hval(t_2)} \text{, by the inductive hypothesis} \\
      = &~ \widetilde{\hval(t_1) \pop \hval(t_2)} \\
      = &~ \widetilde{\hval(t_1 \pop t_2)} 
    \end{align*}
    The second step uses the same argument as in the case $t=t_1 \pop
    t_2$ of ``$\subseteq$''. 
  \end{compactitem}
\end{proofE}

\subsection{Algebraic Characterization of Triangle Graphs}

In the remainder of this section, we prove that the triangle and fan
algebras correctly define the sets of triangle and fan graphs
(Definitions \ref{def:triangle-graph} and \ref{def:fan-graph},
respectively).

\begin{theorem}\label{thm:triangle-graphs}
  $\trianglegraphs = \set{\hval(t) \mid t \text{ is a ground
      triangle term}}$.
\end{theorem}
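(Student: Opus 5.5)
The proof has two inclusions. For ``$\supseteq$'' I use structural induction on ground triangle terms. For ``$\subseteq$'' I use induction on the number of vertices, decomposing a triangle graph with a tree decomposition of width at most $3$.

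\emph{Soundness ($\supseteq$).} I show that every $\hval(t)$ is a $3$-graph of tree-width at most $3$ with the triangle property.
\begin{compactitem}[-]
\item Unfolding $\basictriangle$, $\emptygraph$ and $\sop$ into \hr{} terms yields a ground \hr{} term using only the labels $1,2,3,4$, since label $4$ is used only transiently inside $\sop$. Lemma~\ref{lemma:hr-tw} then gives tree-width at most $3$, and the type is clearly $\set{1,2,3}$.
\item The constants $\basictriangle$ and $\emptygraph$ trivially have the triangle property.
\item Neither $\pop$ nor $\sop$ deletes vertices, and both map every edge of each argument, together with its witness, injectively into the result. Hence every edge keeps a witness, and the triangle property is preserved.
\end{compactitem}

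\emph{Completeness ($\subseteq$), base and parallel decomposition.} Let $\graph$ be a triangle graph with sources $S=\set{s_1,s_2,s_3}$, and induct on $\cardof{\vertof{\graph}}$.
\begin{compactitem}[-]
\item If $\graph$ has no inner vertex, the triangle property forces the edge set to be empty or all of $K_3$: any single edge between sources needs the third source as witness, and the two new edges then need witnesses too. So $\graph$ is $\emptygraph$ or $\basictriangle$.
\item Otherwise, let $C_1,\ldots,C_k$ be the connected components of $\graph - S$. Put $\graph_i \isdef \graph[S\cup C_i]^\basictriangle$. Let $\graph_0$ be $\basictriangle$ if some source--source edge is witnessed by a source, and $\emptygraph$ otherwise.
\item Each edge of $\graph$ lies in some $\graph_i$ together with a witness. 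An edge touching $C_i$ has all its common neighbours in $S\cup C_i$. A source--source edge witnessed by an inner vertex in $C_i$ lies in $\graph_i$.
\item Therefore $\graph=\graph_0\pop\graph_1\pop\cdots\pop\graph_k$, and each $\graph_i$ is again a triangle graph, since tree-width does not increase under subgraphs.
\item If $k\geq2$, each $\graph_i$ has fewer vertices and induction applies. It remains to treat $k=1$, i.e.\ a single connected inner component $C$.
\end{compactitem}

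\emph{Completeness, serial decomposition (main obstacle).} For a single component $C$ I want an inner vertex $v$ with $\graph=\graph_0\pop\sop(\graph_1,\graph_2,\graph_3)$, where $\graph_i$ contains $v$ in place of $s_i$.
\begin{compactitem}[-]
\item Take a width-$\le 3$ tree decomposition of $\graph$ and attach a node $r$ with bag $S$ to a node containing $S$; this does not increase the width. Contract edges as in Assumption~\ref{ass:td}, leaving $r$ untouched.
\item Since $C$ is connected and disjoint from $S$, all bags meeting $C$ lie in one branch at $r$. Let $m$ be the neighbour of $r$ on that branch. By the assumption $\bag(m)\not\subseteq S$, so $\bag(m)=S'\cup\set{v}$ with $v\in C$ and $S'\subseteq S$.
\item Each component $D$ of $C-v$ lives in a subtree hanging off $m$, whose adhesion has size at most $3$. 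This adhesion must contain $v$, because otherwise $D$ would be separated from $v$ inside $C$, contradicting the connectedness of $C$. Hence $D$ attaches to $v$ and to at most two sources, say all except $s_i$.
\item Assign $D$ to $\graph_i$, and take $\graph_i$ to be the triangle subgraph induced by $v$, the sources other than $s_i$, and the components assigned to it.
\item Edges of the form $v s_j$ or $s_j s_k$ whose witness lies in a component go into that component's piece. Edges witnessed only by $v$, or only by sources, go into a piece containing the corresponding triangle. Edges appearing in several pieces cause no problem, because $\pop$ and $\sop$ delete duplicated edges.
\item Each piece is a triangle graph that misses $s_i$, so it has fewer vertices than $\graph$ and induction applies. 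Pieces with no assigned components become $\emptygraph$ or $\basictriangle$.
\end{compactitem}

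The delicate point, which I expect to be the main obstacle, is the edge bookkeeping in this last step. One must show that every edge ends up with a witness inside its chosen piece, and that the serial composition reproduces exactly $\edgeof{\graph}$, with no missing edges and no spurious edges between sources.
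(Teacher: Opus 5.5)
Your soundness direction is the paper's argument (Lemma~\ref{lemma:witness-triangle} plus Lemma~\ref{lemma:hr-tw}). For completeness you take a different route, which is viable, but the step that chooses $v$ has a real gap.

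\textbf{How the routes differ.} The paper normalizes a width-$3$ decomposition into an alternating one with the $3$-step property. It pads the tree-width-$2$ case with an inner vertex and uses Lemma~\ref{lemma:full-td}, then fixes a $3$-history (Lemma~\ref{lemma:history}) to order the arguments. It then translates bottom-up: adhesion nodes become parallel compositions, and a bag node $n$ with $\beta(n)=\beta(m)\cup\{a\}$ becomes a serial composition around $a$. You recurse top-down on $|V(G)|$, splitting off the components of $G-S$ and then one central vertex $v$. This needs no histories, since the argument position is fixed by which $s_i$ the vertex $v$ replaces, and it is more elementary. The paper's translation, in turn, is shaped so that it can be reused as an MSO-transduction in Lemma~\ref{lemma:atd-triangle-term}.

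\textbf{The gap in choosing $v$.} From $\beta(m)\not\subseteq S$ you only get \emph{some} inner vertex in $\beta(m)$, not $\beta(m)=S'\cup\{v\}$. Your next step, that every component of $C-v$ lies in a subtree hanging off $m$, needs $\beta(m)\cap C=\{v\}$. For a counterexample, take inner vertices $u,w$, edges $s_1u,s_1w,uw$, and bags $\{s_1,s_2,s_3\}$ and $\{s_1,u,w\}$. Contracting the first bag into $r$ leaves $\beta(m)=\{s_1,u,w\}$. For $v=u$ the component $\{w\}$ sits inside $\beta(m)$, so no adhesion bounds its attachments. If you never contract into $r$, then $\beta(m)=S$ and your appeal to Assumption~\ref{ass:td} fails instead.

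The repair is a case split.
\begin{itemize}
\item If $S\subseteq\beta(m)$, width at most $3$ forces $\beta(m)=S\cup\{v\}$, and your argument goes through verbatim.
\item Otherwise some $s_i\notin\beta(m)$. Since $s_i\in\beta(r)$ and all bags meeting $C$ lie in the branch of $m$, connectivity shows $s_i$ has no neighbour in $C$. Hence any $v\in\beta(m)\cap C$ works, with every component of $C-v$ assigned to $G_i$.
\end{itemize}
Flooding the bags as in Lemma~\ref{lemma:full-td} would also force $\beta(m)=S\cup\{v\}$.

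\textbf{Tree-width of the serial pieces.} ``Tree-width does not increase under subgraphs'' does not by itself make the serial pieces triangle graphs. Definition~\ref{def:tw} requires the sources to share a bag, and $G_i$ has the new sources $\{v\}\cup(S\setminus\{s_i\})$. Choosing $v\in\beta(m)$ settles this. In the first case $\beta(m)$ contains these sources. In the second case, replace the bag of $r$ by $(S\setminus\{s_i\})\cup\{v\}$ in the restricted decomposition; this is legal because $r$ is adjacent to $m$ and $v\in\beta(m)$.

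\textbf{The edge bookkeeping is routine.} Let $X_i=\{v\}\cup(S\setminus\{s_i\})\cup\bigcup_{i(D)=i}D$ and let $G_i$ be the triangle subgraph induced by $X_i$. Then:
\begin{itemize}
\item An edge meeting a component $D$ has all its common neighbours in $X_{i(D)}$.
\item An edge $vs_j$ or $s_js_k$ witnessed by a vertex of $D$ lies in $G_{i(D)}$, because that witness makes its source endpoints neighbours of $D$.
\item An edge witnessed by $v$ or by a source lies in a triangle contained in some $X_l$, or in $S$; the latter case is covered by $G_0$.
\item Triangle subgraphs keep the triangle property: a witness $z$ of $xy$ makes $y$ a witness of $xz$ and $x$ a witness of $yz$.
\item No spurious edges arise, since serial composition only identifies vertices that coincide in $G$.
\end{itemize}
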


Since triangle terms translate to \hr{} terms that use at most $4$
sources, the tree-width of a graph $\hval(t)$, where $t$ is a ground
triangle term, is necessarily less than or equal to $3$ (Lemma
\ref{lemma:hr-tw}). The right to left direction of Theorem
\ref{thm:triangle-graphs} is a consequence of this and the following
fact:

\begin{lemmaE}\label{lemma:witness-triangle}
  For each ground triangle term $t$, the graph $\hval(t)$ has the
  triangle property.
\end{lemmaE}
\begin{proofE}
  Let $\graph \isdef \hval(t)$ and $xy\in\edgeof{\graph}$ be an edge
  of $\graph$. We shall prove the existence of a vertex
  $z\in\vertof{\graph}$ such that $xz,yz\in\edgeof{\graph}$.
  The proof goes by induction on the structure of $t$,
  considering the following cases: \begin{itemize}[-]
  \item $t=\emptygraph$: in this case, there is no edge
    $xy\in\edgeof{\graph}$, so the triangle property holds vacuously.
  \item $t=\basictriangle$: in this case, $z$ is the only vertex of
    $\graph$ distinct from $x$ and $y$. Then,
    $xy,yz\in\edgeof{\graph}$ because $\hval(t)$ is a $K_3$ graph.
  \item $t=\sop(t_1,t_2,t_3)$: since one of $xy \in
    \edgeof{\hval(t_1)}$, $xy \in \edgeof{\hval(t_2)}$ or $xy \in
    \edgeof{\hval(t_3)}$ must hold, the inductive hypothesis applies,
    hence $z \in \vertof{\hval(t_i)}$ and $xz,yz \in
    \edgeof{\hval(t_i)}$, where $1 \leq i \leq 3$ is such that $xy \in
    \edgeof{\hval(t_i)}$.
  \item $t=t_1 \pop t_2$: since one of $xy \in \edgeof{\hval(t_1)}$ or
    $xy \in \edgeof{\hval(t_2)}$, the inductive hypothesis applies,
    hence $z \in \vertof{\hval(t_i)}$ and $xz,yz \in
    \edgeof{\hval(t_i)}$, where $1 \leq i \leq 2$ is such that $xy \in
    \edgeof{\hval(t_i)}$.
  \end{itemize}
  \vspace*{-1.5\baselineskip}
\end{proofE}

The left to right direction of Theorem \ref{thm:triangle-graphs}
requires additional definitions and lemmas. A tree $\tree$ is
\emph{rooted} if it has at least two nodes and a designated
\emph{root} $r \in \vertof{\tree}$. The \emph{leaves} of a rooted tree
are the nodes different from the root that belong to exactly one edge
of $\tree$. A node $m \in \vertof{\tree}$ is an \emph{ancestor} of
another node $n \in \vertof{\tree}$ if $m$ is on the unique path from
$r$ to $n$. If, moreover, $mn \in \edgeof{\tree}$, we say that $m$ is
the \emph{parent} of $n$ (resp. $n$ is a \emph{child} of $m$) in
$\tree$.  In the following, we shall work with rooted tree
decompositions of the following form:

\begin{definition}\label{def:alternating-td}
  An \emph{alternating tree decomposition} $(\tree,\bag)$ of a graph
  $\graph$ is a tree decomposition of $\graph$ such that $\tree$ is
  $(A,B)$-bipartite, has root $r\in A$, and the following
  hold: \begin{compactenum}
  \item\label{it1:def:alternating-td} the children of a node $n \in A$
    (resp. $n \in B$) belong to $B$ (resp. $A$) and the leaves of
    $\tree$ belong to $B$,
  \item\label{it2:def:alternating-td} for each node $n \in A
    \setminus\set{r}$ having parent $m$, for each child $p$ of $n$, we
    have $\bag(n) = \bag(m) \cap \bag(p)$,
  \item\label{it3:def:alternating-td} for each node $n \in B$ having
    parent $m$, for each child $p$ of $n$, we have
    $\bag(m)\neq\bag(p)$.
  \item\label{it4:def:alternating-td} $\bag(r)$ contains the sources
    of $\graph$ and only those vertices.
  \end{compactenum}
  The nodes from the sets $A$ and $B$ are called the \emph{adhesion}
  and \emph{bag} nodes of $\tree$, respectively.
\end{definition}
The following proves that considering only alternating tree
decompositions loses no generality:

\begin{lemmaE}\label{lemma:alternating-td}
  For each tree decomposition $(\tree,\bag)$ of a graph $\graph$,
  there exists an alternating tree decomposition $(\tree',\bag')$ of
  $\graph$, such that $\width{\tree',\bag'} \leq \width{\tree,\bag}$.
\end{lemmaE}
\begin{proofE}
  We build $(\tree',\bag')$ from $(\tree,\bag)$ as follows. First, we
  chose a node $r \in \vertof{\tree}$ such that $\bag(r)$ contains the
  sources of $\graph$, to be the root of $\tree$. Such a node exists,
  by Definition \ref{def:tw} (\ref{it1:def:tw}). We proceed top-down
  on the structure of the rooted tree $\tree$, starting from $r$. Let
  $m \in \vertof{\tree}$ be a node and $n_1, \ldots, n_k \in
  \vertof{\tree}$ be the children of $m$ in $\tree$. By a reindexing,
  if necessary, we consider indices $1 = i_1 \leq \ldots \leq i_\ell =
  k$ such that: \begin{itemize}[-]
  \item $\adh{\tree}{n_j}{m}=\adh{\tree}{n_h}{m}$, for all $1 \leq s <
    \ell$ and $i_s \leq j < h < i_{s+1}$, and
  \item $\adh{\tree}{n_{i_j}}{m} \neq \adh{\tree}{n_{i_h}}{m}$, for
    all $1 \leq j < h \leq \ell$.
  \end{itemize}
  That is, we group the children of $m$ according to their adhesions
  in $\tree$. We consider fresh nodes $p_1, \ldots, p_{\ell}$ that
  will become the children of $m$ in $\tree'$. Moreover, for all $1
  \leq s \leq \ell$: \begin{itemize}[-]
  \item $n_{i_s}, \ldots, n_{i_{s+1}-1}$ are the children of $p_s$ in $\tree'$,
  \item $\bag'(p_s) = \adh{\tree}{n_{i_s}}{m}$.
  \end{itemize}
  Otherwise, $\bag'$ agrees with $\bag$ over $\vertof{\tree}$. The
  tree $\tree'$ is the result of applying this transformation
  top-down, starting from $r$ (independent branches can be processed
  in any order). We are left with checking the four points of
  Definition \ref{def:alternating-td} below: \begin{compactitem}[-]
    \item It is easy to check that $\tree'$ is bipartite, with bag
      nodes $\vertof{\tree}$ and adhesion nodes all the fresh nodes
      introduced by the construction, thus taking care of point
      (\ref{it1:def:alternating-td}).
    \item For each adhesion node $p$, we have $\bag'(p) =
      \adh{\tree}{m}{n} = \bag'(m) \cap \bag'(n)$, where $m$ is the
      bag parent of $p$ and $n$ is any of its bag children in $\tree'$
      (i.e., $n$ is a child of $m$ in $\tree$). This takes care of
      point (\ref{it2:def:alternating-td}).
    \item In order to take care of point
      (\ref{it3:def:alternating-td}), we change $\tree'$ as follows:
      for each non-root bag node $n$ with adhesion parent $m$, if $n$
      has an adhesion child $p$ such that $\bag'(m)=\bag'(p)$, we
      append $p$ to the parent of $m$. Moreover, $\width{\tree',\bag'}
      \leq \width{\tree,\bag}$ because each bag of $(\tree',\bag')$ is
      a subset of some bag of $(\tree,\bag)$.
    \item Finally, we add a new adhesion node $r'$, connect it to $r$,
      set $\bag'(r)$ to be the set of sources of $\graph$ and promote
      $r'$ to be new root of $\tree'$, thus taking care of point
      (\ref{it4:def:alternating-td}).
  \end{compactitem}
\end{proofE}

Moreover, we shall consider alternating tree decompositions having the
following property:

\begin{definition}\label{def:step-td}
  An alternating tree decomposition $(\tree,\bag)$, whose sets of sets
  of adhesion and bag nodes are $A$ and $B$, respectively, has the
  $k$-\emph{step} property, for some $k\geq1$, if and only if
  $\cardof{\bag(n)}=k$ if $n \in A$ and $\cardof{\bag(n)}=k+1$ if $n
  \in B$.
\end{definition}

Focusing on alternating tree decompositions having the $k$-step
property does not lose generality, as shown by the following lemma:

\begin{lemmaE}\label{lemma:full-td}
  For each $k$-graph $\graph$ of tree-width $k$, there exists an
  optimal alternating tree decomposition $(\tree,\bag)$ having the
  $k$-step property.
\end{lemmaE}
\begin{proofE}
  Let $(\tree',\bag')$ be a tree decomposition of $\graph$ such that
  $\width{\tree',\bag'}=k$. Then there exists at least one node $n \in
  \vertof{\tree'}$ such that $\cardof{\bag'(n)}=k+1$. We obtain
  $(\tree,\bag)$ by applying the construction from the proof of Lemma
  \ref{lemma:alternating-td} to the tree decomposition
  $(\tree''',\bag''')$ obtained from $(\tree',\bag')$ in two phases:

  \noindent(\emph{Flooding}) We repeat the following steps as long as
  there is a node in $\tree'$ whose bag has cardinality at most
  $k$: \begin{itemize}
  \item let $n$ be a node that has at least one neighbour such that
    $\cardof{\bag'(n)} > \cardof{\bag'(m)}$.
  \item chose a vertex $v \in \bag'(n) \setminus \bag'(m)$ and add it
    to $\bag'(m)$.
  \end{itemize}
  The iteration of the above steps terminates because the total number
  of vertices in the bags of $\tree'$ increases with each iteration
  and this number cannot exceed $\cardof{\vertof{\tree'}} \cdot
  (k+1)$. Moreover, upon termination each node has a bag of
  cardinality $k+1$. The last step of this phase is the contraction of
  all edges $nm \in \edgeof{\tree'}$ such that
  $\bag'(n)=\bag'(m)$. The outcome, denoted $(\tree'',\bag'')$, is a
  tree decomposition of $\graph$ because, for each vertex $v \in
  \vertof{\graph}$, the set of nodes $n \in \vertof{\tree''}$ such
  that $v \in \bag''(n)$ stays connected throughout the
  iteration of the above steps.

  \noindent(\emph{Smoothing}) We repeat the following steps until
  nothing changes: \begin{itemize}
  \item let $nm \in \edgeof{\tree''}$ be an edge such that $\bag''(m)
    \setminus \adh{\tree''}{n}{m} = \set{v_1,\ldots,v_\ell}$ and
    $\bag''(n) \setminus \adh{\tree''}{n}{m} =
    \set{w_1,\ldots,w_\ell}$, for some $2 \leq \ell \leq k$. Note
    that, because $\cardof{\bag''(m)} = \cardof{\bag''(n)} = k+1$, we
    must have $\cardof{\bag''(m) \setminus \adh{\tree''}{n}{m}} =
    \cardof{\bag''(n) \setminus \adh{\tree''}{n}{m}} \leq k$, where
    the sets $\set{v_1,\ldots,v_\ell}$, $\set{w_1,\ldots,w_\ell}$ and
    $\adh{\tree''}{n}{m}$ are pairwise disjoint.
  \item let $n_1, \ldots, n_{\ell-1}$ be fresh nodes with bags
    $\bag'''(n_i) \isdef \set{v_1,\ldots,v_{\ell-i}} \uplus
    \set{w_1,\ldots,w_i} \uplus \adh{\tree''}{n}{m}$, for all $1 \leq
    i \leq \ell-1$. It is easy to check that $\cardof{\bag'''(n_i)} = k +
    1$, for all $1 \leq i \leq \ell-1$.
  \item remove the edge $nm$ and add edges $nn_1, n_1n_2, \ldots,
    n_{\ell-1}m$.
  \end{itemize}
  Let $(\tree''',\bag''')$ be the outcome of this iteration, where
  $\bag'''$ agrees with $\bag''$ over $\vertof{\tree''}$. Note that
  $(\tree''',\bag''')$ is a tree decomposition of $\graph$ because,
  for each vertex $v \in \vertof{\graph}$, the set of nodes $n \in
  \vertof{\tree'''}$ such that $v \in \bag'''(n)$ stays connected
  throughout the above iteration.

  To summarize, $(\tree''',\bag''')$ is a tree decomposition of
  $\graph$ having the following properties: \begin{enumerate}
  \item $\cardof{\bag'''(n)} = k+1$, for all nodes $n \in
    \vertof{\tree'''}$,
  \item $\cardof{\bag'''(n) \setminus \adh{\tree'''}{n}{m}} = 1$, for
    all edges $nm \in \edgeof{\tree'''}$.
  \end{enumerate}
  Then, applying the construction from the proof of Lemma
  \ref{lemma:alternating-td}, yields an alternating tree decomposition
  $(\tree,\bag)$ with the required properties. In particular, the bag
  of the root of $(\tree,\bag)$ has $k$ vertices because $\graph$ has
  $k$ sources.
\end{proofE}
\noindent Among the consequences of the above lemma is that $k$-graphs
of tree-width at most $k$ are at most $k$-connected. This occurs
because each adhesion of a tree decomposition of a graph is a
separator of that graph (i.e., splits the graph in two sides such that
each path between vertices belonging to different sides must contain a
vertex of the separator). See Lemma \ref{lemma:adhesion-separation}
below for a proof of this statement.

The following notion will be used in the proofs of the
characterization results for triangle and fan graphs. Given an optimal
tree decomposition of a graph $\graph$ of tree-width $k$ at most, a
$k$-\emph{history} associates each vertex of the graph a number
between $1$ and $k+1$. Intuitively, this is the source number assigned
to the vertex during the construction of the graph, based on the given
tree decomposition. More precisely, any optimal tree decomposition of
$\graph$ can be translated into a ground \hr{} term $t$ such that
$\graph=\hval(t)$ and $t$ uses at most $k+1$ source labels (Lemma
\ref{lemma:hr-tw}). The $k$-history corresponding to $t$ tracks the
source label of each vertex of $\graph$, according to the evaluation
of $t$ that builds $\graph$. We formalize this notion in general, for
graphs of arbitrary tree-width:

\begin{definition}\label{def:history}
  Given an optimal tree decomposition $(\tree,\bag)$ of a graph
  $\graph$ of tree-width $k$, a $k$-\emph{history of $(\tree,\bag)$}
  is a mapping $h : \vertof{\graph} \rightarrow \set{1,\ldots,k+1}$
  such that, for each node $n \in \vertof{T}$ and any two vertices
  $v\neq v' \in \bag(n)$, we have $h(v) \neq n(v')$.
\end{definition}

The following lemma is fairly standard and can be proved for arbitrary
tree decompositions, not just for the alternating ones. For reasons of
self-completeness, we prove a version of it below using our
terminology:

\begin{lemmaE}\label{lemma:history}
  Each optimal tree decomposition $(\tree,\bag)$ of a graph $\graph$
  of tree-width $k$ admits a $k$-history.
\end{lemmaE}
\begin{proofE}
  We build a $k$-history $h : \vertof{\graph} \rightarrow
  \set{1,\ldots,k+1}$ of $(\tree,\bag)$ by building pairwise disjoint
  sets $H_1, \ldots, H_{k+1}$ of vertices from $\graph$ and a set $M$
  of nodes from $\tree$, i.e., we set $h(v) = i \iff v \in H_i$, for
  all $1 \leq i \leq k+1$.

  Let $n \in \vertof{\tree}$ be a node such that $\cardof{\bag(n)} =
  k+1$. Such a node exists because $\twof{\graph}=k$ and
  $(\tree,\bag)$ is an optimal tree decomposition of
  $\graph$. Initially, we place each vertex from $\bag(n)$ into
  exactly one of the sets $H_1, \ldots, H_{k+1}$ and set
  $M=\set{n}$. The invariants of the construction
  are: \begin{compactenum}
  \item\label{it1:lemma:history} $\bigcup_{i=1}^{k+1} H_i = \bigcup_{m
    \in M} \bag(m)$, and
  \item\label{it2:lemma:history} $H_i \cap H_j = \emptyset$, for all
    $1 \leq i < j \leq k+1$.
  \end{compactenum}
  Note that these invariants are initially satisfied. We repeat the
  following step as long as $M \subsetneq \vertof{\tree}$. Let $m \in
  \vertof{\tree} \setminus M$ be a node having a neighbour $p \in M$
  such that $\set{v_1,\ldots,v_\ell} \isdef \adh{\tree}{m}{p}$. Let
  $H_{i_1}, \ldots, H_{i_\ell}$ be sets such that $v_j \in H_{i_j}$,
  for all $1 \leq j \leq \ell$. These sets are pairwise disjoint, by
  the fact that (\ref{it2:lemma:history}) holds before each
  application of the step. Each of the vertices from $\bag(m)
  \setminus \set{v_1,\ldots,v_\ell}$ is then placed into exactly one
  of the remaining sets $H_j$, for $j \in \set{1,\ldots,k+1} \setminus
  \set{i_1,\ldots,i_\ell}$ and $m$ is added to $M$.

  We prove that the invariants (\ref{it1:lemma:history}) and
  (\ref{it2:lemma:history}) hold after the application of the step.
  (\ref{it1:lemma:history}) holds because all vertices from $\bag(m)
  \setminus \set{v_1,\ldots,v_\ell}$ are placed into the sets $H_1,
  \ldots, H_{k+1}$ and these sets do not contain any other vertex of
  $\graph$ that is not in the bag of a node from $M \cup \set{m}$. To
  prove that (\ref{it2:lemma:history}) holds suppose, for a
  contradiction that the step has introduced a vertex $v \in H_i \cap
  H_j$, for some $1 \leq i < j \leq k+1$. Since $v$ is added to
  exactly one set, say $H_i$, it must have been present in $H_j$
  before the step was taken. Let $r \in M$ be the first encountered
  node such that $v \in \bag(r)$. Then $v$ is present in all bags
  along the path from $r$ to $m$ in $\tree$, hence also in $\bag(p)$,
  because there is a unique path between two nodes in a tree. But then
  $v \in \adh{\tree}{m}{p}$ already before the step and cannot be
  added to $H_i$ by the step, because only the vertices from $\bag(m)
  \setminus \adh{\tree}{m}{p}$ are added to $H_{i_1}, \ldots,
  H_{i_\ell}$ by the step, contradiction.   
\end{proofE}

\ifLongVersion
\paragraph{Proof of Theorem \ref{thm:triangle-graphs}}
``$\subseteq$'' Let $\graph \in \trianglegraphs$ be a triangle
graph. Since $\graph$ is a $3$-graph, we have
$\cardof{\vertof{\graph}} \geq 3$. If $\cardof{\vertof{\graph}} = 3$,
the only case is $\graph = \hval(\basictriangle)$, by the triangle
property. Otherwise, we have $\cardof{\graph} \geq 4$, hence $\graph$
has an inner vertex, call it $v$.

Since $\graph$ is a triangle graph, we have $2 \leq \twof{\graph} \leq
3$. Let us consider first the case $\twof{\graph} = 2$ and let
$(\tree,\bag)$ be an optimal tree decomposition of $\graph$. By adding
$v$ (i.e., the inner vertex of $\graph$ which exists due to
$\cardof{\graph} \geq 4$) to all bags, we transform $(\tree,\bag)$
into a tree decomposition of width $3$, to which one can apply the
construction from the proof of Lemma \ref{lemma:full-td} to obtain an
alternating tree decomposition having the $3$-step
property. Otherwise, if $\twof{\graph} = 3$, we apply directly Lemma
\ref{lemma:full-td} to obtain an optimal alternating tree
decomposition $(\tree,\bag)$ of $\graph$ having the $3$-step property.

In both cases, we can assume that $(\tree,\bag)$ is an alternating
tree decomposition of $\graph$ having the $3$-step property, with no
loss of generality. By Lemma \ref{lemma:history}, $(\tree,\bag)$
admits a $3$-history $h : \vertof{\graph} \rightarrow
\set{1,2,3,4}$. We shall build a ground triangle term $t$ such that
$\hval(t) = \graph$, by induction on the structure of $\tree$. By the
triangle property of $\graph$, each edge $xy\in\edgeof{\graph}$
belongs to a subgraph $xyz$ that is contained in some bag of
$(\tree,\bag)$\footnote{In general, each clique of a graph belongs to
some bag, for each tree decomposition of that graph.}. The invariant
of the construction is the following: for each node $n \in
\vertof{\tree}$ we build a ground triangle term $t_n$ such that
$\hval(t_n) = \graph[V_n]^\basictriangle$, where $V_n \isdef \bag(n)
\cup \bigcup_{p \text{ descendant of } n} \bag(p)$.

\begin{figure}[t!]
  \centerline{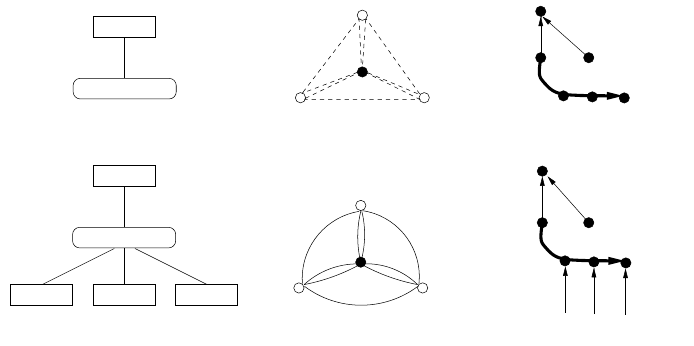}
  \caption{Translating alternating tree decompositions into triangle
    terms, in the base (a) and inductive (b) cases. The leftmost
    structures are the encodings of the corresponding triangle terms (\autoref{subsec:parsable}).}
  \label{fig:sop}
\end{figure}

For the base case, let $n$ be a leaf of $\tree$ and $m$ be its
adhesion parent. We recall that the leaves of an alternating tree
decomposition are bag nodes, hence their parents are adhesion nodes.
By the $3$-step property, $\cardof{\bag(n)} = 4$ and $\cardof{\bag(m)}
= 3$. Let $\bag(m) = \set{x,y,z}$ and $\bag(n) =
\set{x,y,z,a}$. Assume that $h(x) = 1$, $h(y) = 2$, $h(z) = 3$ and
$h(a)=4$ (the other cases are similar). We let $t_n \isdef
\sop(t_{ayz},t_{xaz},t_{xya})$, where each $t_{ijk}$ is either
$\basictriangle$ or $\emptygraph$, depending whether $ijk \in
\set{ayz,xaz,xya}$ is a subgraph of $\graph$. The order of the terms
in the serial composition depends on the histories of $x$, $y$ and
$z$, see Figure \ref{fig:sop} (a) for a depiction of the above case.
Since $h(x),h(y),h(z),h(a)$ is a permutation of $1,2,3,4$, there are
$24$ cases to consider overall.

For the inductive step, we distinguish two cases. If $n$ is an
adhesion node of $\tree$ and $p_1, \ldots, p_k$ are its bag children,
let $t_n \isdef t_{p_1} \pop \ldots \pop t_{p_k} \pop t_{xyz}$, where
$t_{p_1}, \ldots, t_{p_k}$ are already built triangle terms, by the
inductive hypothesis and $t_{xyz}$ is either $\basictriangle$ or
$\emptygraph$ depending on whether the vertices from $\bag(n) =
\set{x,y,z}$ induce a $K_3$ subgraph in $\graph$ or not.

Else, $n$ is a bag node of $\tree$ and let $m$ be its adhesion parent,
assumed w.l.o.g. to have the bags $\bag(m) = \set{x,y,z}$ and $\bag(n)
= \set{x,y,z,a}$. Assume furthermore that $h(x) = 1$, $h(y) = 2$, 
$h(z) = 3$ and $h(a)=4$ (the other cases are similar). Let $p_1, \ldots,
p_\ell$ be the adhesion children of $n$, where $1 \leq \ell \leq 3$.
We define $t_n = \sop(t_{ayz},t_{xaz},t_{xya})$, where each $t_{ijk}$
is defined below: \begin{enumerate}
\item $t_{ijk} \isdef t_{p_h}$ if $\bag(p_h) = \set{i,j,k}$, for some $1 \leq h \leq \ell$,
\item $t_{ijk} \isdef \basictriangle$ if the triangle
  $ijk$ is a subgraph of $\graph$ and $\bag(p_h) \neq \set{i,j,k}$, for all $1 \leq h \leq \ell$,
\item $t_{ijk} \isdef \emptygraph$ if none of the above applies.
\end{enumerate}
The order of terms in the serial composition depends on the histories
of $x$, $y$, $z$ and $a$, see Figure \ref{fig:sop} (b) for a
depiction of the construction in this case. Note that $t_{ijk}$ is
well-defined, because $\bag(t_{p_h}) \neq \bag(t_{p_g})$, for all $1
\leq h < g \leq \ell$, hence exactly one of the above cases matches,
for each $ijk \in \set{ayz,xaz,xya}$. It is easy to check that
$\hval(t_n) = \graph[V_n]^\basictriangle$ is an inductive invariant of
the construction. We obtain $\hval(t_r) = \graph$, where $r$ is the
root of $\tree$, therefore $\graph \in \trianglegraphs$ follows.

\vspace*{.5\baselineskip}
\noindent``$\supseteq$'' This follows from Lemma
\ref{lemma:witness-triangle} and $\twof{\hval(t)} \leq 3$, for each
ground triangle term $t$, by Lemma \ref{lemma:hr-tw}. \qed
\else
\paragraph{Proof of Theorem \ref{thm:triangle-graphs} (sketch)}
The left to right direction of the proof of Theorem
\ref{thm:triangle-graphs} translates an optimal rooted alternating
tree decomposition $(\tree,\bag)$ of a triangle graph $\graph$, having
the $3$-step property, into a triangle term $t$ such that
$\graph=\hval(t)$. Since $(\tree,\bag)$ has the 3-step property, it
has a 3-history, by Lemma \ref{lemma:history}. The 3-history is used
to determine the order of the arguments in the serial composition
operations from $t$. The translation goes by induction on the
structure of the tree decomposition, distinguishing adhesion from bag
nodes. As an invariant of the construction, for each node $n$ of
$\tree$ we build a term $t_n$ such that $\hval(t_n)$ is the triangle
subgraph of $\graph$ induced by the union of the bags of the subtree
of $\tree$ rooted at $n$. The right to left direction of the proof of
Theorem \ref{thm:triangle-graphs} follows from Lemma
\ref{lemma:witness-triangle} and $\twof{\hval(t)} \leq 3$, for each
ground triangle term $t$, by Lemma \ref{lemma:hr-tw}. For space
reasons, the proof is given in Appendix \ref{app:triangle}. \qed
\fi

\subsection{Algebraic Characterization of Fan Graphs}

We prove a similar characterization theorem for fan graphs, namely
that the set of evaluations of ground fan terms coincides with the set
of fan graphs (Definition \ref{def:fan-graph}): 

\begin{theorem}\label{thm:fan-graphs}
  $\trianglefangraphs = \set{\hval(t) \mid t \text{ is a ground
      fan term}}$.
\end{theorem}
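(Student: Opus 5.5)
We prove both inclusions, reusing the machinery of the proof of Theorem~\ref{thm:triangle-graphs}.

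\textbf{Right to left.} Let $t$ be a ground fan term. Its value is the value of a ground triangle term, so $\hval(t)\in\trianglegraphs$ by Theorem~\ref{thm:triangle-graphs}. It remains to show that $\hval(t)$ is $2$-connected and satisfies exactly one of the three conditions of Definition~\ref{def:fan-graph}. We proceed by induction on $t$ and prove a slightly stronger invariant: $\hval(t)$ is $K_3$, or $\overline{\hval(t)}$ is $3$-connected.

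\begin{compactitem}[-]
\item $t=\basictriangle$: the graph is $K_3$.
\item $t=t_1\pop t_2$: every separator of $\overline{\hval(t)}$ of size at most $2$ misses some source. Each component $\overline{\hval(t_i)}$ remains connected after deleting it (for $K_3$ this is trivial), and the components share all three sources. Hence the union stays connected.
\item $t=\sop(t_1,t_2,t_3)$ with at most one argument equal to $\emptygraph$: in the thickening, the new inner vertex (old label $4$) is adjacent to the sources of each nonempty argument. At least two arguments are nonempty, so the pieces $\overline{\hval(t_i)}$ cover all three outer sources pairwise. A cut of size $\le 2$ is then handled by a case analysis on whether it contains the middle vertex, and each case reduces to the invariant for the $\overline{\hval(t_i)}$.
\end{compactitem}

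The paper's definition also distinguishes the cases where $\hval(t)$ itself is $3$-connected, but this only matters for ``exactly one''. That holds automatically, since $K_3$ is not $3$-connected and a $3$-connected graph has $\overline{\graph}=\graph$.

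$2$-connectivity of $\hval(t)$ itself also follows inductively. Deleting one vertex from a union of $2$-connected pieces glued along at least two common vertices keeps it connected. The key point is that each nonempty argument of a $\sop$ shares at least two vertices with another argument, which is exactly what forbidding two $\emptygraph$ guarantees. With two empty arguments, the single remaining piece would be attached at a lone triangle, and the outer source opposite to it would be isolated or a cut would appear.

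\textbf{Left to right.} Let $\graph\in\trianglefangraphs$. If $\graph=K_3$, take $t=\basictriangle$. Otherwise, by Theorem~\ref{thm:triangle-graphs} and its proof, we obtain an alternating tree decomposition with the $3$-step property together with the term construction $t_n$. We claim this construction never produces a serial composition with two or more $\emptygraph$ arguments, and never a top-level $\emptygraph$ at an adhesion node.

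Consider a bag node $n$ with $\bag(n)=\set{x,y,z,a}$, where $\bag(m)=\set{x,y,z}$ for its parent $m$. Suppose two of the triples $ayz,xaz,xya$ are neither children adhesions nor triangles. Use Lemma~\ref{lemma:adhesion-separation}: every adhesion is a separator. The part of the graph hanging below $n$ through $a$ then attaches to the rest only via the at most $2$-element intersection of the remaining nonempty triple with $\set{x,y,z}$, together with $a$. This yields a separator of size $\le 2$ in $\overline{\graph}$ that isolates $a$'s subtree from the sources, contradicting $3$-connectivity of $\overline\graph$ (or of $\graph$).

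At adhesion nodes, the term $t_{xyz}\in\set{\basictriangle,\emptygraph}$ is composed in parallel with the children. An $\emptygraph$ summand can be dropped whenever there is at least one child, since $\emptygraph$ is neutral for $\pop$. A childless non-root adhesion node does not occur, because leaves are bag nodes.

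Finally, each $\sop$ with exactly one $\emptygraph$ is rewritten as the corresponding $\fan{i}$, and each $\sop$ with no $\emptygraph$ is kept. This gives a fan term of value $\graph$.

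\textbf{Main obstacle.} The main obstacle is the separator argument in the left-to-right direction. It requires translating ``two missing triples at a bag node'' into a genuine $\le 2$-vertex cut of $\overline\graph$. This needs care when the missing triple involves sources, and when $\twof{\graph}=2$, where the padding vertex $v$ was artificially added to the bags. I would first try to avoid the padding by showing that a fan graph other than $K_3$ has tree-width $3$, or by choosing $v$ adjacent to all three sources.
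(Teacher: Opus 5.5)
\textbf{The gap is in your right-to-left direction.} You follow the paper's plan there: an induction showing that $\hval(t)$ is $2$-connected and that its thickening is $3$-connected. But the serial-composition case rests on a false claim. In $\overline{\hval(t)}$, the new inner vertex $a$ need not be adjacent to the sources of each nonempty argument. Take $t=\sop(\basictriangle\fan{3}\basictriangle,\basictriangle,\emptygraph)$. The first argument has sources $a,y,z$, but its own sources $1$ and $2$ are non-adjacent, and no other argument contains $y$. So $ay$ is not an edge. In general $\overline{\hval(t_i)}$ is \emph{not} a subgraph of $\overline{\hval(t)}$, because thickening $\hval(t)$ only adds edges among $x,y,z$, never at $a$. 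A $2$-cut therefore cannot simply be ``reduced to the invariant for the pieces''.

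The missing idea is the one the paper isolates in Lemma~\ref{lemma:connectivity-lemma}. Fix $|X|\le 2$ and $B=\set{a,x,y,z}$.
\begin{itemize}
\item A component of $\overline{\hval(t)}-X$ meeting no vertex of $B\setminus X$ lies among the inner vertices of a single piece.
\item It avoids all sources of that piece, so adding edges between those sources cannot attach it. It is therefore also a component of $\overline{\hval(t_i)}-X$, contradicting the invariant. (A $K_3$ piece has no inner vertex, so this case cannot arise.)
\item It remains to connect $B\setminus X$. This uses the $2$-connectivity of a piece that loses at most one vertex to $X$, and, for $\fan{i}$, the triangle $xyz$ of the thickening.
\end{itemize}

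Separately, your justification of ``exactly one'' is wrong. $\hval(\sop(\basictriangle\fan{1}\basictriangle,\basictriangle,\basictriangle))$ is $3$-connected but has no edge between its sources $2$ and $3$, so $\overline{\graph}\neq\graph$. More to the point, $3$-connectivity of $\graph$ implies that of $\overline{\graph}$, so condition (2) always entails (3). Read literally, the definition would exclude $K_4=\hval(\sop(\basictriangle,\basictriangle,\basictriangle))$. It must be read as ``at least one'', which is how the paper's proof uses it.

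\textbf{Your left-to-right direction is sound in outline and genuinely different from the paper's.} The paper builds the fan term directly. It does a case analysis on the number of adhesion children of each bag node and on which triangles occur, invoking $3$-connectivity of $\overline\graph$ and the triangle property in each case to choose an explicit shape. You instead reuse the triangle term of Theorem~\ref{thm:triangle-graphs} and prove one uniform fact: no $\sop$ node receives two $\emptygraph$ arguments. This isolates the single place where $3$-connectivity matters. The paper's explicit shapes, for their part, are what it reuses in Lemma~\ref{lemma:atd-fan-term}.

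Your separator step omits its key justification. Suppose only the triple $\set{a,y,z}$ survives at a bag node $n$ with $\bag(n)=\set{x,y,z,a}$. Then $x$ and $a$ share no bag other than $\bag(n)$. By the triangle property, an edge $xa$ would lie in a triangle $xay$ or $xaz$, both of which are excluded, so $xa$ is not an edge. Hence $\set{y,z}$ separates $a$, together with all vertices occurring only below $n$, from $x$. No vertex on that side is a source, since a source there would have to lie in the parent adhesion's bag. So the thickening edges do not bridge the cut, and $\overline\graph$ has a $2$-separator.

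Your padding worry disappears as you suspected. Every tree decomposition of a $3$-graph has a bag containing all its sources, so it is also a decomposition of $\overline\graph$. Hence $\twof{\graph}=\twof{\overline\graph}\geq 3$ for every fan graph other than $K_3$.
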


Before giving the proof of the above theorem, we restate a well-known
result~\cite[Lemma 11.3]{DBLP:series/txtcs/FlumG06} about tree
decompositions in our terminology (Lemma
\ref{lemma:adhesion-separation}). A \emph{separation} of a graph
$\graph$ is a pair $(A,B) \in \pow{\vertof{\graph}} \times
\pow{\vertof{\graph}}$ such that $A \cup B = \vertof{\graph}$, $A
\setminus B \neq \emptyset$ and $B \setminus A \neq \emptyset$. When
$(A,B)$ is a separation of $\graph$, the set of vertices $\sep{A}{B}
\isdef A \cap B$ is called the \emph{separator} of $(A,B)$.

\begin{lemmaE}\label{lemma:adhesion-separation}
  Let $\graph$ be a graph, $(\tree,\bag)$ be a tree-decomposition of
  $\graph$, $nm \in \edgeof{\tree}$ be a tree edge and $N, M \subseteq
  \vertof{\tree}$ be the sets of nodes reachable from $n$ and $m$
  without crossing $nm$, respectively. Then $(\bigcup_{x\in N}
  \bag(x), \bigcup_{y\in M} \bag(y))$ is a separation of $\graph$ with
  separator $\adh{}{n}{m}$.
\end{lemmaE}
\begin{proofE}
  We denote $X\isdef\bigcup_{x\in N} \bag(x)$ and
  $Y\isdef\bigcup_{y\in M} \bag(y)$.  Since $N \uplus M =
  \vertof{\tree}$, we obtain $X \cup Y = \vertof{\graph}$. By
  Assumption \ref{ass:td}, we have $X \setminus Y \neq \emptyset$ and
  $Y \setminus X \neq \emptyset$, hence $(X,Y)$ is a separation of
  $\graph$. Since $X \cap Y$ is the separator of $(X,Y)$, it remains
  to prove that $X \cap Y = \adh{}{n}{m}$. ``$\supseteq$'' This
  direction is trivial, since $\bag(n) \subseteq X$ and $\bag(m)
  \subseteq Y$. ``$\subseteq$'' Suppose, for a contradiction, that
  there exists a vertex $x \in (X \cap Y) \setminus
  \adh{}{n}{m}$. Assume, moreover, that $x \in (X \cap Y) \setminus
  \bag(n)$ (the case $x \in (X \cap Y) \setminus \bag(m)$ is
  symmetric). Since $x \in X \setminus \bag(n)$, there exists a node
  $p \in N \setminus \set{n}$ such that $x \in \bag(p)$. Since $x \in
  Y$, there exists a node $r \in M$ such that $x \in \bag(r)$. Then,
  $x$ is in each bag on the path from $p$ to $r$, hence also in
  $\bag(n)$, contradiction.
\end{proofE}

The proof of Theorem \ref{thm:fan-graphs} uses the following version
of Lemma \ref{lemma:full-td}:

\begin{lemmaE}\label{lemma:connectivity-tw}
  Each optimal alternating tree decomposition of a $k$-connected
  $k$-graph of tree-width at most $k$ has the $k$-step property.
\end{lemmaE}
\begin{proofE}
  Let $\graph$ be a $k$-connected $k$-graph of tree-width $k$ at most
  and $n \in \vertof{\tree}$ be a node. Since $\twof{\graph}\leq k$,
  we have $\cardof{\bag(n)}\leq k+1$. If $n$ is an adhesion node, we
  distinguish two cases: \begin{compactitem}[-]
  \item $n$ is the root of $\tree$: in this case $\cardof{\bag(n)} =
    k$, by Definition \ref{def:alternating-td}
    (\ref{it4:def:alternating-td}).
  \item $n$ is not the root of $\tree$: in this case $\cardof{\bag(n)}
    \geq k$, because $\graph$ is $k$-connected and $\bag(n) =
    \adh{\tree}{n}{m}$ is a separator of $\graph$, by Lemma
    \ref{lemma:adhesion-separation}, where $m$ is the parent of $n$ in
    $\tree$.
  \end{compactitem}
  If $n$ is a bag node, then it is connected to some adhesion node,
  hence $\cardof{\bag(n)} = k+1$, by Assumption \ref{ass:td}. Also,
  $\cardof{\bag(n)}=k$ if $n$ is an adhesion node different from the
  root, because it is connected to a bag node, also by Assumption
  \ref{ass:td}.
\end{proofE}

The right to left direction of Theorem~\ref{thm:fan-graphs} uses
Lemma~\ref{lemma:fan-connectivity}, which is based on the following
lemma:

\begin{lemmaE}[Connectivity Lemma]\label{lemma:connectivity-lemma}
  Let $\graph$ be a connected graph and let $\graph_1,\ldots,\graph_m$
  be 3-graphs whose vertices and edges are contained in $\graph$.
  Assume that (1) every edge of $\graph$ is also an edge of some
  $\graph_i$, and (2) for every $i$, the thickening
  $\overline{\graph_i} = \graph_i \pop \basictriangle$ is
  3-connected. Moreover, let $B \subseteq \vertof{\graph}$ be a set
  such that (3) every source vertex of every $\graph_i$ belongs to
  $B$, and (4) for every $i \neq j$, $\vertof{\graph_i} \cap
  \vertof{\graph_j} \subseteq B$. Let $X \subseteq \vertof{\graph}$
  with $\cardof{X} \le 2$.  If all vertices of $B \setminus X$ lie in
  one connected component of $\graph - X$, then $\graph - X$ is
  connected.
\end{lemmaE}
\begin{proofE}
  Suppose for contradiction that $\graph - X$ is disconnected. Since
  all vertices of $B \setminus X$ lie in one connected component of
  $\graph - X$, there must be a connected component $C$ of $\graph -
  X$ disjoint from $B \setminus X$. Since every edge belongs to some
  $\graph_i$ and the different subgraphs intersect only in $B$, the
  component $C$ is contained entirely in some $\graph_j - X$.
  Moreover, $C$ avoids all sources of $\graph_j$, because all sources
  of $\graph_j$ are contained in $B$. Therefore, adding edges between
  the sources of $\graph_j$ cannot connect $C$ to the rest of
  $\graph_j - X$. Hence, $C$ is also a connected component of
  $\overline{\graph_j}$, contradicting the 3-connectedness of
  $\overline{\graph_j}$.
\end{proofE}

The following lemma characterizes the connectivity of graphs that are
values of fan terms: 

\begin{lemmaE}\label{lemma:fan-connectivity}
  Let $t$ be a ground fan term and $\graph \isdef \hval(t)$ be its
  value. Then, $\graph$ is $2$-connected, in general, and
  $3$-connected if only if $t = \sop(t_1,t_2,t_3)$ or $t = t_1 \pop
  t_2$, for some ground fan terms $t_1$, $t_2$ and $t_3$. Moreover,
  the thickening $\overline\graph \isdef \graph \pop
  \hval(\basictriangle)$ of $\graph$ is $3$-connected if $\graph$ is
  not $K_3$. 
\end{lemmaE}
\begin{proofE}
We use the following fact, whose easy proof is left to the reader:

\begin{fact}\label{fact:glue}
  Let $\graph_1$ and $\graph_2$ be two $k$-connected graphs and
  $\graph$ be the graph obtained by glueing at least $k$ vertices from
  both $\graph_1$ and $\graph_2$. Then $\graph$ is $k$-connected.
\end{fact}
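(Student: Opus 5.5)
The plan is to prove the statement directly from the definition of $k$-connectivity (at least $k+1$ vertices, and no set of fewer than $k$ vertices disconnects), using only the fact that $\graph$ is the union of $\graph_1$ and $\graph_2$ along a common set of glued vertices. Write $S$ for the set of glued vertices. Then $\cardof{S} \geq k$, and $\vertof{\graph} = \vertof{\graph_1} \cup \vertof{\graph_2}$ with $\vertof{\graph_1} \cap \vertof{\graph_2} = S$. Every edge of $\graph$ comes from $\graph_1$ or from $\graph_2$; gluing may identify duplicated edges, but this removes no adjacency. Consequently $\graph_1$ and $\graph_2$ are both subgraphs of $\graph$.

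First, the vertex count is immediate: $\cardof{\vertof{\graph}} \geq \cardof{\vertof{\graph_1}} \geq k+1$, since $\graph_1$ is $k$-connected.

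Second, fix $X \subseteq \vertof{\graph}$ with $\cardof{X} < k$; we must show that $\graph - X$ is connected. For $i = 1,2$, the set $X_i \isdef X \cap \vertof{\graph_i}$ has fewer than $k$ elements. Hence $\graph_i - X_i$ is connected by the $k$-connectivity of $\graph_i$, and it is a subgraph of $\graph - X$. Because $\cardof{S} \geq k > \cardof{X}$, there is a vertex $s \in S \setminus X$, and $s$ belongs to both $\graph_1 - X_1$ and $\graph_2 - X_2$.

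Finally, take any two vertices $u, w$ of $\graph - X$. Each lies in $\graph_1 - X_1$ or in $\graph_2 - X_2$. Inside the corresponding connected piece, $u$ reaches $s$ by a path, and likewise $w$ reaches $s$. Concatenating the two paths gives a walk from $u$ to $w$, which can be shortened to a path. Thus $\graph - X$ is connected, and $\graph$ is $k$-connected.

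The only point that needs care is the bookkeeping: deleting $X$ from $\graph$ restricts to deleting $X_i$ from each $\graph_i$, and the merging of duplicated edges during gluing preserves every adjacency of $\graph_1$ and $\graph_2$. Once that is observed, I expect no real obstacle; the key inequality is simply $\cardof{S} \geq k > \cardof{X}$.
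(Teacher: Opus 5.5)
Your proof is correct. The paper gives no argument for this fact (it is ``left to the reader''), and yours is the standard one it presumably intends: since $\cardof{S} \geq k > \cardof{X}$, some glued vertex $s$ survives the deletion of $X$, and $s$ links the two pieces $\graph_1 - X_1$ and $\graph_2 - X_2$, each of which is connected because $\cardof{X_i} < k$. Your bookkeeping remarks cover the only points that need checking: each $\graph_i$ embeds as a subgraph, because the paper's gluing identifies sources injectively and only merges duplicate edges, and the vertex-count bound comes directly from $\graph_1$.
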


We prove that $\graph$ is $2$-connected by induction on the structure
of $t$, considering the following cases: \begin{compactitem}[-]
\item $t = \basictriangle$: in this case $\graph$ is a $K_3$ graph,
  hence it is $2$-connected.
\item $t = t_1 \fan{i} t_2$, for some $1 \leq i \leq 3$: we denote
  $\graph_j \isdef \hval(t_j)$, for both $j=1,2$, hence $\graph =
  \graph_1 \fan{i} \graph_2$. The two graphs are glued along two
  vertices $a$ and $s$, where $s$ remains a source of $\graph$, while
  $a$ becomes an inner vertex. Let $p$ and $q$ be the other sources of
  $\graph_1$ and $\graph_2$, respectively. The sources of $\graph$ are
  $s$, $p$ and $q$. Since $\graph_1$ and $\graph_2$ are $2$-connected,
  by the induction hypothesis, and are glued along $a$ and $s$, the
  graph is $2$-connected by Fact \ref{fact:glue}.
\item $t = \sop(t_1,t_2,t_3)$: we denote $\graph_j \isdef \hval(t_j)$,
  for $j=1,2,3$, hence $\graph =
  \sop(\graph_1,\graph_2,\graph_3)$. Since $\graph_1$, $\graph_2$ and
  $\graph_3$ are 2-connected, by the induction hypothesis, and each
  pair $(\graph_j,\graph_k)$, for $1 \leq j < k \leq 3$ is glued along
  two vertices, $\graph$ is 2-connected by Fact \ref{fact:glue}. This
  can be seen by first gluing $\graph_1$ with $\graph_2$, and then
  gluing the resulting graph with $\graph_3$.
\item $t = t_1 \pop t_2$: we denote $\graph_j \isdef \hval(t_j)$, for
  both $j=1,2$, hence $\graph = \graph_1 \pop \graph_2$. Since
  $\graph_1$ and $\graph_2$ are $2$-connected, by the induction
  hypothesis, and are glued along three vertices, the graph is
  $2$-connected by Fact \ref{fact:glue}.
\end{compactitem}

We prove that $\graph$ is $3$-connected if $t = \sop(t_1,t_2,t_3)$ or
$t = t_1 \pop t_2$, for some ground fan terms $t_1$, $t_2$ and $t_3$,
by considering the two cases below: \begin{compactitem}[-]
\item $t = \sop(t_1,t_2,t_3)$: let $\graph_j \isdef \hval(t_j)$, for
  each $j = 1,2,3$, $B_1 \isdef \set{a,y,z}$, $B_2 \isdef
  \set{a,x,z}$, $B_3 \isdef \set{a,x,y}$ be the sources of the graphs
  $\graph_1$, $\graph_2$ and $\graph_3$, respectively, and $B \isdef
  \set{a,x,y,z}$. Fix $X \subseteq \vertof{\graph}$ with $\cardof{X}
  \le 2$.  By Lemma \ref{lemma:connectivity-lemma}, it suffices to
  show that $B \setminus X$ is connected in $G - X$. Choose $p,q \in
  \set{1,2,3}$ such that $\cardof{\vertof{\graph_p}} -
  \cardof{\vertof{\graph_p - X}} \leq 1$, $\cardof{\vertof{\graph_q}}
  - \cardof{\vertof{\graph_q - X}} \leq 1$ and $(B_p \cap B_q)
  \setminus X \neq \emptyset$. Note that such a choice exists because
  $\cardof{X} \le 2$. Then $\graph_p - X$ and $\graph_q - X$ are
  connected, intersect in a surviving source, and together contain all
  of $B \setminus X$.  Hence, $\graph - X$ is connected and
  $\overline\graph$ is $3$-connected, by Lemma
  \ref{lemma:connectivity-lemma}.
\item $t = t_1 \pop t_2$, where $\graph_j = \hval(t_j)$, for
  $j=1,2$. Let $S = \{s_1,s_2,s_3\}$ be the common source set. Fix $X
  \subseteq \vertof{\graph}$ with $\cardof{X} \le 2$. For $j=1,2$,
  consider the connected components of $\graph_j - X$. We claim that
  each such component contains a vertex from $S \setminus X$.
  Otherwise, such a component would avoid all surviving sources, and
  adding edges between sources would not connect it to the rest.
  Hence it would also be a component of $\overline{\graph_j} - X$,
  contradicting the 3-connectedness of $\overline{\graph_j}$. If $S
  \setminus X$ has cardinality one, then all components of $\graph -
  X$ meet in that source, so $\graph - X$ is connected.  Otherwise,
  there are at least two sources in $S \setminus X$.  Since
  $\cardof{X} \le 2$, at least one of the sides, say $\graph_1$, 
  contains at most one vertex of $X$. By 2-connectedness of
  $\graph_1$, the graph $\graph_1 - X$ is connected. In particular, it
  connects all sources of $S \setminus X$. Since all components of
  $\graph - X$ are connected to some source, $\graph - X$ is
  connected.
\end{compactitem}
We prove that $\graph$ is $3$-connected only if $t =
\sop(t_1,t_2,t_3)$ or $t = t_1 \pop t_2$, for some ground fan terms
$t_1$, $t_2$ and $t_3$. Since $\graph$ is $3$-connected, $t \neq
\basictriangle$, because $K_3$ is not $3$-connected. Suppose, for a
contradiction, that $t = t_1 \fan{i} t_2$, for some $1 \leq i \leq 3$
and some ground fan terms $t_1$ and $t_2$. Let $\graph_j \isdef
\hval(t_j)$, for $j = 1,2$ and $p$, $q$ be the vertices of $\graph$
obtained by joining the corresponding sources $p_1$, $q_1$ of
$\graph_1$ and $p_2$, $q_2$ of $\graph_2$, respectively. Because
$\vertof{\graph_1}$ and $\vertof{\graph_2}$ are disjoint,
$((\vertof{\graph_1} \setminus \set{p_1,q_1}) \cup \set{p,q},
(\vertof{\graph_2} \setminus \set{p_2,q_2}) \cup \set{p,q})$ is a
$2$-separation of $\graph$. This is because each $\graph_i$ has a
third source different from $p_i$, $q_i$. However, the existence of a
$2$-separation contradicts the fact that $\graph$ is $3$-connected.

Finally, we prove that $\overline{\graph}$ is $3$-connected if $t \neq
\basictriangle$. If $t = \sop(t_1,t_2,t_3)$ or $t = t_1 \pop t_2$, for
some ground fan terms $t_1$, $t_2$ and $t_3$, then $\graph$ is already
$3$-connected and there is nothing to prove. Otherwise, $t = t_1
\fan{i} t_2$, for some $1 \leq i \leq 3$. Let $B = \{a,s,p,q\}$, where
$a$, $s$, $p$ and $q$ are the vertices considered in the $t = t_1
\fan{i} t_2$ case above. Fix $X \subseteq \vertof{\graph}$ with
$\cardof{X} \le 2$. By Lemma \ref{lemma:connectivity-lemma}, it
suffices to show that $B \setminus X$ is connected in $\overline\graph
- X$. The vertices $spq$ form a triangle in $\overline\graph$, so the
vertices from $\set{s,p,q} \setminus X$ are connected in
$\overline\graph - X$. If $a \not\in X$ then one of the graphs
$\graph_i - X$ contains $a$ and another of its sources not in
$X$. Because $\graph_i$ is $2$-connected, $a$ is connected to this
other source. Hence, $B \setminus X$ is connected and
$\overline\graph$ is $3$-connected, by Lemma
\ref{lemma:connectivity-lemma}.
\end{proofE}
\ifLongVersion
\paragraph{Proof of Theorem \ref{thm:fan-graphs}}
``$\subseteq$'' Let $\graph$ be a fan graph. If $\graph$ is $K_3$ then
$\graph=\hval(\basictriangle)$ and there is nothing left to
prove. Hence, we assume in the following that $\graph \neq K_3$. Let
$(\tree,\bag)$ be an optimal alternating tree decomposition of
$\graph$. By Lemma \ref{lemma:alternating-td}, such a decomposition
exists. Let $\overline{\graph} \isdef \graph \pop
\hval(\basictriangle)$ be the thickening of $\graph$. Then,
$(\tree,\bag)$ is a tree decomposition of $\overline{\graph}$ as
well. Because $\overline{\graph}$ is $3$-connected, $(\tree,\bag)$ has
the $3$-step property, by Lemma \ref{lemma:connectivity-tw}, and
admits a $3$-history $h : \vertof{\graph} \rightarrow \set{1,2,3,4}$,
by Lemma \ref{lemma:history}.

\begin{figure}[t!]
  \centerline{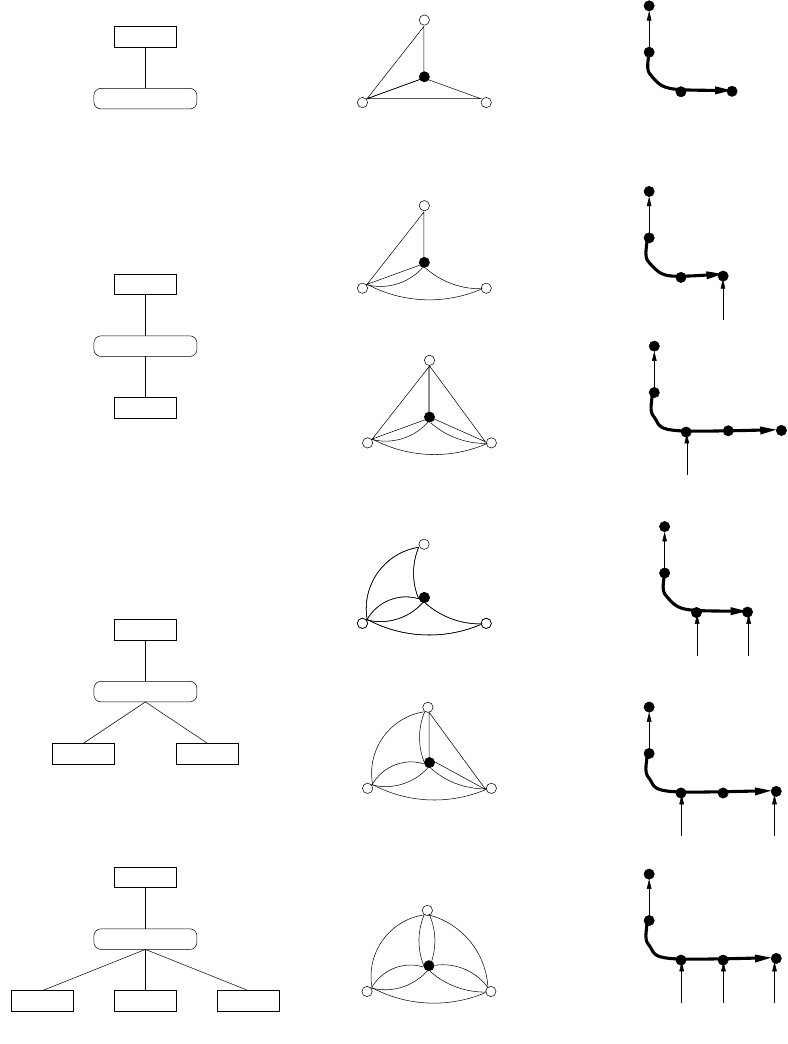}
  \caption{Translating alternating tree decompositions into fan
    terms, in the base (a) and inductive (b,c,d,e,f) cases. The leftmost
    structures are the encodings of the corresponding triangle terms (\autoref{subsec:parsable}).}
  \label{fig:fan}
\end{figure}

We build a ground triangle term $t$ such that $\hval(t) =
\overline{\graph}$, by induction on the structure of $\tree$. For each
node $n \in \vertof{\tree}$, we build a ground triangle term $t_n$,
such that: \begin{compactenum}[(1)]
\item the sources of $\hval(t_n)$ belong to $\bag(n)$,
\item if $n$ is an adhesion node then $\hval(t_n)$ is the triangle
  subgraph of $\overline{\graph}$ induced by the bags of the
  descendants of $n$ in $\tree$,
\item if $n$ is a bag node then the triangle subgraph of
  $\overline{\graph}$ induced by the bags of the descendants of $n$ is
  either $\hval(t_n \pop \basictriangle)$ or $\hval(t_n)$, the choice
  between $\hval(t_n \pop \basictriangle)$ and $\hval(t_n)$ being
  determined by whether $\overline{\graph}$ contains a triangle within the
  vertices $\bag(m)$ for the adhesion parent $m$ of $n$.
\end{compactenum}

For the base case, assume that $n$ is a (bag) leaf of $\tree$ and let
$m$ be its adhesion parent. Let $a$ be the unique vertex from $\bag(n)
\setminus \bag(m)$. Since $\overline{\graph}$ is $3$-connected, $a$
must be connected to the other $3$ vertices from $\bag(n)$, call them
$x$, $y$ and $z$.  Then, there must be edges $xa, ya, za \in
\edgeof{\graph}$.  Because $\overline{\graph}$ has the triangle
property, each of these edges must have a witness. Since $a$ does not
belong to any bag other than $\bag(n)$, each witness must be from
$\bag(n)$. By the pigeonhole principle, two of the edges $xa$, $ya$
and $za$ must have the same witness in $\bag(n)$. Assume that $y$
witnesses both $xa$ and $za$, the other cases being similar. We define
$t_n \isdef \basictriangle \fan{i} \basictriangle$, where the choice
of $i$ depends on the values of $h(x)$, $h(y)$, $h(z)$ and $h(a)$. See
Figure \ref{fig:fan} (a) for a depiction of this case, where $h(x)=1$,
$h(y)=2$, $h(z)=3$ and $h(a)=4$.  Let $V_n \isdef \bag(n) \cup
\bigcup_{p \text{ descendant of } n} \bag(p)$. It is easy to verify
that either $\hval(t_n \pop
\basictriangle)=\overline{\graph}[V_n]^\basictriangle$ or
$\hval(t_n)=\overline{\graph}[V_n]^\basictriangle$, depending on
whether $xyz$ is a subgraph of $\overline{\graph}$ or not,
respectively.

For the inductive case, let $n\in\vertof{\tree}$ be a node. We
distinguish two cases, depending on whether $n$ is an adhesion or a
bag node. If $n$ is an adhesion node, a term $t_p$ has been already
built for each bag child $p$ of $n$, by the inductive
hypothesis. Then, we define:
\begin{align*}
  t_n \isdef \begin{cases}
    \Pop_{p \text{ child of } n} t_p \pop \basictriangle &
    \text{if } \overline{\graph}[\bag(m)] = K_3 \\
    \Pop_{p \text{ child of } n} t_p & \text{otherwise}
  \end{cases}
\end{align*}
It is easy to verify that $\hval(t_n)= \overline{\graph}[\bag(n)]^\basictriangle$
in this case.

Otherwise, $n$ is a bag node and let $m$ be the adhesion parent of
$n$, such that $\bag(m) = \set{x,y,z}$ and $\bag(n) =
\set{x,y,z,a}$. We distinguish three cases: \begin{itemize}
\item $n$ has a single adhesion child $p$: in this case
  $\bag(m)\neq\bag(p)$, by Definition \ref{def:alternating-td}
  (\ref{it3:def:alternating-td}) and let $\bag(p)=\set{y,z,a}$ (the
  other cases are similar). Since $\overline{\graph}$ is
  $3$-connected, $x$ must be connected to $y$, $z$ and $a$. In
  particular, the path between $x$ and $a$ consists of a single edge
  $xa$, because $n$ is the only node in $\tree$ whose bag contains
  both $x$ and $a$. Because $\overline{\graph}$ has the triangle
  property, the edge $xa$ must have witness $y$ or $z$. We consider
  three cases (with the first point detailing two
  cases): \begin{itemize}
  \item Assume that $xa$ has witness $y$ but no witness $z$ (the case
    of witness $z$ but no witness $y$ is symmetric).  Hence there are
    edges $xy$ and $ya$ in $\overline{\graph}$.  We define $t_n$ as a $\fan{i}$
    operation between $\basictriangle$ (for the $xya$ subgraph) and
    $t_p$, where $i$ and the order of the arguments are uniquely
    determined by the values of $h(x)$, $h(y)$, $h(z)$ and $h(a)$. For
    instance, if $h(x)=1$, $h(y)=2$, $h(z)=3$ and $h(a)=4$ then $t_n
    \isdef \basictriangle \fan{2} t_p$. See Figure \ref{fig:fan} (b)
    for a depiction of this case.
  \item Assume $y$ and $z$ both witness the edge $xa$. Hence there are
    edges $xy,xz,ya$ and $za$ in $\overline{\graph}$. We define $t_n$ as a serial
    composition between two $\basictriangle$ (one for $xya$ and the
    other for $xza$) and $t_p$, the order of the arguments being
    determined by $h(x)$, $h(y)$, $h(z)$ and $h(a)$. For instance, if
    $h(x)=1$, $h(y)=2$, $h(z)=3$ and $h(a)=4$ then $t_n \isdef
    \sop(t_p,\basictriangle,\basictriangle)$. See Figure \ref{fig:fan}
    (c) for a depiction of this case.
  \end{itemize}
\item $n$ has two adhesion children $p_1$ and $p_2$: in this case
  $\bag(m)\neq\bag(p_i)$, for $i=1,2$, by Definition
  \ref{def:alternating-td} (\ref{it3:def:alternating-td}) and let
  $\bag(p_1)=\set{y,z,a}$ and $\bag(p_2)=\set{x,y,a}$ (the other cases
  are symmetric).
  We consider two cases: \begin{itemize}
  \item $G$ does not contain the triangle $xza$: in this case, we
    define $t_n$ as a $\fan{h(y,m)}$ operation between $t_{p_1}$ and
    $t_{p_2}$, the order of the arguments being determined by $h(x)$,
    $h(y)$, $h(z)$ and $h(a)$. See Figure \ref{fig:fan} (d) for a
    depiction of this case, where $h(x)=1$, $h(y)=2$, $h(z)=3$ and
    $h(a)=4$.
  \item $G$ contains the triangle $xza$: in this case, we define $t_n$
    as a serial composition between $t_{p_1}$, $t_{p_2}$ and
    $\basictriangle$ (the $xza$ subgraph), the order of the arguments
    being determined by $h(x)$, $h(y)$, $h(z)$ and $h(a)$. See Figure
    \ref{fig:fan} (e) for a depiction of this case, where $h(x)=1$,
    $h(y)=2$, $h(z)=3$ and $h(a)=4$.
  \end{itemize}
\item $n$ has three adhesion children $p_1$, $p_2$ and $p_3$: In this
  case let $t_n$ be the serial composition of $t_{p_1}$, $t_{p_2}$ and
  $t_{p_3}$, the order of the arguments being determined by $h(x)$,
  $h(y)$, $h(z)$ and $h(a)$. See Figure \ref{fig:fan} (f) for a
  depiction of this case, where $h(x)=1$, $h(y)=2$, $h(z)=3$ and
  $h(a)=4$.
  \end{itemize}
Note that these are the only cases possible, because if $n$ had four
adhesion children, then necessarily one of them would have the same
bag as $m$, contradicting Definition \ref{def:alternating-td}
(\ref{it3:def:alternating-td}). It is easy to verify in all cases
that either $\overline{\graph}[\bag(n)]^\basictriangle = \hval(t_n \pop
\basictriangle)$ or $\overline{\graph}[\bag(n)]^\basictriangle = \hval(t_n)$,
depending on whether $xyz$ is a subgraph of $\overline{\graph}$, respectively.

Finally, since the sources of $\graph$ form a triangle subgraph of
$\overline{\graph}$, we obtain $\overline{\graph} = \hval(t \pop
\basictriangle)$, for a ground fan term $t$. Hence, we
obtain: \begin{align*}
  \graph = \begin{cases}
    \hval(t\pop \basictriangle) & \text{if } \graph[\set{\sourceof{\graph}(1), 
        \sourceof{\graph}(2),\sourceof{\graph}(3)}] = K_3 \\
    \hval(t) & \text{otherwise}
  \end{cases}
\end{align*}
In both cases $\graph$ is the value of a ground fan term. 

\vspace*{.5\baselineskip}\noindent''$\supseteq$'' This follows from
Lemma \ref{lemma:fan-connectivity} and $\twof{\hval(t)} \leq 3$, for
any ground triangle term $t$, by Lemma \ref{lemma:hr-tw}. \qed
\else
\paragraph{Proof of Theorem \ref{thm:fan-graphs} (sketch)}
The left to right direction of the proof of Theorem
\ref{thm:fan-graphs} translates an optimal rooted alternating tree
decomposition $(\tree,\bag)$ of a fan graph $\graph$ into a fan term
$t$ such that $\graph=\hval(t)$. Clearly, $(\tree,\bag)$ is also a
tree decomposition of the 3-connected thickening $\overline{\graph}$
of $\graph$, hence it has the 3-step property, by Lemma
\ref{lemma:connectivity-tw} and admits a 3-history, by Lemma
\ref{lemma:history}. The 3-history is used to determine the order of
the arguments in the serial compositions and fan operations from
$t$. As an invariant of the construction, for each node $n$ of $\tree$
we build a term $t_n$ such that $\hval(t_n)$ is the triangle subgraph
of $\graph$ induced by the union of the bags of the subtree of $\tree$
rooted at $n$. The right to left direction of the proof of Theorem
\ref{thm:triangle-graphs} follows from Lemma
\ref{lemma:witness-triangle} and $\twof{\hval(t)} \leq 3$, for each
ground triangle term $t$, by Lemma \ref{lemma:hr-tw}. For space
reasons, the proof is given in Appendix \ref{app:fan}. \qed
\fi

\section{Sets of Triangle Graphs}
\label{sec:sets}

In this section, we use the algebras introduced in
\autoref{sec:algebras} to describe sets of triangle graphs. In doing
so, we consider \emph{recognizable languages}, i.e., sets defined by
homomorphisms into finite algebras having the same signature, and
\emph{context-free languages}, i.e., sets defined by grammars. We
establish the relations between contex-free and recognizable languages
in \hr, the triangle and fan algebras, respectively, see Figure
\ref{fig:sets} for a chart of the relations between the different
classes of graph languages. In particular, we show that
recognizability in the triangle and fan algebras coincides with
definability in Counting Monadic Second Order (\cmso) logic, for the
respective classes of graphs. As a byproduct, we obtain the
decidability of \cmso{} over context-free sets of triangle and fan
graphs, thus motivating the applications of triangle and fan algebras
in domains such as verification, synthesis and learning.

\subsection{Recognizable Sets}
\label{subsec:recognizable}

We recall the definition of recognizable sets, in general, for any
subset of the domain of an algebra.  Let $\signature = \set{f_1, f_2,
  \ldots}$ be a signature of function symbols of arities $\arityof{f}$
and $\algof{A} = (\universeOf{A}, \set{f^\algof{A}}_{f\in\signature})$
be an $\signature$-algebra, where the interpretation of a function
symbol $f \in \signature$ in the domain $\universeOf{A}$ is the
function $f^\algof{A} : \universeOf{A}^{\arityof{f}} \rightarrow
\universeOf{A}$ (here $\universeOf{A}^n$ denotes the $n$-times
Cartesian product of $\universeOf{A}$ with itself). An algebra
$\algof{A}$ is \emph{finite} if its domain $\universeOf{A}$ is finite.
The interpretation of a ground term $t$ in $\algof{A}$ is the unique
element of the algebra denoted $t^\algof{A} \in \universeOf{A}$ which
is the result of interpreting the function symbols in $t$ according to
$\algof{A}$ and computing the result.

A \emph{homomorphism} between algebras $\algof{A}$ and $\algof{B}$,
having the same signature $\signature$, is a function $h :
\universeOf{A} \rightarrow \universeOf{B}$ such that
$h(f^\algof{A}(a_1,\ldots,a_{\arityof{f}})) = f^\algof{B}(h(a_1),
\ldots, h(a_{\arityof{f}}))$, for all $f \in \signature$ and
$a_1,\ldots,a_{\arityof{f}} \in \universeOf{A}$. A \emph{recognizable
set} does not distinguish between elements having the same homomorphic
image in a finite algebra:

\begin{definition}\label{def:recognizability}
  A set $\langu \subseteq \universeOf{A}$ is \emph{recognizable} in an
  $\signature$-algebra $\algof{A} = (\universeOf{A},
  \set{f^\algof{A}}_{f\in\signature})$ if and only if there exists a
  finite algebra $\algof{B} = (\universeOf{B},
  \set{f^\algof{B}}_{f\in\signature})$ and a homomorphism $h$ between
  $\algof{A}$ and $\algof{B}$ such that $\langu=h^{-1}(F)$ for some
  subset $F \subseteq \universeOf{B}$.
\end{definition}




\subsection{Context-free Sets}
\label{subsec:context-free}

A context-free set is a component of the least solution of a finite
set of recursive definitions. Formally, given a signature
$\signature$, a \emph{grammar} is a tuple
$\grammar=(\nonterm,\rules,\axioms)$, where $\nonterm$ is a finite set
of \emph{nonterminals} ranged over by $X,Y,\ldots$, $\rules$ is a
finite set of \emph{rules} of the form $X \rightarrow
t[Y_1,\ldots,Y_n]$, where $t$ is a term over $\signature$, and
$\axioms \subseteq \nonterm$ is a set of \emph{axioms}.

A \emph{derivation step} is a pair of terms, denoted $\theta
\step{\grammar} \eta$, such that $\eta$ is obtained from $\theta$ by
replacing an occurrence of a nonterminal $X$ with the right-hand side
$t$ of a rule $X \rightarrow t \in \rules$. A \emph{derivation}
$\theta \step{\grammar}^* \eta$ is a finite sequence of steps. The
derivation is complete if $\eta$ is a ground term. Given an algebra
$\algof{A}$ over the signature $\signature$, the \emph{language} of
$\grammar$ in $\algof{A}$ is the set $\langof{A}{\grammar} \isdef
\set{\theta^\algof{A} \mid X \step{\grammar} \theta \text{ is a
    complete derivation, } X \in \axioms}$ of interpretations in
$\algof{A}$ of the ground terms produced by $\grammar$ starting with
an axiom. A set is \emph{context-free} if it is the language of some
grammar.

We say that a $\signature_\algof{B}$-algebra $\algof{B}$ is
\emph{derived} from the $\signature_\algof{A}$-algebra $\algof{A}$ if,
for every function symbol $f \in \signature_\algof{B}$, there exists
some $\signature_\algof{A}$-term $t_f$ such that $f^\algof{B} =
t_f^\algof{A}$. By the above definition, each context-free set in a
derived algebra $\algof{B}$ is context-free in the base algebra
$\algof{A}$. This is because one can replace, in the right-hand sides
of the rules of a grammar in the signature of $\algof{B}$, the derived
function symbols with their definitions, thus obtaining a grammar in
the signature of $\algof{A}$, that produces the same language. The two
inclusions on the second line in Figure \ref{fig:sets} are immediate
consequences of the fact that the fan algebra is derived from the
triangle algebra, which is, in turn, derived from the \hr{} algebra.

Unlike in the case of words, where each recognizable languages is
context-free, for the graphs defined in the \hr{} algebra, there is no
such inclusion. In fact, neither of classes of \hr{} context-free and
\hr{} recognizable sets is included in the other. The following
result, known as the \emph{Filtering Theorem}, generalizes a similar
result for words, thus making the relation between recognizable and
context-free sets formal, in general:

\begin{theorem}[Theorem 3.88 in~\cite{courcelle_engelfriet_2012}]\label{thm:ft}
  Let $\algof{A}$ be an algebra and $\langu,\klangu$ be subsets of the
  domain of $\algof{A}$. If $\langu$ is recognizable in $\algof{A}$
  and $\klangu$ is context-free in $\algof{A}$ then
  $\langu\cap\klangu$ is context-free in $\algof{A}$. Moreover, a
  grammar that produces $\langu\cap\klangu$ can be effectively built
  from the finite algebra that recognizes $\langu$ and the grammar
  that produces $\klangu$.
\end{theorem}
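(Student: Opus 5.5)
The plan is the classical product construction, mirroring the proof of the corresponding result for words (intersecting a context-free language with a regular one). Let $\grammar=(\nonterm,\rules,\axioms)$ be a grammar producing $\klangu$. Let $\algof{B}$ be a finite algebra and $h:\universeOf{A}\rightarrow\universeOf{B}$ a homomorphism with $\langu=h^{-1}(F)$, as in Definition~\ref{def:recognizability}. First, I normalize $\grammar$ so that each right-hand side contains at most one function symbol. That is, each rule has the form $X\rightarrow f(Y_1,\ldots,Y_n)$ with $f\in\signature$ (constants when $n=0$), or $X\rightarrow Y$. This is done by introducing fresh nonterminals for the proper subterms of right-hand sides. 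The normalization preserves the set of values of complete derivations, since derivations of the new grammar correspond one-to-one to derivations of the old one, after folding the auxiliary nonterminals. If variables repeat in a right-hand side $t[Y_1,\ldots,Y_n]$, each occurrence is expanded independently, so they can be renamed apart, and the rule form above is exactly what is needed.

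Next, I build the product grammar $\grammar'$. Its nonterminals are the pairs $(X,b)$ with $X\in\nonterm$ and $b\in\universeOf{B}$. For each rule $X\rightarrow f(Y_1,\ldots,Y_n)$ and each tuple $b_1,\ldots,b_n\in\universeOf{B}$, $\grammar'$ has the rule $(X,f^\algof{B}(b_1,\ldots,b_n))\rightarrow f((Y_1,b_1),\ldots,(Y_n,b_n))$. For each rule $X\rightarrow Y$ and each $b$, it has the rule $(X,b)\rightarrow(Y,b)$. The axioms are the pairs $(X,b)$ with $X\in\axioms$ and $b\in F$. Finiteness of $\universeOf{B}$, $\nonterm$ and $\rules$ makes $\grammar'$ finite. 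Since $f^\algof{B}$ is computable on a finite table, $\grammar'$ is effectively built from $\algof{B}$, $F$ and $\grammar$. (I take "built from the finite algebra" to include $F$.)

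The key claim is proved by induction on the length of derivations, in both directions. It states: a ground term $\theta$ satisfies $(X,b)\step{\grammar'}^*\theta$ if and only if $X\step{\grammar}^*\theta$ and $h(\theta^\algof{A})=b$. The main fact used is $h(\theta^\algof{A})=\theta^\algof{B}$, which follows from $h$ being a homomorphism by structural induction on $\theta$. For "only if": the first step of the derivation uses a rule $(X,f^\algof{B}(\bar b))\rightarrow f((Y_i,b_i))_i$, and the subderivations give $\theta_i$ with $\theta_i^\algof{B}=b_i$. For "if": given $X\rightarrow f(Y_1,\ldots,Y_n)$ with subderivations producing $\theta_i$, pick $b_i\isdef\theta_i^\algof{B}$. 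Applying the claim at the axioms yields $\langof{A}{\grammar'}=\klangu\cap h^{-1}(F)=\klangu\cap\langu$.

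The only delicate step is the normalization together with the bookkeeping in the inductive claim: derivations are sequences of rewriting steps, not trees. I would handle this by first observing that complete derivations from $X$ correspond to derivation trees whose subtrees are complete derivations from the child nonterminals. This holds because the replacement order is irrelevant, occurrences being rewritten independently. The induction is then on the height of the derivation tree, which is routine.
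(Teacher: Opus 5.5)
The paper gives no proof of this statement; it is imported as Theorem 3.88 of Courcelle and Engelfriet and used as a black box. Your product construction is the standard argument for this Filtering Theorem, and it is correct for the paper's single-sorted definitions of algebras, grammars and recognizability. The construction consists of normalizing the rules, pairing each nonterminal $X$ with an element $b$ of the finite algebra $\algof{B}$, and taking as axioms the pairs $(X,b)$ with $X\in\axioms$ and $b\in F$. The invariant $(X,b)\Rightarrow^*\theta$ iff $X\Rightarrow^*\theta$ and $\theta^{\algof{B}}=b$ then follows from $h(\theta^{\algof{A}})=\theta^{\algof{B}}$ by induction on derivation trees.

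Two cosmetic remarks. The normalization is optional: for a rule $X\rightarrow t[Y_1,\ldots,Y_n]$ with occurrences renamed apart, you can directly add $(X,t^{\algof{B}}(b_1,\ldots,b_n))\rightarrow t[(Y_1,b_1),\ldots,(Y_n,b_n)]$. Also, the ``one-to-one'' correspondence of derivations after normalization holds for derivation trees, not step sequences, which is how you end up using it anyway.
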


A consequence of this theorem is that, for the triangle and fan
algebras, each recognizable language is context-free, thus
establishing the two vertical inclusions from Figure \ref{fig:sets}:

\begin{corollaryE}\label{cor:rec-cf}
  Each recognizable set of triangle (resp. fan) graphs is context-free
  in the triangle (resp. fan) algebra.
\end{corollaryE}
\begin{proofE}
  By Theorem \ref{thm:triangle-graphs}, the grammar
  $\grammar_{\trianglealg}$, having following rules, produces the set
  of triangle graphs, i.e.,
  $\langof{\trianglealg}{\grammar_{\trianglealg}} =
  \trianglegraphs$. The only nonterminal of $\grammar_{\trianglealg}$
  is $X$, which is also its axiom.
  \begin{align*}
    X \rightarrow & ~X \pop X \\
    X \rightarrow & ~\sop(X,X,X) \\
    X \rightarrow & ~\basictriangle \\
    X \rightarrow & ~\emptygraph
  \end{align*}
  Let $\langu$ be a recognizable set in the triangle algebra. By
  Theorem \ref{thm:ft}, there exists a grammar $\grammar$ such that
  $\langof{\trianglealg}{\grammar} =
  \langof{\trianglealg}{\grammar_{\trianglealg}} \cap \langu =
  \trianglegraphs \cap \langu = \langu$. The proof is similar for fan
  graphs.
\end{proofE}
\noindent As a side remark, a similar result holds for context-free
sets of graphs recognizable in any fragment of \hr{} that uses
finitely many source labels.

\subsection{\cmso-definable Sets}
\label{subsec:cmso}

Another way of describing sets of graphs is using the Counting Monadic
Second Order (\cmso) logic. This method of graph specification is
fairly different from recognizability and context-freeness, because it
does not require an underlying algebra. Moreover, \cmso{} can express
many non-trivial properties of graphs, such as $k$-colorability,
$k$-connectivity, the existence of Eulerian and Hamiltonian cycles and
planarity. In particular, any set of graphs defined by a constant
bound on their tree-width can be defined in \cmso, due to the
celebrated Graph Minor Theorem of Seymour and
Robertson~\cite{SEYMOUR199322}. For instance, the set of graphs
$\set{\graph \mid \twof{\graph}\leq3}$ is defined by four known
excluded minors~\cite{ARNBORG19901}, meaning furthermore that the
\cmso{} formula that defines it is effectively computable\footnote{The
presence of a given minor in a graph is yet another \cmso-definable
property.}.

We recall below the definition of \cmso{} for two purposes. First, we
prove that the satisfiability problem for \cmso{} is decidable over
context-free sets of triangle and fan graphs, respectively. This is a
consequence of the fact that triangle and fan graphs have algebraic
characterizations (Theorems \ref{thm:triangle-graphs} and
\ref{thm:fan-graphs}) and these algebras are derived from \hr{},
following a well-known result of Courcelle, which states that \cmso{}
is decidable over \hr{} context-free sets~\cite[Theorem
  1.22]{courcelle_engelfriet_2012}. Second, we prove that the
\cmso-definable sets are the same as the sets recognizable in both the
triangle and fan algebras (\autoref{subsec:parsable}). This
specializes to the triangle algebras a known result on the equivalence
between recognizability in \hr{} and \cmso-definability of sets of
graphs of bounded tree-width~\cite{10.1145/2933575.2934508}.

Let $\relations$ be a finite signature of \emph{relation symbols}
$\arel \in \relations$, of arities $\arityof{\arel} \geq 1$. As usual,
relation symbols of arity $1$, $2$ and $3$ are called unary, binary
and ternary, respectively. The \cmso{} logic is the set of formul{\ae}
defined inductively by the following syntax:
\begin{align*}
  \psi := & \ x=y \mid \arel(x_1,\ldots,x_{\arityof{\arel}}) \mid X(x)
  \mid \cardconstr{X}{q}{p} \mid \psi \land \psi \mid \neg\psi \mid
  \exists x ~.~ \psi \mid \exists X ~.~ \psi
\end{align*}
where $x,y_1,\ldots,y_{\arityof{a}} \in \vars$ are \emph{individual
  variables}, $X \in \Vars$ are \emph{set variables} and $p,q \in
\nat$ are integers such that $p \in \interv{0}{q-1}$. A
\emph{sentence} is a formula in which each variable occurs in the
scope of a quantifier. We denote by \mso{} the fragment of \cmso{}
without the cardinality constraints of the form
$\cardconstr{X}{q}{p}$.

The semantics of \cmso{} is given in terms of structures. A
$\relations$-\emph{structure} (or simply a structure, when
$\relations$ is understood) is a tuple
$\astruc=(\univ,\set{\arel^\astruc}_{\arel\in\relations})$, where
$\univ$ is a set called \emph{universe} and $\arel^\astruc \subseteq
\univ^{\arityof{\arel}}$ is the \emph{interpretation} of the relation
symbol $\arel$ in $\astruc$, i.e., a relation over $\univ$ of the same
arity as $\arel$. Note the similarity between structures and algebras,
the first being interpretations of relation symbols, whereas the
latter interpreting function symbols. We denote by
$\strucof{\relations}$ the set of $\relations$-structures.

The \emph{satisfaction} relation $\astruc \models^\store \psi$ between
structures and \cmso{} formul{\ae} is parameterized by a mapping
$\store$ of individual (resp. set) variables to (resp. sets of)
vertices. The relation is defined inductively on the structure of
formul{\ae}, as follows:
\[\begin{array}{lclclcl}
\astruc \models^\store x = y & \iff & \store(x)=\store(y) &&
\astruc \models^\store \cardconstr{X}{q}{p} & \iff & \cardof{\store(X)} = p \mod q \\
\astruc \models^\store \arel(x_1,\ldots,x_{\arityof{\arel}}) & \iff & (\store(x_1), \ldots, \store(x_{\arityof{\arel}})) \in \arel^\astruc &&
\astruc \models^\store \exists x ~.~ \psi & \iff & \graph \models^{\store[x \leftarrow u]} \psi \text{, for some } u \in \univ \\
\graph \models^\store X(x) & \iff & \store(x) \in \store(X) &&
\astruc \models^\store \exists X ~.~ \psi & \iff & \graph \models^{\store[X \leftarrow U]} \psi \text{, for some } U \subseteq \univ
\end{array}\]
Here $\store[x\leftarrow u]$ (resp. $\store[X\leftarrow U]$) denotes
the store that agrees with $\store$ everywhere except in $x$
(resp. $X$) where it equals $u$ (resp. $U$). The semantics of boolean
conjunction and negation are standard, omitted for brevity. If $\phi$
is a sentence, we write $\astruc \models \phi$ instead of $\astruc
\models^\store \phi$.

There are several ways in which graphs can be represented by
structures. For technical reasons that will be made clear below, we
adopt here the \emph{incidence model}, in which the universe of the
structure $\strucof{\graph}$ that encodes the graph $\graph$ consists
of the vertices and edges of $\graph$ and the incidence relation
$\edgrel(x,y,z)$ means that $x$ is an edge attached to the vertices
$y$ and $z$. In addition, we consider unary relations $\srcrel{i}(x)$
meaning that $x$ is the $i$-th source of $\graph$. Formally, we
define:
\begin{align*}
  \strucof{\graph} \isdef \Big(\vertof{\graph}\uplus\edgeof{\graph},
  \underbrace{\set{(xy,x,y) \mid xy \in \edgeof{\graph}}}_{\edgrel^{\strucof{\graph}}},
  \Big\{\underbrace{\set{\sourceof{\graph}(i)}}_{\srcrel{i}^{\strucof{\graph}}}\Big\}_{i\in\dom{\sourceof{\graph}}}\Big)
\end{align*}
Note that the distinction between vertices and edges is achieved via
the \mso{} formul{\ae} $\mathit{isedge}(x) \isdef \exists y \exists z
~.~ \edgrel(x,y,z)$ and $\mathit{isvert}(x) \isdef
\neg\mathit{isedge}(x)$, hence one can unambiguously quantify
separately over vertices and edges, respectively. Moreover, we assume
that the following sentence holds for all structures that encode graphs:
\begin{align*}
  \forall e \forall f \forall x \forall y \forall z \forall u ~.~ \edgrel(e,x,y) \wedge \edgrel(f,z,u) \wedge e=f \rightarrow (x=z \wedge y=u \vee x=u \wedge y=z)
\end{align*}

A set of graphs $\langu$ is \cmso-\emph{definable}
(resp. \mso-definable) if there exists a \cmso{} (resp. \mso) sentence
$\phi$ such that $\langu = \set{\graph \mid \strucof{\graph} \models
  \phi}$. For instance, the set of triangle graphs is \mso-definable:

\begin{lemmaE}\label{lemma:triangle-definable}
  $\trianglegraphs$ is an \mso-definable set.
\end{lemmaE}
\begin{proofE}
  By Definition \ref{def:triangle-graph}, $\trianglegraphs$ is the set
  of graphs of tree-width at most $3$ having the triangle
  property. The \mso{} sentence defining $\trianglegraphs$ is the
  conjunction of the following: \begin{compactitem}[-]
  \item $\phi_{\mathrm{tw}\leq3}$ is the sentence defining the set of
    graphs $\set{\graph \mid \twof{\graph} \leq 3}$. This sentence is
    the conjunction of the negation of four \mso{} sentences defining
    the four excluded minors of this set~\cite{ARNBORG19901}.
  \item the sentence:
    \[\phi_{\mathit{triangle}} \isdef \forall e \forall x \forall y
    ~.~ \edgrel(e,x,y) \rightarrow \exists z \exists f \exists g ~.~ z
    \neq x \wedge z \neq y \wedge \edgrel(f,x,z) \wedge
    \edgrel(g,y,z)\] defines the set of graphs having the triangle
    property.
  \end{compactitem}
  \vspace*{-\baselineskip}
\end{proofE}

An important result due to Courcelle is that any \cmso-definable set
of graphs is recognizable in \hr~\cite[Theorem
  4.4]{CourcelleI}\footnote{But not viceversa, i.e., there are
infinitely many recognizable sets that are not \cmso-definable.}. A
consequence of this fact and of Theorem \ref{thm:ft} is the following:

\begin{theorem}[Theorem 1.22 in~\cite{courcelle_engelfriet_2012}]\label{thm:ft-logic}
  Let $\grammar$ be a \hr{} grammar and $\phi$ be a \cmso{}
  sentence. Then, one can build a grammar $\grammar'$ such that
  $\langof{\text{\hr}}{\grammar'} = \langof{\text{\hr}}{\grammar}
  \cap\set{\graph \mid \strucof{\graph} \models \phi}$.
\end{theorem}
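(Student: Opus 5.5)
This is Courcelle's theorem, attributed to \cite{courcelle_engelfriet_2012}, so I would follow the standard route through the Filtering Theorem (Theorem~\ref{thm:ft}) in the \hr{} algebra.

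\textbf{Step 1: recognizability of the logical set.} By \cite[Theorem 4.4]{CourcelleI}, the set $\langu_\phi \isdef \set{\graph \mid \strucof{\graph} \models \phi}$ is recognizable in \hr. Since \hr{} is many-sorted by type, I would make this concrete and effective. Let $\tau_1,\ldots,\tau_k$ be the finitely many types occurring in $\grammar$, and $q$ the quantifier rank of $\phi$. For each type, take the finite set of \cmso{} formulae of rank at most $q$ over $\set{\edgrel}\cup\set{\srcrel{i}}_{i\in\tau}$, up to logical equivalence; this set is finite because the moduli appearing in $\phi$ are fixed. The rank-$q$ type of $\strucof{\graph}$ is then an element of a finite set.

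The key ingredient is the Feferman--Vaught / backwards-translation property of the \hr{} operations. The incidence model matters here. For $\pop$, the structure $\strucof{\graph_1\pop\graph_2}$ is obtained from the disjoint union of $\strucof{\graph_1}$ and $\strucof{\graph_2}$ by fusing equally-labelled sources and identifying duplicated edges. This is a quantifier-free definable operation, so the rank-$q$ type of the result is computable from the types of the arguments. For $\forget{i}$ and $\rename{i\leftrightarrow j}$, only the unary source relations change, so the same holds trivially. This makes the map $\graph\mapsto$ (rank-$q$ type) a homomorphism into a finite algebra, and $\langu_\phi$ is the preimage of the types containing $\phi$.

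\textbf{Step 2: filtering.} Given that $\langu_\phi$ is recognizable by an effectively constructed finite algebra $\algof{B}$ with homomorphism $h$, I would apply Theorem~\ref{thm:ft} to $\langu_\phi$ and $\langof{\text{\hr}}{\grammar}$. The construction is a product grammar. Its nonterminals are pairs $(X,b)$ with $X\in\nonterm$ and $b\in\universeOf{B}$. For every rule $X\rightarrow t[Y_1,\ldots,Y_n]$ and every choice $b_1,\ldots,b_n$, it has a rule
\[(X,t^{\algof{B}}[b_1,\ldots,b_n]) \rightarrow t[(Y_1,b_1),\ldots,(Y_n,b_n)].\]
Its axioms are the pairs $(X,b)$ with $X\in\axioms$ and $b\in F$. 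An induction on derivations shows that $(X,b)$ derives exactly the terms derivable from $X$ whose value has image $b$ under $h$, which yields $\langof{\text{\hr}}{\grammar'}=\langof{\text{\hr}}{\grammar}\cap\langu_\phi$.

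\textbf{Main obstacle.} The delicate point is effectiveness of Step~1: computing the finite algebra, and in particular the interpretation of $\pop$ on types. I would handle this by making the Feferman--Vaught argument explicit as a computable splitting of rank-$q$ formulae. Counting predicates split via $\cardof{U_1\uplus U_2}\equiv p \mod q$ iff there exist residues $p_1,p_2$ with $p_1+p_2\equiv p$, correcting for fused sources and merged edges, which form a bounded, explicitly known set. Restricting to the finitely many types appearing in $\grammar$ keeps the algebra finite.
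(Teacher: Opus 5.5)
Your top-level route is the paper's. The paper obtains the statement from Courcelle's theorem that \cmso-definable sets are \hr-recognizable together with the Filtering Theorem (Theorem~\ref{thm:ft}), and your Step~2 is the product construction behind the latter, so relying on the citation for Step~1 is fine. The concrete elaboration you give of Step~1, however, fails as stated. In the paper's encoding, sources are unary predicates and $\pop$ deletes duplicated edges. Deciding which edge of $\graph_2$ disappears, and whether a set contains it, therefore requires quantifying over edges and source vertices. So $\pop$ is not quantifier-free definable from the disjoint union, and rank-$q$ types need not compose.

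For instance, let $\graph' \isdef \vertex \pop \rename{1\leftrightarrow 2}(\vertex) \pop \forget{1}(\vertex)$, i.e.\ sources $1,2$ plus one isolated inner vertex. The structures $\strucof{\bridge}$ and $\strucof{\graph'}$ both have three elements realizing the same atomic one-element types, hence satisfy the same \cmso{} sentences of rank $1$. Yet $\bridge \pop \bridge = \bridge$ has three elements whereas $\graph' \pop \bridge$ has four, and the rank-$1$ sentence $\exists X\,.\,\cardconstr{X}{5}{4}$ detects this. So for $\phi \isdef \exists X\,.\,\cardconstr{X}{5}{4}$, your map to rank-$q$ types is not a homomorphism.

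The standard repair is to treat the sources, and the at most one edge between each pair of sources, as named elements (constants, or pinned positions in the Ehrenfeucht--Fra\"{\i}ss\'{e} game), and to record source adjacency in the state. Then every fused or deleted element is named, the cardinality corrections become quantifier-free, and these refined types form a congruence of finite index for each set of source labels. Alternatively, run Courcelle's argument in the multigraph \hr{} algebra, where $\pop$ merges no edges, for the \cmso{} sentence ``the simplification satisfies $\phi$''. Then use that simplification is a homomorphism onto the paper's algebra, so the resulting grammar read in the paper's algebra gives the required language. For effectiveness you should also work with syntactic Hintikka types rather than formulae ``up to logical equivalence'', which is not effective.

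In Step~2, the states must be attached to each occurrence of a nonterminal in a right-hand side, not to the nonterminal itself, so linearize the rules first. For the rule $X \rightarrow X \pop X$ of Corollary~\ref{cor:rec-cf}, giving both occurrences the same $b$ only yields terms $t_1 \pop t_2$ with $h(\hval(t_1)) = h(\hval(t_2))$. The inclusion $\langof{\text{\hr}}{\grammar}\cap\langu_\phi \subseteq \langof{\text{\hr}}{\grammar'}$ would then fail.
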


The first result of this section is a consequence of the above theorem:

\begin{corollaryE}\label{cor:cmso-decidability}
  Let $\langu$ be a context-free set of triangle (resp. fan) graphs
  and $\phi$ be a \cmso{} sentence. The problem ``\emph{is there a
    graph} $\graph\in\langu$ \emph{such that } $\graph \models
  \phi$?'' is decidable.
\end{corollaryE}
\begin{proofE}
  Let $\grammar$ be a grammar such that $\langu =
  \langof{\trianglealg}{\grammar}$ and $\overline\grammar$ be the
  grammar obtained by replacing each function symbol $f \in
  \trianglesign$ from the right-hand side of a rule in $\grammar$ by
  its definition in \hr. Clearly, we have
  $\langof{\text{\hr}}{\overline\grammar} =
  \langof{\triangle}{\grammar}$. By Theorem \ref{thm:ft-logic}, one
  can build a \hr{} grammar $\grammar'$ such that
  $\langof{\text{\hr}}{\grammar'} =
  \langof{\text{\hr}}{\overline\grammar} \cap \set{\graph \mid
    \strucof{\graph} \models \phi} = \langu \cap \set{\graph \mid
    \strucof{\graph} \models \phi}$. Since the emptiness of a grammar
  is a decidable problem, one obtains the decidability of the problem
  from the statement. The proof is similar for fan graphs.
\end{proofE}

\subsection{Parsable Sets}
\label{subsec:parsable}

This subsection is devoted to the proof of the equivalence between
recognizability and \cmso-definability, for sets of triangle and fan
graphs, respectively:

\begin{theorem}\label{thm:main}
 Let $\langu$ be a set of triangle (resp. fan) graphs. $\langu$ is
 recognizable in the triangle (resp. fan) algebra if and only if
 $\langu$ is \cmso-definable.
\end{theorem}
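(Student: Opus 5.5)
The two directions are proved separately. The direction ``\cmso-definable implies recognizable'' is easy; the converse rests on making terms of the triangle and fan algebras \mso-parsable.

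\emph{\cmso-definable implies recognizable.} I would invoke Courcelle's theorem that every \cmso-definable set of graphs is recognizable in \hr~\cite[Theorem 4.4]{CourcelleI}. Recognizability transfers to derived algebras. Each operation of $\trianglesign$ (resp. $\trianglefansign$) is given by an \hr{} term, and the finite \hr-algebra interprets these terms. So the restriction of the \hr{} homomorphism to the triangle (resp. fan) graphs is a homomorphism into the finite derived algebra, obtained by interpreting $\basictriangle,\emptygraph,\sop,\pop$ through their defining \hr{} terms. Its inverse image of the accepting set is exactly $\langu$, because $\langu$ consists of triangle graphs and these are the values of triangle terms (Theorem \ref{thm:triangle-graphs}, resp. Theorem \ref{thm:fan-graphs}).

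\emph{Recognizable implies \cmso-definable.} I follow the ``parsability'' strategy of \cite{10.1145/2933575.2934508}. Suppose $\langu = h^{-1}(F)$ for a homomorphism $h$ into a finite algebra $\algof{B}$. Then $\langu$ is the set of graphs $\hval(t)$ over all ground terms $t$ whose evaluation in $\algof{B}$ lies in $F$. I will build an \mso-transduction from graphs to the relational encodings of terms (as in Figures \ref{fig:sop} and \ref{fig:fan}) that, for every triangle (resp. fan) graph $\graph$, outputs at least one term $t$ with $\hval(t)=\graph$. Evaluation of a term tree in the finite algebra $\algof{B}$ is a run of a finite tree automaton, hence \mso-definable on the term structure. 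Pulling this back through the transduction by the Backwards Translation Theorem gives a \cmso{} sentence $\psi$ that holds on $\graph$ iff some produced term evaluates into $F$. Since $h$ is a homomorphism, all terms with value $\graph$ evaluate to $h(\graph)$, so $\psi$ defines $\langu$ among triangle graphs. Intersecting with the sentence of Lemma \ref{lemma:triangle-definable} gives a definition of $\langu$. For fan graphs I add the \mso-definable connectivity conditions of Definition \ref{def:fan-graph}.

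\emph{Constructing the transduction.} This is the main obstacle. The term trees come from the proofs of Theorems \ref{thm:triangle-graphs} and \ref{thm:fan-graphs}, which turn an alternating tree decomposition with the $3$-step property, together with a $3$-history, into a term. I therefore need to guess, in \cmso{} and with a bounded number of copies of the vertex set, both such a decomposition and its $3$-history. The $3$-step property helps here. Each non-root bag node introduces exactly one new vertex, so bag nodes can be represented by their introduced vertices, and the $3$-history is a partition of the vertices into four sets. Each adhesion node is determined by its parent bag and a $3$-subset of that bag, so it can be represented by a copy of the introduced vertex indexed by which element is dropped, giving a bounded number of copies.

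The hard part is defining the tree order in \mso{} over this guessed data. In general, \mso-definability of tree decompositions of bounded-width graphs is Bojańczyk--Pilipczuk's result, which needs \cmso{} counting and is non-trivial. I would try to cite that result (\cite{10.1145/2933575.2934508}, where \mso-definable decompositions are in fact what is produced) and then adapt the decomposition to the alternating $3$-step form by \mso-definable local modifications mirroring the proofs of Lemmas \ref{lemma:alternating-td} and \ref{lemma:full-td}. The remaining details check that the case distinctions in the term constructions depend only on bounded local information. That information is which triangles among the $4$ bag vertices are present, plus the history values, and all of it is first-order expressible once the decomposition is available.
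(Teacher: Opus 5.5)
Your architecture is the paper's. You show that triangle (resp.\ fan) graphs are parsable by composing three steps: an \mso{} transduction producing tree decompositions, a normalisation to alternating decompositions with the $3$-step property, and the term constructions of Theorems~\ref{thm:triangle-graphs} and~\ref{thm:fan-graphs}. The paper then invokes Courcelle's Theorem~\ref{thm:rec-def} as a black box, whereas you unfold its two halves. Your \cmso-to-recognizable half, via \hr-recognizability of \cmso-definable sets and the fact that the triangle and fan operations are derived \hr{} operations, is fine.

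The substantive gap is the tree-decomposition ingredient you propose to cite. The transduction of~\cite{10.1145/2933575.2934508} only guarantees decompositions of width at most $f(k)$ for some function $f$. None of the modifications you list can lower that width. The construction of Lemma~\ref{lemma:alternating-td} keeps every original bag, and the flooding/smoothing of Lemma~\ref{lemma:full-td} presupposes a decomposition of width exactly $3$. Producing width-$3$ decompositions inside \mso{} is exactly the later result~\cite{journals/lmcs/BojanczykP22}, which the paper uses as Theorem~\ref{thm:td-trans}. Without it, your $3$-step normalisation, and hence the term construction, has nothing to start from.

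There are two smaller points in the same step:
\begin{itemize}
\item The iterative flooding in the proof of Lemma~\ref{lemma:full-td} is not itself an \mso{} definition. The paper (Lemma~\ref{lemma:k-step-atd}) replaces it by guessing a $3$-history and a full ``center'' bag. It then fills each bag, colour by colour, with the first vertex of that colour met on the path to the center.
\item Triangle graphs of tree-width $2$ must first have a guessed inner vertex added to every bag, as in the proof of Theorem~\ref{thm:triangle-graphs}, before any $3$-step normalisation applies.
\end{itemize}

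You have also put the counting in the wrong place. The Boja\'{n}czyk--Pilipczuk transductions are plain \mso{} transductions; what needs counting is your evaluation step. The $\pop$-nodes of the term encoding are unranked, with unordered children. Consequently, evaluating a term in the finite algebra $\algof{B}$ is not \mso-definable in general. For example, take $h$ recording the parity of the number of inner vertices. The value of a single $\pop$-node with $n$ children, each the subterm $\sop(\emptygraph,\emptygraph,\emptygraph)$, then depends on $n \bmod 2$.

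The evaluation is, however, \cmso-definable. On the image of $h$ the operation $\pop^{\algof{B}}$ is associative and commutative, so the value at a $\pop$-node is determined by how many children carry each element of $\algof{B}$, up to a threshold and modulo a period. With this correction, the backwards translation yields the \cmso{} sentence you claim. This is precisely the point where \cmso{}, rather than \mso{}, is needed.
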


The usual method for proving such equivalences is to show that, from
each triangle (resp. fan) graph $\graph$, it is possible to define in
\mso{} a triangle (resp. fan) term $t$ such that
$\hval(t)=\graph$. These definitions of terms from graphs rely on the
notion of \emph{transductions}, briefly explained below. To avoid
clutter, the formal definition of transductions is given in Appendix
\ref{app:transductions}.

Let $\relations$ and $\relations'$ be two relational signatures, not
necessarily disjoint. A $(\relations,\relations')$-\emph{transduction}
is a relation $\trans \subseteq \strucof{\relations} \times
\strucof{\relations'}$ defined by the application of the below
operations in this order: \begin{compactenum}
\item \emph{Coloring}. This operation interprets $n$ set variables
  $X_1,\ldots,X_n$ as any subsets $U_1,\ldots,U_n$ of the universe of
  the input $\relations$-structure. The choice of $U_1,\ldots,U_n$ is
  visible throughout the transduction.
\item \emph{Copying}. This operation creates $k$ disjoint copies
  (layers) of the input $\relations$-structure. The elements of the
  copies are identified by binary relations $\copyof{i}(x,y)$ (i.e.,
  $y$ is the copy of $x$ in the $i$-th layer), for all $i=1,\ldots,k$.
\item \emph{Interpreting}. This operation uses the following family of
  \mso{} formul{\ae} written in the $\relations$ signature:
  \begin{align*}
    & ~\domof(X_1,\ldots,X_n) \hspace*{12mm}
    \univof{i}(x,X_1,\ldots,X_n),~ 1 \leq i \leq k \\
    & ~\arel_{i_1,\ldots,i_k}(x_1,\ldots,x_{\arityof{\arel}},X_1,\ldots,X_n),~ \arel\in\relations',~1 \leq i_1,\ldots,i_k \leq n
  \end{align*}
  to define, respectively, the domain of $\trans$ as the set of
  $\relations$-structures that satisfy $\domof(X_1,\ldots,X_n)$ under
  the chosen coloring $U_1,\ldots,U_n$ and, for each input structure $\astruc \in \dom{\trans}$: \begin{compactitem}[-]
  \item the universe of the $i$-th layer in the output
    $\relations'$-structure is the set of elements $u$, taken from the
    $i$-th copy, such that $\astruc \models^\store
    \univof{i}(x,X_1,\ldots,X_n)$ under the assignment $\store(x)=u$
    and $\store(X_i)=U_i$, for all $1\leq i \leq n$. The universe of
    the output is the disjoint union of these individual-layer
    universes.
  \item the interpretation of each relation symbol
    $\arel\in\relations'$ in the output structure is the set of tuples
    $(u_1,\ldots,u_{\arityof{\arel}})$ such that $\astruc
    \models^\store
    \arel_{i_1,\ldots,i_k}(x_1,\ldots,x_{\arityof{\arel}},X_1,\ldots,X_n)$
    under the assignment $\store(x_j)=u_j$, where each $u_j$ is taken
    from the $i_j$-th copy, for all $1 \leq j \leq \arityof{\arel}$
    and $\store(X_i)=U_i$, for all $1 \leq i \leq n$.
  \end{compactitem}
\end{compactenum}
Note that coloring is the only nondeterministic operation: once the
choice of the parameters $X_1,\ldots,X_n$ is made, the transduction
has at most one output, since the input structure or the choice of
parameters may be rejected by the $\domof(X_1,\ldots,X_n)$ formula.
Transductions enjoy the following properties:

\begin{proposition}[Theorem 1.40 in \cite{courcelle_engelfriet_2012}]
  \label{prop:bt}
  If $\langu \subseteq \strucof{\relations'}$ is a \cmso-definable
  (resp. \mso-definable) set and $\trans$ is a
  $(\relations,\relations')$-transduction then the set
  $\trans^{-1}(\langu)$ is \cmso-definable
  (resp. \mso-definable). Moreover, the domain-restriction of a
  transduction by an \mso-definable set and the composition of two
  transductions are transductions.
\end{proposition}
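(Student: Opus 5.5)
We would prove the three claims in the order stated, all by the standard \emph{backwards translation} technique: each formula over the output signature $\relations'$ is rewritten, inductively on its structure, into a formula over the input signature $\relations$ that mentions the parameters $X_1,\ldots,X_n$ of $\trans$ as free set variables.

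\textbf{Backwards translation.} Fix a transduction $\trans$ with $n$ parameters and $k$ layers. For every \cmso{} formula $\psi$ over $\relations'$, and every assignment $\lambda$ of a layer index $\lambda(x)\in\interv{1}{k}$ to each free individual variable $x$ of $\psi$, we define a formula $\psi^{\sharp}_\lambda$ over $\relations$. In it, each free set variable $Y$ of $\psi$ is replaced by $k$ fresh set variables $Y^1,\ldots,Y^k$, where $Y^i$ stands for the part of $Y$ lying in layer $i$. The translation is defined on each construct of $\psi$ as follows.
\begin{compactenum}
\item An atom $\arel(x_1,\ldots,x_m)$ becomes $\arel_{\lambda(x_1),\ldots,\lambda(x_m)}(x_1,\ldots,x_m,X_1,\ldots,X_n)$.
\item An equality $x=y$ becomes $x=y$ if $\lambda(x)=\lambda(y)$, and $\false$ otherwise.
\item A membership $Y(x)$ becomes $Y^{\lambda(x)}(x)$.
\item A quantifier $\exists x.\psi$ becomes $\bigvee_{i=1}^{k}\exists x.\,(\univof{i}(x,\vec X)\wedge\psi^{\sharp}_{\lambda[x\leftarrow i]})$.
\item A quantifier $\exists Y.\psi$ becomes $\exists Y^1\ldots\exists Y^k.\,(\bigwedge_i\forall y.\,Y^i(y)\rightarrow\univof{i}(y,\vec X))\wedge\psi^{\sharp}_\lambda$.
\item The Boolean connectives are translated homomorphically.
\end{compactenum}
The only construct that needs care is the cardinality atom. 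Since $\cardof{Y}=\sum_i\cardof{Y^i}$, the atom $\cardconstr{Y}{q}{p}$ becomes the finite disjunction, over all tuples $(p_1,\ldots,p_k)\in\interv{0}{q-1}^k$ with $\sum_i p_i \equiv p \mod q$, of the conjunctions $\bigwedge_i\cardconstr{Y^i}{q}{p_i}$. This step introduces cardinality atoms only where the source formula already had one, so the translation maps \mso{} to \mso{}.

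\textbf{Correctness and the first claim.} We prove by induction on $\psi$ that $\trans(\astruc,\vec U)\models^{\store}\psi$ if and only if $\astruc\models^{\store^\sharp}\psi^\sharp_\lambda$. Here $\store^\sharp$ maps each $x$ to the element whose $\lambda(x)$-th copy is $\store(x)$, maps each $Y^i$ to the layer-$i$ part of $\store(Y)$, and maps each $X_j$ to $U_j$. This is a routine check, case by case, against the definition of the interpreted output structure. For a sentence $\phi$ defining $\langu$, the set $\trans^{-1}(\langu)$ is then defined by $\exists X_1\ldots\exists X_n.\,(\domof(\vec X)\wedge\phi^\sharp)$.

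\textbf{Domain restriction and composition.} Restricting $\trans$ to an \mso-definable set with sentence $\chi$ amounts to replacing $\domof$ by $\domof\wedge\chi$, which is again a transduction. For the composition $\trans_2\circ\trans_1$ we expect the main bookkeeping obstacle: the parameters $Y_1,\ldots,Y_{n_2}$ of $\trans_2$ range over subsets of the \emph{output} universe of $\trans_1$. We handle this by representing each $Y_j$ by $k_1$ input parameters $Y_j^1,\ldots,Y_j^{k_1}$, one per layer of $\trans_1$. The composite then has these together with the parameters of $\trans_1$ as its parameters, and it has $k_1k_2$ layers indexed by pairs $(i,i')$. Each defining formula of the composite is the backwards translation under $\trans_1$ of the corresponding formula of $\trans_2$, with $\lambda$ fixed by the first components of the layer indices. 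The domain formula is $\domof_1\wedge\domof_2^\sharp$ together with the constraint that each $Y_j^i$ lies inside $\univof{i}$ of $\trans_1$. Correctness of the composite follows from the equivalence proved in the correctness step, applied to each defining formula of $\trans_2$.
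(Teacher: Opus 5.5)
Your proposal is correct. The paper gives no proof of its own and only cites Courcelle and Engelfriet; your argument (layer-indexed translation of atoms, relativized quantifiers, one set $Y^i$ per layer for each set variable, and distributing the modular count over the layers) is the standard backwards-translation proof from that source.

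There is one bookkeeping slip in the composition step. The universe formula of layer $(i,i')$ must be $\univof{i}(x,\vec X)\wedge\psi^{\sharp}_{\lambda}$ with $\lambda(x)=i$, where $\univof{i}$ is taken from $\trans_1$ and $\psi$ is the layer-$i'$ universe formula of $\trans_2$. Your correctness equivalence only holds for elements that actually lie in the output universe of $\trans_1$. Without this conjunct, a universe formula $x=x$ in $\trans_2$ would translate to $x=x$ and admit spurious elements into the composite.
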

As an application of the first property above (also known as the
\emph{Backwards Translation Theorem}), we obtain that the set of fan
graphs is \mso-definable:

\begin{lemmaE}\label{lemma:fan-definable}
  $\trianglefangraphs$ is an \mso-definable set.
\end{lemmaE}
\begin{proofE}
  By Definition \ref{def:fan-graph}, the \mso{} sentence defining
  $\trianglefangraphs$ is:
  \begin{align*}
    \phi_{\fan{}} \isdef \phi_{\basictriangle} \wedge (\phi_{\text{3-connected}} \vee \overline{\phi}_{\text{3-connected}})
  \end{align*}
  where $\phi_{\basictriangle}$ is the \mso{} sentence defining the
  set of triangle graphs (Lemma \ref{lemma:triangle-definable}) and
  \begin{align*}
    \phi_{\text{3-connected}} \isdef
    & ~\underbrace{\exists X ~.~ \mathit{isVert}(X) \wedge \cardof{X} \geq 3}_{\text{at least } 3 \text{ vertices}}
    ~\wedge \\
    & ~\underbrace{\forall x \forall y \forall X ~.~ \mathit{isVert}(X) \wedge \neg(\cardof{X} \geq 3)
      \rightarrow \mathit{path\_avoid}(x,y,X)}_{\text{cannot be disconnected by removing $<3$ vertices}}
  \end{align*}
  where $\mathit{isVert}(X)$ means that $X$ is a set of vertices only,
  $\cardof{X} \geq k$ means that there are at least $k$ distinct
  elements in $X$ and $\mathit{path\_avoid}(x,y,X)$ means that there
  is a path between the vertices $x$ and $y$ that avoids the vertices
  in $X$. The formal definitions of the latter formul{\ae} are
  standard, hence we omit them.

  The definition of $\overline{\phi}_{\text{3-connected}}$ is
  explained below. Let $\trans$ be the transduction taking as input a
  $3$-graph and adding one edge between each of its sources, i.e.,
  this transduction computes the thickening of a $3$-graph. Clearly,
  this transformation can be defined using \mso{} formul{\ae}. Then
  $\overline{\phi}_{\text{3-connected}}$ is the \mso{} sentence
  defining the set $\trans^{-1}(\langu_{\text{3-connected}})$, where
  $\langu_{\text{3-connected}}$ is the set defined by
  $\phi_{\text{3-connected}}$ above. The existence of
  $\overline{\phi}_{\text{3-connected}}$ is guaranteed by Proposition
  \ref{prop:bt}.
\end{proofE}

We shall define transductions building terms from graphs. Note that
terms are ranked and ordered trees whose nodes are labeled by function
symbols from a signature $\signature$, such that a $f$-labeled node
has exactly $\arityof{f}$ ordered children. A notable relaxation of
this constraint is the representation of the nodes labeled by parallel
composition ($\pop$): since, in all the graph algebras considered
here, the $\pop$ operation is both associative and commutative, all
$\pop$-labeled nodes of a term are merged into one node, whose
children may occur in any order. For instance, the term $(x \pop y)
\pop z$ has one root node whose children are $x$, $y$ and $z$ and
their order is not important. Note that, because we aim at generating
terms via transductions, chosing such an ordering would mean chosing a
relation\footnote{Quantifying existentially over relations is possible
in Second Order logic but not in \mso.}, which is beyond the
expressiveness of \mso.

\begin{figure}[t!]
  \centerline{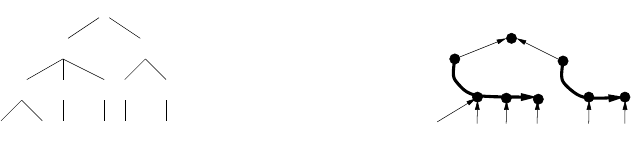}
  \caption{The structure (b) encodes the fan term $\sop(x \pop y, z,
    u) \pop v \fan{i} w$, for some $i \in \set{1,2,3}$ (a). The arrows
    indicate the order in the tuples of the structure encoding the
    term (b). }
  \label{fig:term}
\end{figure}

Formally, let $\overline{\signature} \isdef \set{\overline{f} \mid f
  \in \signature \setminus \set{~\pop~}} \cup \set{\parentof}$ be a
signature of relation symbols, where $\arityof{\overline{f}} =
\arityof{f}+1$, for each $f \in \signature\setminus\set{~\pop~}$ and
$\parentof$ is binary. To simplify the encoding, we assume that the
children of a term node labeled with a symbol from
$\signature\setminus\set{~\pop~}$ are all labeled by $\pop$. In other
words, we consider that the parallel composition does not have a fixed
arity, but instead can be applied to one or more arguments, where
$\pop \graph \isdef \graph$, for each graph $\graph$, by
convention. In this way, non-$\pop$-nodes alternate with $\pop$-nodes
on each path in the tree representing a term, as in Figure
\ref{fig:term} (a).

The \emph{tree-encoding} of a ground $\signature$-term $t$ is the
$\overline{\signature}$-structure $\strucof{t}$, whose universe is the
set of positions of subterm occurrences in $t$ and the relation
symbols from $\overline{\signature}$ are interpreted as
follows: \begin{compactitem}[-]
\item for each $f \in \signature\setminus\set{~\pop~}$, the relation
  $\overline{f}^{\strucof{t}}$ consists of the tuples
  $(n_0,n_1,\ldots,n_{\arityof{f}})$, where each $n_0$ is a
  $f$-labeled node of $t$ and $n_1,\ldots,n_{\arityof{f}}$ are its
  children, in this order.
\item $\parentof^{\strucof{t}}$ is the set of pairs $(m,n)$, where $m$
  is a $\pop$-labeled node of $t$ and $n$ is a child of $m$.
\end{compactitem}
For instance, Figure \ref{fig:term} (b) shows the tree-encoding of the
fan term in Figure \ref{fig:term} (a). Note that each tree-encoding of
a ground $\signature$-term encodes a unique such term (modulo
the associativity and commutativity of the parallel composition). The
encoding of terms as structures mimicks closely the notion of
\emph{reduced trees}\footnote{The main difference is that reduced tree
do not have parent edges as our tree-encodings.} of
Courcelle~\cite[Definition 3.2]{CourcelleV}. This underlies the
definition of \emph{parsable} sets of graphs, for a given algebra
$\algof{A}$ of graphs, over the signature $\signature^\algof{A}$ and
having domain $\universeOf{A}$ (see also~\cite[Definition 4.2]{CourcelleV}):

\begin{definition}\label{def:parsable}
  A subset $\mathcal{C} \subseteq \universeOf{A}$ is \emph{parsable in
    $\algof{A}$} if and only if there exists a transduction $\trans$
  between graphs and terms such that, for each graph $\graph \in
  \mathcal{C}$, each term-encoding $\astruc \in
  \trans(\strucof{\graph})$ encodes a (unique) ground
  $\signature^\algof{A}$-term $t$ such that $\hval(t)=\graph$.
\end{definition}
Parsable sets are quintessential in proving the equivalence between
recognizability and \cmso-definability, as stated by the following
theorem, due to Courcelle:

\begin{theorem}[Theorem 4.8 (2) in~\cite{CourcelleV}]\label{thm:rec-def}
  Let $\mathcal{C} \subseteq \universeOf{A}$ be a parsable set in
  $\algof{A}$. For each subset $\langu \subseteq \mathcal{C}$,
  $\langu$ is recognizable in $\algof{A}$ if and only if $\langu$ is
  \cmso-definable.
\end{theorem}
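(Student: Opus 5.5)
The proof splits into the two implications. Definability implies recognizability using only that $\algof{A}$ is derived from \hr{}. Recognizability implies definability uses parsability: we run a tree automaton on the term produced by the parsing transduction and pull the result back to graphs with Proposition~\ref{prop:bt}.

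($\Leftarrow$) Let $\langu \subseteq \mathcal{C}$ be \cmso-definable. By Courcelle's theorem that every \cmso-definable set is recognizable in \hr{}~\cite[Theorem 4.4]{CourcelleI}, there is a finite \hr{}-congruence of finite index on each type that saturates $\langu$. In the algebras considered here, every operation of $\algof{A}$ is the interpretation of a fixed \hr{} term $t_f$, and only the finitely many types $\set{1,2,3}$ (plus the intermediate type $\set{1,2,3,4}$ inside $t_f$) occur. Hence the restriction of this congruence to $\universeOf{A}$ is a congruence for every $f^\algof{A}$ and has finitely many classes. The quotient is then a finite $\signature^\algof{A}$-algebra $\algof{B}$, and $\langu$ is a union of classes, i.e.\ $\langu = h^{-1}(F)$ as in Definition~\ref{def:recognizability}.

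($\Rightarrow$) Let $\langu = h^{-1}(F)$ for a homomorphism $h$ into a finite algebra $\algof{B}$. First, I show that the set $\mathcal{T}_F$ of tree-encodings $\strucof{t}$ with $h(\hval(t)) \in F$ is \cmso-definable over $\overline{\signature}$-structures. The plan is to guess, by existentially quantified set variables $(Z_b)_{b\in\universeOf{B}}$, a labelling of the term nodes by elements of $\algof{B}$. We then check local consistency:
\begin{itemize}
\item at an $\overline{f}$-node with children labelled $b_1,\ldots,b_{\arityof{f}}$, the node's label is $f^\algof{B}(b_1,\ldots,b_{\arityof{f}})$;
\item at the root, the label is in $F$.
\end{itemize}
The main obstacle is the $\pop$-nodes, whose children are unordered and unbounded in number. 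Since $\pop^\algof{B}$ is associative and commutative (it is the image of an associative and commutative operation), the value at a $\pop$-node depends only on how many children carry each label $b$. In the finite commutative semigroup $(\universeOf{B},\pop^\algof{B})$, the powers of every element are ultimately periodic. So the value is determined by, for each $b$, a threshold test combined with a residue modulo the period. This is exactly expressible with $\cardconstr{X}{q}{p}$ applied to the set of children labelled $b$, and it is the reason \cmso{} rather than \mso{} is needed. Uniqueness of the labelling follows bottom-up, so existential and universal readings agree.

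Second, let $\trans$ be the parsing transduction from Definition~\ref{def:parsable}. I would restrict its domain to $\mathcal{C}$, which is legitimate by Proposition~\ref{prop:bt} when $\mathcal{C}$ is \mso-definable. This holds for triangle and fan graphs by Lemmas~\ref{lemma:triangle-definable} and~\ref{lemma:fan-definable}, and I take it as part of the parsability hypothesis in general. Then every output for $\graph \in \mathcal{C}$ encodes a term $t$ with $\hval(t)=\graph$. Because $h(\hval(t))$ does not depend on which such term is chosen, $\graph \in \langu$ iff $\graph \in \mathcal{C}$ and some output lies in $\mathcal{T}_F$, i.e.\ $\langu = \trans^{-1}(\mathcal{T}_F)$. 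By the Backwards Translation Theorem (Proposition~\ref{prop:bt}), $\langu$ is \cmso-definable.

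I expect the delicate points to be the threshold/modulo encoding at $\pop$-nodes and ensuring that graphs outside $\mathcal{C}$ cannot yield spurious encodings. The domain restriction handles the latter.
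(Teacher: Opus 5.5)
Your proof is correct and is essentially the standard argument behind this result, which the paper does not prove but imports from Courcelle~V. For definable $\Rightarrow$ recognizable you use Courcelle's theorem for \hr{} plus the fact that congruences are preserved by derived operations; for the converse you give a \cmso{} definition of the recognizable set of term-encodings (counting modulo at the unordered $\pop$-nodes) and pull it back along the parsing transduction by Backwards Translation. The hypotheses you add are genuinely necessary: that $\mathcal{C}$ is \mso-definable, and that $\delta(\graph)\neq\emptyset$ for every $\graph\in\mathcal{C}$, which you use tacitly when you infer that each $\graph\in\langu$ has an output in $\mathcal{T}_F$. Both follow if the parsing transduction has domain exactly $\mathcal{C}$; Definition~\ref{def:parsable} as written states neither, and the statement fails without them, though they do hold in the triangle and fan applications.
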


In the following, we shall apply the above theorem to the triangle and
fan algebras, respectively. To show that these sets of graphs are
parsable, we rely on a seminal result of Boja\'{n}czyk and Pilipczuk,
proving the existence of a transduction producing an optimal tree
decomposition of a graph of tree-width bounded by a given
constant~\cite{10.1145/2933575.2934508,journals/lmcs/BojanczykP22}. Before
stating this result, we recall their encoding of tree decompositions
using structures, over the following signature of relation
symbols: \begin{compactitem}[-]
\item $\node(x)$ means that $x$ is a node of the tree decomposition,
\item $\bagof(x,y)$ means that the vertex $x$ is in the bag of the node
  $y$,
\item $\parent(x,y)$ means that the node $x$ is the parent of the node
  $y$ in the tree decomposition,
\item $\edgrel(x,y,z)$ describes the incidence relation in the graph
  (\autoref{subsec:cmso})
\end{compactitem}
Importantly, the proof of the below theorem relies on the fact that
input graphs are encoded using the incidence model
(\autoref{subsec:cmso}). We adopt the same encoding of graphs by
structures, in order to use this seminal result in our proofs:

\begin{theorem}[Corollary 2.2 in \cite{journals/lmcs/BojanczykP22}]
  \label{thm:td-trans}
  For each $k \in \nat$, there exists a transduction $\mathcal{I}$
  from graphs to tree decompositions such that, for every input graph
  $\graph$: \begin{compactenum}
  \item if $\twof{\graph} \leq k$ then $\mathcal{I}(\graph) \neq
    \emptyset$, and
  \item each structure from $\mathcal{I}(\graph)$ is the encoding of an
    optimal tree decomposition of $\graph$.
  \end{compactenum}
\end{theorem}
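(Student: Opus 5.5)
The plan is to derive the statement from two ingredients. The first is a transduction producing \emph{some} tree decomposition of bounded width. The second is a width-optimization step. Optimality with respect to $\twof{\graph}$ is then obtained by a finite case split on the exact tree-width.

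\emph{Step 1 (reduction to width at most $j$).} It suffices to build, for each $j \le k$, a transduction $\mathcal{I}_j$ with two properties: $\mathcal{I}_j(\graph)\neq\emptyset$ whenever $\twof{\graph}\le j$, and every output encodes a tree decomposition of $\graph$ of width at most $j$. Indeed, the class $\set{\graph \mid \twof{\graph} = j}$ is \mso-definable. It is the conjunction of excluded-minor conditions for tree-width $\le j$ and the negation of those for tree-width $\le j-1$, using the Graph Minor Theorem as in Lemma~\ref{lemma:triangle-definable}. By Proposition~\ref{prop:bt}, restricting $\mathcal{I}_j$ to this class is again a transduction. On that class every output has width exactly $j = \twof{\graph}$, hence is optimal. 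Finally, the union of finitely many transductions is a transduction: one extra coloring parameter selects the index $j$, and the $\domof$ formula checks that the chosen index matches. This yields $\mathcal{I}$ satisfying both items of the statement.

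\emph{Step 2 (some bounded-width decomposition).} First I would establish an \mso{} transduction that, on graphs of tree-width $\le k$, outputs a tree decomposition of width at most $f(k)$ for some function $f$. This is the result of~\cite{10.1145/2933575.2934508}. Its outline is as follows. A graph of bounded tree-width is colored with boundedly many parameters encoding a ``guidance'' of an elimination/decomposition: roughly, nodes of the decomposition are represented by vertices ``introduced'' there, and the parent relation between nodes is \mso-definable from the colors via separators of size $O(k)$, which are definable as bounded-size sets of vertices. The incidence model is essential here, since it lets one quantify over edge sets. The bags are then recovered by \mso{} formul{\ae} expressing that a vertex has a neighbour in the subtree below a node. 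The copying step provides enough layers to create node elements distinct from vertices.

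\emph{Step 3 (optimizing the width; main obstacle).} The hard part is turning a decomposition of width $f(k)$, available as part of the structure, into one of width exactly $\le k$ by a further transduction. Composing the two transductions is then allowed by Proposition~\ref{prop:bt}. I would follow the Bodlaender--Kloks idea: run a finite-state dynamic program over the given decomposition that guesses, for each node, a local piece of an optimal decomposition. The obstacle is that the optimal decomposition may ``cut across'' the given one in an unbounded, alternating way, which \mso{} colorings of bounded size cannot describe directly. My first attempt would be a \emph{dealternation} lemma. It states that some optimal decomposition exists whose interaction with the given decomposition has bounded alternation along every root-to-leaf path. That decomposition can then be encoded by boundedly many colors on the given decomposition's nodes, and decoded by \mso{} formul{\ae}. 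Proving this lemma, presumably via a combinatorial exchange argument on separations (using Lemma~\ref{lemma:adhesion-separation}-style adhesion reasoning), is where I expect essentially all the difficulty to lie.
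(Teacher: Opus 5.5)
Your outline matches the route the paper relies on. The paper gives no proof of its own and imports the statement as Corollary~2.2 of \cite{journals/lmcs/BojanczykP22}, which composes the guidance-based transduction of \cite{10.1145/2933575.2934508} (width bounded by a function of $k$) with a Bodlaender--Kloks-style optimization transduction whose key ingredient is a Dealternation Lemma, as in your Steps~2 and~3; the exact-tree-width case split of Step~1 is a harmless extra layer, and the lemma you leave unproved in Step~3 is proved in that reference, so here it can be cited rather than re-proved.
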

\noindent An immediate consequence of this result is that the set of
graphs $\set{\graph \mid \twof{\graph} \leq k}$ is parsable in \hr,
for any given $k \in \nat$~\cite[Theorem
  6.8]{journals/corr/abs-2310-04764}.  Note that the transduction
$\mathcal{I}$ from the above theorem is guaranteed to return an
optimal tree decomposition, with no further assumption on the shape of
the decomposition. We make sure that Assumption \ref{ass:td} holds on
each output of this transduction, by composing $\mathcal{I}$ with a
transduction between tree decompositions that contracts the edges $mn
\in \edgeof{\tree}$ such that $\bag(m) \subseteq \bag(n)$ in the input
tree decomposition $(\tree,\bag)$. Such a transduction exists because,
in general, the quotienting of a structure by an \mso-definable
equivalence relation is a transduction~\cite[Lemma 2.4]{CourcelleV}.

\subsection{Parsability of the Triangle and Fan Graphs}

We prove next that the set of triangle graphs is parsable in the
triangle algebra. The transduction that builds, from each triangle
graph $\graph$, triangle terms $t$ such that $\hval(t)=\graph$, is
obtained by composing the transduction $\mathcal{I}$ from Theorem
\ref{thm:td-trans} with the transductions introduced by the following
lemmas.

First, we transform each tree decomposition $(\tree,\bag)$ of width
$k$ into an alternating tree decomposition having the $k$-step
property:

\begin{lemmaE}\label{lemma:k-step-atd}
  There exists a transduction $\mathcal{J}$ between tree
  decompositions such that, for each input tree decomposition
  $(\tree,\bag)$ of width $k$, the set $\mathcal{J}(\tree,\bag)$ is
  non-empty and consists of alternating tree decompositions of width
  $k$ having the $k$-step property.
\end{lemmaE}
\begin{proofE}
The construction is obtained as the composition of four
MSO-transductions. Starting from the Bojańczyk-Pilipczuk encoding of
an optimal tree decomposition of width $k$, the first transduction
``floods'' the decomposition by adding vertices to bags until every
bag has size exactly $k+1$.  The second transduction contracts
adjacent nodes having identical bags. The third transduction refines
the decomposition by replacing every edge whose endpoint bags differ
in more than one vertex by a path of intermediate bags, so that
adjacent bags differ by exactly one vertex. Finally, the fourth
transduction introduces explicit adhesion nodes between adjacent bags.
The resulting decomposition is alternating, with bag nodes of size
$k+1$ and adhesion nodes of size $k$, i.e., it has the $k$-step
property.

\vspace*{-\baselineskip}
\paragraph*{First transduction.} We now detail the first step:
\begin{claim}
  There exists an MSO-transduction which, given a graph $\graph$
  together with an encoding of an optimal tree decomposition
  $(\tree,\bag)$ of width $k$, outputs a tree decomposition
  $(\tree,\bag')$ of the same graph and of the same width such that
  every bag contains exactly $k+1$ vertices of $\graph$.
\end{claim}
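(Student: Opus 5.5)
The claim to prove is the flooding step: given a graph $\graph$ and the Boja\'{n}czyk--Pilipczuk encoding of an optimal tree decomposition $(\tree,\bag)$ of width $k$, an \mso-transduction outputs a tree decomposition $(\tree,\bag')$ of the same graph and width in which every bag has exactly $k+1$ vertices. The combinatorial flooding in the proof of Lemma~\ref{lemma:full-td} is an iterative, order-dependent process, and \mso{} cannot express it directly. I would replace it by a definable, order-free rule: root $\tree$, keep $\tree$ unchanged, and only enlarge bags. The tree is already in the structure through $\node$ and $\parent$, and rooting is an \mso-definable choice, since every $\parent$-encoded tree is already rooted at the unique node without parent. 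So the output structure copies $\node$, $\parent$ and $\edgrel$ unchanged, and only redefines $\bagof$.

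\emph{Definition of $\bag'$.} Fix a node $r$ with $\cardof{\bag(r)}=k+1$; it exists by optimality, and I would guess it with one coloring parameter (a singleton set) and check the size in $\domof$. Reroot $\tree$ at $r$. Reversing the $\parent$ relation along the path to $r$ is \mso-definable, since ``$x$ lies on the path from $y$ to $r$'' is definable in a tree. Then set
\[
\bag'(n)\isdef\bag(n)\cup F(n),
\]
where $F(n)$ is a set of $k+1-\cardof{\bag(n)}$ vertices taken from $\bag'(m)\setminus\bag(n)$, with $m$ the parent of $n$. This is always possible because $\cardof{\bag'(m)}=k+1$. It is a top-down propagation, and $F$ must be given by finitely many coloring parameters. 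I would use $k+1$ colors $C_1,\ldots,C_{k+1}$ on vertices that encode a $k$-history, in the spirit of Definition~\ref{def:history} and guaranteed by Lemma~\ref{lemma:history} for the flooded decomposition. The rule is then: $v\in\bag'(n)$ iff $v\in\bag(n)$, or $v$ has a color $c$ missing from $\bag(n)$ and $v$ is the color-$c$ vertex of $\bag'(m)$. Unfolding this, $v\in\bag'(n)$ iff there is an ancestor $p$ of $n$ (possibly $n$ itself) with $v\in\bag(p)$, $v$ of color $c$, and no bag strictly between $p$ and $n$ (including $n$) containing a vertex of color $c$. That condition is \mso{} over the tree with the colors.

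\emph{Correctness, to check in order.} First, $\domof$ requires that the colors form a proper coloring of every original bag; a valid coloring exists by Lemma~\ref{lemma:history}, applied to the flooded decomposition obtained from Lemma~\ref{lemma:full-td}'s iteration done top-down. Second, under this condition each $\bag'(n)$ contains exactly one vertex of each color, by induction from $r$, so $\cardof{\bag'(n)}=k+1$. Third, each original edge remains in a bag, since bags only grow. Fourth, connectivity of each vertex's node set must be preserved. A vertex $v$ added at $n$ comes from $\bag'(m)$ with $m$ the parent of $n$, so its set of nodes grows by attaching children and stays connected.

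I expect the main obstacle to be showing that the flooded bags stay consistent, namely that a vertex of color $c$ never ends up in two disjoint regions. This requires that the color-$c$ vertex inherited by $n$ is not already present in $\bag(n)$ under another color, which is exactly what a proper coloring of the original bags prevents. I would also verify the domain formula carefully and note that optimality of the width is preserved.
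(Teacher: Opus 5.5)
Your proposal is correct and is essentially the paper's proof. Both guess a center node with a bag of size $k+1$ and a $k$-history injective on every bag. Both let $\bag'(t)$ contain, for each color, the vertex of that color in the first bag on the path from $t$ to the center; your ``including $n$'' condition states this first-occurrence rule more precisely than the paper's condition~4. Both then verify $\cardof{\bag'(t)}=k+1$, $\bag(t)\subseteq\bag'(t)$, and connectivity via inheritance along the path to the center.

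Three simplifications apply. Rerooting is not needed. The coloring exists by Lemma~\ref{lemma:history} applied directly to the original optimal decomposition. The ``obstacle'' you anticipate is already settled by your fourth point, since each vertex carries exactly one color.
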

\begin{proof}
Let $k$ be fixed. We assume that the input already contains an
encoding of the tree decomposition over the signature used by
Boja\'nczyk and Pilipczuk, i.e., encoded by the relations $\node(x)$,
$\bagof(x,y)$, $\parent(x,y)$ and $\edgrel(x,y,z)$, as described
earlier.

Since the input decomposition is optimal of width $k$, there exists a
node $c\in \vertof{\tree}$ with $\cardof{\bag(c)}=k+1$. The
transduction guesses such a node $c$ (using a parameter $C$), which we
will call the \emph{center} in the following.  It also guesses
parameter sets $H_1,\dots,H_{k+1}\subseteq \vertof{\graph}$
interpreted as a $k$-history that maps vertices into the set
$\set{1,\ldots,k+1}$ (Definition \ref{def:history}). The domain
formula of the transduction requires that these predicates form a
partition of $\vertof{\graph}$, and that the history is injective on
every bag, i.e. \[\forall t \in \vertof{\tree}\ \forall x,y\in
\vertof{\graph} ~.~ \bagof(x,t)\wedge \bagof(y,t)\wedge H_i(x)\wedge
H_i(y) \rightarrow x=y\] for every $i\in\{1,\dots,k+1\}$. Note that
such a history always exists (Lemma \ref{lemma:history}). In the
following we refer to the history labeling of vertices defined by
$H_1, \ldots, H_{k+1}$ as to the coloring phase of the transduction.

We now define the new bag relation $\bagof'(x,t)$. For each node $t$
and each color $i$, consider the unique path from $t$ to the center
$c$. Since $\bag(c)$ contains all $k+1$ colors, there exists a first
node $s$ on this path whose bag contains a vertex of color $i$.
Because the coloring is injective on bags, there is a unique such
vertex; denote it by $v_i(t)$. We define
\[
\bag'(t)=\{v_i(t): i=1,\dots,k+1\}.
\]
Equivalently, $\bagof'(v,t)$ holds iff there exists a color
`$i\in\{1,\dots,k+1\}$ and a node $s\in \vertof{\tree}$ such that:
\begin{compactenum}
\item $v \in H_i$,
\item $s$ lies on the path from $t$ to $c$,
\item $\bagof(v,s)$ holds,
\item no node strictly between $t$ and $s$ contains a vertex of color $i$.
\end{compactenum}
All these conditions are \mso-definable over the encoding of the input
tree decomposition. In particular, the existence of a path between two
nodes is \mso-definable in trees.  It remains to prove the correctness
of the above definition.

First, every bag $\bag'(t)$ has size exactly $k+1$.  For every color
$i$, the first occurrence of color $i$ on the path from $t$ to $c$
exists because $\bag(c)$ contains all colors. Moreover, by injectivity
of the coloring on bags, this occurrence determines a unique vertex.
Hence $\bag'(t)$ contains exactly one vertex of each color.

Second, the original bag is contained in the new bag. Indeed, if
$v\in\bag(t)$ contains a vertex of color $i$, then $t$ itself is
already the first node on the path from $t$ to $c$ containing color
$i$.  Hence $v=v_i(t)$, and therefore $v\in\bag'(t)$. Consequently,
\[
\bag(t)\subseteq\bag'(t)
\]
for every node $t$. In particular, every edge that was covered before
is still covered, thus taking care of Definition \ref{def:tw}
(\ref{it2:def:tw}). Moreover, let $n$ be a node such that $\bag(n)$
contains all sources of $\graph$. Then $\bag'(n)$ also contains the
sources of $\graph$, thus taking case of Definition \ref{def:tw}
(\ref{it1:def:tw}). It remains to verify the connectedness condition,
namely Definition \ref{def:tw} (\ref{it3:def:tw}). Fix a vertex $v$,
and let
\[
\tree_v=\{t\in \vertof{\tree}: \bagof(v,t)\},
\qquad
\tree'_v=\{t\in \vertof{\tree}: \bagof'(v,t)\}.
\]

We know that $\tree_v$ is nonempty and connected.
We show that every newly added occurrence of $v$ is attached to $\tree_v$ along a path.
Suppose $v\in\bag'(t)$, and let $i$ be the color of $v$.
By definition, there exists a first node $s$ on the path from $t$ to $c$ such that $v\in\bag(s)$.
Hence $s\in \tree_v$.

Moreover, for every node $r$ on the path from $t$ to $s$, the same node $s$ is still the first node on the path from $r$ to $c$ containing color $i$.
Therefore $v\in\bag'(r)$.
Consequently, the entire path from $t$ to $s$ belongs to $\tree'_v$.
Since $s\in \tree_v$, every newly added occurrence of $v$ is connected to the original connected subtree $\tree_v$.

Therefore $\tree'_v$ is connected for every vertex $v$.
Hence $(\tree,\bag')$ is a valid tree decomposition of $\graph$.
Since every bag has size exactly $k+1$, its width is still $k$.
\end{proof}

\vspace*{-\baselineskip}
\paragraph*{Second transduction.}
The first transduction may create chains of neighbouring nodes carrying exactly the same bag. 
We eliminate nodes with the same bags, quotienting the decomposition tree by the equivalence relation:
\[
t \sim t' \iff \bag'(t)=\bag'(t').
\]
Since bag equality is \mso-definable, so is the relation $\sim$.  By
the standard result that quotienting by an \mso-definable equivalence
relation is an MSO transduction~\cite[Lemma 2.4]{CourcelleV}, this
yields a new tree decomposition of the same graph. Moreover, all bags
still have cardinality $k+1$, and adjacent nodes carry distinct bags.

\vspace*{-\baselineskip}
\paragraph*{Third transduction.}
We now refine the decomposition so that adjacent bags differ by exactly one vertex.
Let $tt'$ be an edge of the decomposition tree and define:
\[
D_t=\bag(t)\setminus\bag(t'),
\qquad
D_{t'}=\bag(t')\setminus\bag(t).
\]
Since all bags have cardinality $k+1$, these sets have the same
cardinality $m$. Moreover, since adjacent bags are distinct, we have
$m\geq 1$.

If $m=1$, we leave the edge unchanged. Otherwise, let:
\[
D_t=\{a_1,\ldots,a_m\},
\qquad
D_{t'}=\{b_1,\ldots,b_m\},
\]
where the vertices are ordered by increasing color. The transduction
replaces the edge $tt'$ by a path whose intermediate bags are:
\[
(\bag(t)\setminus\{a_1,\ldots,a_i\})
\cup
\{b_1,\ldots,b_i\},
\qquad
1\leq i<m.
\]
Since the coloring from the first transduction is injective on every
bag, these orderings are uniquely determined.  Moreover, $m\leq k+1$,
and hence at most $k$ intermediate nodes are required.  As $k$ is
fixed, the transduction can realize this construction by creating $k$
copies of every tree edge and using \mso{} formulae to determine which
copies are present and which bag they carry.

The resulting decomposition still represents the same graph. Indeed,
every original bag is preserved, so the coverage of sources and edges
remains unchanged. Moreover, for every graph vertex $v$, the set of
nodes whose bags contain $v$ remains connected: along each newly
introduced path, the occurrences of $v$ form a connected
subpath. Consequently, the tree-decomposition property is preserved.
After this step, every bag still has size $k+1$, and for every
adjacent pair of nodes $t,t'$ we have
$\cardof{\bag(t)\setminus\bag(t')} = \cardof{\bag(t')\setminus\bag(t)}
= 1$.

\vspace*{-\baselineskip}
\paragraph*{Fourth transduction.}
Let $r$ be a bag containing the set of sources. Such a bag exists by
the definition of tree decompositions. We introduce a fresh root
adhesion node whose bag consists precisely of the sources and connect
it to $r$.  We then apply the construction from the proof of
Lemma~\ref{lemma:alternating-td}, inserting an adhesion node on every
edge of the decomposition tree and labeling it with the intersection
of the bags of its two neighbouring bag nodes.  Since the new adhesion
nodes correspond bijectively to edges of the decomposition tree and
their bags are definable as intersections of adjacent bags, this
construction is \mso-definable. Since adjacent bag nodes differ by
exactly one vertex and have cardinality $k+1$, every adhesion bag has
cardinality $k$. The resulting decomposition is alternating and has
the $k$-step property.
\end{proofE}

Second, each alternating tree decomposition of width $k$ having the
$k$-step property is transformed into the tree-encoding of a triangle
term, by the following transduction:

\begin{lemmaE}\label{lemma:atd-triangle-term}
  There exists a transduction $\mathcal{K}^\trianglealg$ between tree
  decompositions and triangle terms such that, for each input tree
  decomposition $(\tree,\bag)$ of a graph $\graph$, if $\graph$ is
  triangle graph and $(\tree,\bag)$ is an alternating tree
  decomposition of $\graph$ having the $3$-step property, then the set
  $\mathcal{K}^\trianglealg(\tree,\bag)$ is non-empty and consists of
  tree-encodings of triangle terms $t$ such that $\hval(t)=\graph$.
\end{lemmaE}
\begin{proofE}
  We describe the three phases of $\mathcal{K}^\trianglealg$. For the
  interpretation phase, we sketch the construction of the respective
  \mso{} formul{\ae}.

  \vspace*{.5\baselineskip}\noindent\emph{Coloring}. The parameters of
  the transduction are $H_1, H_2, H_3$ and $H_4$, ranging over the
  vertices of $\graph$, for the choice of the $3$-history and $A,B$,
  ranging over the nodes of $\tree$, for the sets of adhesion and bag
  nodes.

  \vspace*{.5\baselineskip}\noindent\emph{Copying}. The transduction
  copies the input into $5$ layers. The nodes from the first layer are
  used to represent the nodes of the output term in the structure that
  encodes it, whereas the layers $2$, $3$ and $4$ contain copies of
  the nodes used to represent the additional $\basictriangle$ and
  $\emptygraph$ constants that may occur in this term. As a
  convention, a node from the $i$-th layer occurs on the $i$-th
  position in a tuple from the output structure, for $1 \leq i \leq
  4$. The $5^\mathit{th}$ layer is used for the children of adhesion
  nodes used to represent the $\basictriangle$ and $\emptygraph$
  constants joined in parallel to these terms. The construction is
  illustrated in Figure \ref{fig:sop} (rightmost column), where the
  notation ($n,i$) means that the node $n$ is taken from the $i$-th
  layer.

  \vspace*{.5\baselineskip}\noindent\emph{Interpreting}. The domain,
  universe and relations of the output are defined by the following
  \mso{} formul{\ae}, respectively: \begin{compactitem}[-]
  \item $\domof(A,B,H_1,H_2,H_3,H_4)$ is the conjunction of \mso{}
    formul{\ae} stating the following facts: \begin{compactenum}
    \item $\graph$ is a triangle graph: there exists an \mso{}
      sentence defining the set $\trianglegraphs$, by Lemma
      \ref{lemma:triangle-definable}. This sentence can be used here
      as such, because the vertices and edges of $\graph$ are
      available in the input structure, by the definition of the
      encoding of tree decompositions as structures.
    \item $(\tree,\bag)$ is $(A,B)$-bipartite, such that the root of
      $\tree$ belongs to $A$, its leaves belong to $B$ and each path
      from the root to a leaf alternates between nodes in $A$ and
      nodes in $B$. Moreover, all nodes in $A$ (resp. $B$) have bags
      of cardinality $3$ (resp. $4$).
    \item $H_1$, $H_2$, $H_3$ and $H_4$ meet the conditions from
      Definition \ref{def:history}, in particular no two vertices from
      a bag belong to the same set.
    \end{compactenum}
  \item $\univof{i}(x,A,B,H_1,H_2,H_3,H_4) \isdef \node(x)$, for all
    $1\leq i\leq4$, i.e., the universe of each layer is the set of
    nodes of $\tree$.
  \item $\parentof_{i,j}(x,y,A,B,H_1,H_2,H_3,H_4)$ is
    either: \begin{compactitem}[*]
  \item $A(x) \wedge B(y) \wedge \parent(x,y)$ if $(i,j)=(1,1)$, i.e.,
    the parent is an adhesion node and the child is a bag node, both
    taken from the first layer, or
  \item $A(x) \wedge x=y$ if $(i,j)=(1,5)$, i.e., both the parent and
    child are copies of the same adhesion node taken from the
    $1^\mathit{st}$ and $5^\mathit{th}$ layers, respectively.
  \item $\false$ if $(i,j) \not\in \set{(1,1),(1,5)}$.
  \end{compactitem}
  \item $\overline{\sop}_{i,j,k,\ell}(x,y,z,s,A,B,H_1,H_2,H_3,H_4)$ is
    defined according to the base and inductive cases of the proof of
    Theorem \ref{thm:triangle-graphs}, respectively. In both cases,
    $x$ is a bag node, i.e., $B(x)$ holds. \begin{compactitem}[*]
    \item (\emph{Base case}) If $x$ is a leaf of $\tree$ then $y$, $z$
      and $s$ are the copies of the parent of $x$ in $\tree$ (i.e.,
      the unique adhesion node $m$ such that $\parent(m,x)$ holds)
      taken from the $2^\mathit{nd}$, $3^\mathit{rd}$ and
      $4^\mathit{th}$ layers, respectively, i.e.,
      $(i,j,k,\ell)=(1,2,3,4)$. See Figure \ref{fig:sop} (a) for a
      depiction of the output structure in this case. Moreover,
      $\overline{\sop}_{i,j,k,\ell} \isdef \false$ if $(i,j,k,\ell)
      \neq (1,2,3,4)$.
    \item (\emph{Inductive case}) Else, $x$ is a non-leaf bag node of
      $\tree$ with adhesion parent $m$, having at most three adhesion
      children $p_1$, $p_2$ and $p_3$. Then, $y$, $z$ and $s$ are the
      copies of $p_1$, $p_2$ and $p_3$ from the first layer, i.e.,
      $(i,j,k,l)=(1,1,1,1)$.  If some child $p_s$ is missing it is
      replaced by the copy of $m$ from the $(s+1)$-th layer, for $1
      \leq s \leq 3$, as in the previous case, i.e., $j=s+1$ or
      $k=s+1$ or $\ell=s+1$. See Figure \ref{fig:sop} (b) for a
      depiction of the output structure in this case. Moreover,
      $\overline{\sop}_{i,j,k,\ell}(x,y,z,s,A,B,H_1,H_2,H_3,H_4)
      \isdef \false$ if $i\neq1$ or in any of the missing cases.
    \end{compactitem}
  \item $\overline{\basictriangle}_i(x,A,B,H_1,H_2,H_3,H_4)$ and
    $\overline{\emptygraph}_i(x,A,B,H_1,H_2,H_3,H_4)$, for the layers
    $2 \leq i \leq 4$, are defined according to the same case split as
    for the previous point. In each case, $x$ is an adhesion node
    (i.e., $A(x)$ holds) such that the bag of $x$ consists of vertices
    $u$, $v$ and $w$. Taking into account the edges of $\graph$, we
    distinguish whether $\overline{\basictriangle}_i(x)$ holds (i.e.,
    $uvw$ is a subgraph of $\graph$) or $\overline{\emptygraph}_i(x)$
    holds (i.e., $uvw$ is not a subgraph of $\graph$).
  \end{compactitem}
  The proof of the fact that $\mathcal{K}^\trianglealg(\tree,\bag)$ is
  non-empty and consists of tree-encodings of terms $t$ such that
  $\hval(t) = \graph$ follows from the argument used in the proof of
  Theorem \ref{thm:triangle-graphs}.
\end{proofE}
\ifLongVersion\else The proof of the above lemma benefits from the
fact that tree-encodings of terms alternate between nodes labeled by
parallel and serial composition, which matches the structure of
alternating tree decompositions: adhesions are translated into
parallel compositions and bags into serial compositions. \fi

\vspace*{-\baselineskip}
\paragraph*{Proof of Theorem \ref{thm:main} (the triangle case)}
By composing the transductions $\mathcal{I}$, $\mathcal{J}$ and
$\mathcal{K}^\trianglealg$ and taking the domain-restriction to
$\trianglegraphs$ (the set $\trianglegraphs$ is \mso-definable, by
Lemma \ref{lemma:triangle-definable}), we obtain a transduction from
triangle graphs to triangle terms that outputs one or more terms
encoding the input (triangle) graph (Proposition \ref{prop:bt}). By
Definition \ref{def:parsable}, the set $\trianglegraphs$ is parsable
in the triangle algebra, hence recognizability and \cmso-definability
coincide over triangle graphs, by Theorem \ref{thm:rec-def}. \qed

In the case of fan graphs, the proof of Theorem \ref{thm:main} uses
the following counterpart of Lemma \ref{lemma:atd-triangle-term}:

\begin{lemmaE}\label{lemma:atd-fan-term}
  There exists a transduction $\mathcal{K}^\fanalg$ between tree
  decompositions and fan terms such that, for each input tree
  decomposition $(\tree,\bag)$ of a graph $\graph$, if $\graph$ is fan
  graph and $(\tree,\bag)$ is an alternating tree decomposition of
  $\graph$ having the $3$-step property, then the set
  $\mathcal{K}^\fanalg(\tree,\bag)$ is non-empty and consists of
  tree-encodings of fan terms $t$ such that $\hval(t)=\graph$.
\end{lemmaE}
\begin{proofE}
  The definition of $\mathcal{K}^\fanalg$ is similar to that of
  $\mathcal{K}^\trianglealg$ from the proof of Lemma
  \ref{lemma:atd-triangle-term}. In particular, the coloring and
  copying phases are the same (except that $\mathcal{K}^\fanalg$ only
  requires $4$ layers). We point out the differences that occur in the
  interpretation of the following relations: \begin{compactitem}[-]
  \item $\overline{\sop}_{i,j,k,\ell}(x,y,z,s,A,B,H_1,H_2,H_3,H_4)$ is
    defined according to the inductive case of the proof of Theorem
    \ref{thm:triangle-graphs}, where $x$ denotes a bag node taken from
    the first layer, i.e., $B(x)$ holds and $i=1$ (i.e.,
    $\overline{\sop}_{i,j,k,\ell}(x,y,z,s,A,B,H_1,H_2,H_3,H_4) \isdef
    \false$ if $i \neq 1$). We distinguish the following
    cases: \begin{compactitem}[*]
    \item $x$ has a single adhesion child taken from the
      $1^\mathit{st}$ layer, either $y$, $z$ or $s$ (i.e., $j=1$,
      $k=1$ or $\ell=1$, respectively) and the two other subsets of
      cardinality $3$ from the bag of $x$ (i.e., except for the
      adhesion of $x$ to its parent and the adhesion of the child of
      $x$) induce $K_3$ subgraphs. The missing children are replaced
      by a copy of $x$ taken from the $2^\mathit{nd}$, $3^\mathit{rd}$
      or $4^\mathit{th}$ layers, according to each case. The order of
      the arguments is determined by $H_1$, $H_2$, $H_3$ and
      $H_4$. See Figure \ref{fig:fan} (c) for a depiction of this
      case, where the child of $x$ is $y$ and $H_1(x)$, $H_2(y)$,
      $H_3(z)$ and $H_4(a)$ hold.
    \item $x$ has two adhesion children taken from the $1^\mathit{st}$
      layer, either $y$, $z$ or $s$ (i.e., two of $j$, $k$ and $\ell$
      equal $1$, respectively) and one subset of cardinality $3$ of
      the bag of $x$ induce a $K_3$ subgraph. The missing child is
      replaced by a copy of $x$ taken from the $2^\mathit{nd}$,
      $3^\mathit{rd}$ or $4^\mathit{th}$ layers, according to each
      case. The order of the arguments is determined by $H_1$, $H_2$, 
       $H_3$ and $H_4$. See Figure \ref{fig:fan} (e) for a depiction of this
      case, where the children of $x$ are $y$ and $s$, and $H_1(x)$,
      $H_2(y)$, $H_3(z)$ and $H_4(a)$ hold.
    \item $x$ has three adhesion children taken from the
      $1^\mathit{st}$ layer, namely $x$, $y$ and $z$ (i.e.,
      $j=k=\ell=1$). The order of the arguments is determined by
      $H_1$, $H_2$, $H_3$ and $H_4$. See Figure \ref{fig:fan} (f) for
      a depiction of this case, where $H_1(x)$, $H_2(y)$, $H_3(z)$ and
      $H_4(a)$ hold.
    \end{compactitem}
  \item $\overline{\fan{i}}_{j,k,\ell}(x,y,z,A,B,H_1,H_2,H_3,H_4)$ is
    defined according to the inductive case of the proof of Theorem
    \ref{thm:triangle-graphs}, where $x$ denotes a bag node taken from
    the first layer, i.e., $B(x)$ holds and $j=1$ (i.e.,
    $\overline{\fan{i}}_{j,k,\ell}(x,y,z,A,B,H_1,H_2,H_3,H_4) \isdef
    \false$ if $j \neq 1$). We distinguish the following
    cases: \begin{compactitem}[*]
    \item $x$ is a leaf and $y=z$ is its parent, both taken from the
      $1^\mathit{st}$ layer (i.e., $k=\ell=1$). Moreover,
      $\overline{\fan{i}}_{j,k,\ell}(x,y,z,A,B,H_1,H_2,H_3,H_4) \isdef \false$ if
      $(k,\ell)\neq(1,1)$. Then, $1 \leq i \leq 3$ and the order
      of the arguments is determined by $H_1$, $H_2$, $H_3$ and $H_4$. See
      Figure \ref{fig:fan} (a) for a depiction of this case, where
      $i=2$ and $H_1(x)$, $H_2(y)$, $H_3(z)$ and $H_4(a)$ hold.
    \item $x$ has one adhesion child, either $y$ or $z$, taken from
      the $1^\mathit{st}$ layer (i.e., $k=1$ or $\ell=1$,
      respectively) and the other subset of cardinality $3$ from the
      bag of $x$ (i.e., except for the adhesion of $x$ to its parent
      and the adhesion of the child of $x$) induces a $K_3$
      subgraph. The missing child is replaced by a copy of $x$ taken
      from the $2^\mathit{nd}$ or $3^\mathit{rd}$ layers, according to
      each case. Then, $1 \leq i \leq 3$ and the order of the
      arguments is determined by $H_1$, $H_2$, $H_3$ and $H_4$. See
      Figure \ref{fig:fan} (b) for a depiction of this case, where
      $i=2$ and $H_1(x)$, $H_2(y)$, $H_3(z)$ and $H_4(a)$ hold.
    \item $x$ has two adhesion children, namely $y$ and $z$, both
      taken from the $1^\mathit{st}$ layer (i.e., $k=\ell=1$). Then,
      $1 \leq i \leq 3$ and the order of the arguments is determined
      by $H_1$, $H_2$, $H_3$ and $H_4$. See Figure \ref{fig:fan} (d)
      for a depiction of this case, where $i=2$ and $H_1(x)$,
      $H_2(y)$, $H_3(z)$ and $H_4(a)$ hold.
    \end{compactitem}
  \item $\overline{\basictriangle}_i(x,A,B,H_1,H_2,H_3,H_4)$, for the
    layers $2 \leq i \leq 4$, are defined according to the same case
    split as for the previous point. In each case, $x$ is an adhesion
    node (i.e., $A(x)$ holds) such that the bag of $x$ consists of
    vertices $u$, $v$ and $w$, such that $uvw$ is a subgraph of
    $\graph$. The latter condition can be checked taking into account
    the edges of $\graph$ from the encoding of $(\tree,\bag)$.
  \end{compactitem}
  The proof of the fact that $\mathcal{K}^\fanalg(\tree,\bag)$ is
  non-empty and consists of tree-encodings of terms $t$ such that
  $\hval(t) = \graph$ follows from the argument used in the proof of
  Theorem \ref{thm:fan-graphs}.
\end{proofE}

\vspace*{-\baselineskip}
\paragraph*{Proof of Theorem \ref{thm:main} (the fan case)}
By Definition \ref{def:fan-graph}, a $3$-graph $\graph$ is a fan graph
if and only if either it is $3$-connected, or its thickening
$\overline{\graph} = \graph \pop \basictriangle$ is $3$-connected.
The two cases can be identified by the \mso{} sentence that defines
the set of $3$-connected graphs, as in the proof of Lemma
\ref{lemma:fan-definable}.

In the first case, we apply to the input graph the transduction
$\mathcal{I}$ from Theorem \ref{thm:td-trans} composed with a
transduction $\mathcal{J}$ that outputs an alternating tree
decomposition of $\graph$. Such a transduction is defined as in the
proof of Lemma \ref{lemma:k-step-atd}. By Lemma
\ref{lemma:connectivity-tw}, this alternating tree decomposition has
the $3$-step property. Then, $\mathcal{I} \circ \mathcal{J}$ is
composed with the $\mathcal{K}^{\fan{}}$ transduction from Lemma
\ref{lemma:atd-fan-term}, whose output consists of tree-encodings of
fan terms $t$ such that $\graph=\hval(t)$.

In the second case, a first transduction computes the thickening
$\overline{\graph} = \graph \pop \basictriangle$ of $\graph$. Then, we
apply $\mathcal{I} \circ \mathcal{J} \circ \mathcal{K}^{\fan{}}$ to
$\overline{\graph}$, as in the first case above. The final
transduction removes the $\basictriangle$ child of the adhesion root,
thus yielding the tree-encoding of a term $t$ such that
$\graph=\hval(t)$. By Definition \ref{def:parsable}, the set
$\trianglefangraphs$ is parsable in the fan algebra, hence
recognizability and \cmso-definability coincide over fan graphs, by
Theorem \ref{thm:rec-def}. \qed

\section{Conclusions and Future Work}

We have identified a class of graphs characterized by the fact that
edges belong to cliques of three and the tree-width is at most
three. A subclass is further defined by a 3-connectivity
condition. First, for each of these classes of graphs, we provide
characterizations via algebras that extend the well-known
series-parallel graphs of tree-width two. The decidability of Monadic
Second Order with modulo constraints on the cardinality of sets
(\cmso) is a consequence of these algebraic characterization results.
Second, we prove that definability is equivalent to recognizability in
each of these algebras.

As future work, we plan to investigate applications of this new theory
to the decomposition of 3-connected graphs. Moreover, we aim at
defining syntactic descriptions of the recognizable sets, such as
regular grammars or regular expressions and investigate the
algorithmic complexity of their decision problems, such as emptiness,
membership and inclusion. Particularly interesting applications of
these results include the areas of verification, synthesis and
learning.

\bibliographystyle{plain}
\bibliography{refs}

@inproceedings{DBLP:conf/soda/KurkofkaP26,
  author       = {Jan Kurkofka and
                  Tim Planken},
  editor       = {Kasper Green Larsen and
                  Barna Saha},
  title        = {A Tutte-type canonical decomposition of 3- and 4-connected graphs},
  booktitle    = {Proceedings of the 2026 Annual {ACM-SIAM} Symposium on Discrete Algorithms,
                  {SODA} 2026, Vancouver, BC, Canada, January 11-14, 2026},
  pages        = {2942--3021},
  publisher    = {{SIAM}},
  year         = {2026},
  url          = {https://doi.org/10.1137/1.9781611978971.110},
  doi          = {10.1137/1.9781611978971.110},
  bibsource    = {dblp computer science bibliography, https://dblp.org}
}

@article{Cunningham_Edmonds_1980,
title={A Combinatorial Decomposition Theory},
volume={32},
DOI={10.4153/CJM-1980-057-7},
number={3},
journal={Canadian Journal of Mathematics},
author={Cunningham, William H. and Edmonds, Jack},
year={1980},
pages={734–765}}

@InProceedings{10.1007/BFb0017382,
author="Arnborg, Stefan
and Courcelle, Bruno
and Proskurowski, Andrzej
and Seese, Detlef",
editor="Ehrig, Hartmut
and Kreowski, Hans-J{\"o}rg
and Rozenberg, Grzegorz",
title="An algebraic theory of graph reduction",
booktitle="Graph Grammars and Their Application to Computer Science",
year="1991",
publisher="Springer Berlin Heidelberg",
address="Berlin, Heidelberg",
pages="70--83",
isbn="978-3-540-38395-6"
}

@INPROCEEDINGS{10353159,
  author={Carmesin, Johannes and Kurkofka, Jan},
  booktitle={2023 IEEE 64th Annual Symposium on Foundations of Computer Science (FOCS)}, 
  title={Canonical decompositions of 3-connected graphs}, 
  year={2023},
  volume={},
  number={},
  pages={1887-1920},
  doi={10.1109/FOCS57990.2023.00115}}

@article{GoldnerHarary,
title={Note on a smallest nonhamiltonian maximal planar graph},
author={Goldner, A. and Harary, F.},
journal={Bulletin Malaysian Mathematical Sciences Society},
volume={6},
pages={41--42},
year={1975}
}

@article{MA2024115486,
title = {Type-II Apollonian network: More robust and more efficient Apollonian network},
journal = {Chaos, Solitons and Fractals},
volume = {188},
pages = {115486},
year = {2024},
issn = {0960-0779},
doi = {https://doi.org/10.1016/j.chaos.2024.115486},
url = {https://www.sciencedirect.com/science/article/pii/S0960077924010385},
author = {Fei Ma and Jinzhi Ouyang and Haobin Shi and Wei Pan and Ping Wang}
}

@article{PhysRevLett.94.018702,
  title = {Apollonian Networks: Simultaneously Scale-Free, Small World, Euclidean, Space Filling, and with Matching Graphs},
  author = {Andrade, Jos\'e S. and Herrmann, Hans J. and Andrade, Roberto F. S. and da Silva, Luciano R.},
  journal = {Phys. Rev. Lett.},
  volume = {94},
  issue = {1},
  pages = {018702},
  numpages = {4},
  year = {2005},
  month = {Jan},
  publisher = {American Physical Society},
  doi = {10.1103/PhysRevLett.94.018702},
  url = {https://link.aps.org/doi/10.1103/PhysRevLett.94.018702}
}

@article{BENANTAR199585,
title = {Triangle graphs},
journal = {Applied Numerical Mathematics},
volume = {17},
number = {2},
pages = {85-96},
year = {1995},
issn = {0168-9274},
doi = {https://doi.org/10.1016/0168-9274(95)00011-I},
url = {https://www.sciencedirect.com/science/article/pii/016892749500011I},
author = {Messaoud Benantar and Uḡur Doḡrusöz and Joseph E Flaherty and Mukkai S Krishnamoorthy}
}

@Inbook{Buchi1990,
author="Buchi, J. Richard
and Landweber, Lawrence H.",
title="Solving Sequential Conditions by Finite-State Strategies",
bookTitle="The Collected Works of J. Richard B{\"u}chi",
year="1990",
publisher="Springer New York",
address="New York, NY",
pages="525--541",
isbn="978-1-4613-8928-6",
doi="10.1007/978-1-4613-8928-6_29",
url="https://doi.org/10.1007/978-1-4613-8928-6_29"
}

@book{DBLP:books/daglib/0020348,
  author       = {Christel Baier and
                  Joost{-}Pieter Katoen},
  title        = {Principles of model checking},
  publisher    = {{MIT} Press},
  year         = {2008},
  isbn         = {978-0-262-02649-9},
  bibsource    = {dblp computer science bibliography, https://dblp.org}
}

@inproceedings{Lics25,
author={Bozga, Marius and Iosif, Radu and Zuleger, Florian},
  booktitle={2025 40th Annual ACM/IEEE Symposium on Logic in Computer Science (LICS)}, 
  title={Regular Grammars for Sets of Graphs of Tree-Width 2}, 
  year={2025},
  volume={},
  number={},
  pages={704-717},
  doi={10.1109/LICS65433.2025.00059}}

@article{journals/corr/abs-2310-04764,
  author       = {Radu Iosif and
                  Florian Zuleger},
  title        = {Characterizations of Definable Context-Free Graphs},
  journal      = {Logical Methods in Computer Science},
  volume       = {22},
  year         = {2026},
  url          = {https://doi.org/10.48550/arXiv.2310.04764},
  doi          = {10.48550/ARXIV.2310.04764},
  eprinttype    = {arXiv},
  eprint       = {2310.04764}
}

@inproceedings{DBLP:conf/mfcs/DoczkalP18,
  author       = {Christian Doczkal and
                  Damien Pous},
  editor       = {Igor Potapov and
                  Paul G. Spirakis and
                  James Worrell},
  title        = {Treewidth-Two Graphs as a Free Algebra},
  booktitle    = {43rd International Symposium on Mathematical Foundations of Computer
                  Science, {MFCS} 2018, August 27-31, 2018, Liverpool, {UK}},
  series       = {LIPIcs},
  volume       = {117},
  pages        = {60:1--60:15},
  publisher    = {Schloss Dagstuhl - Leibniz-Zentrum f{\"{u}}r Informatik},
  year         = {2018},
  url          = {https://doi.org/10.4230/LIPIcs.MFCS.2018.60},
  doi          = {10.4230/LIPICS.MFCS.2018.60},
  bibsource    = {dblp computer science bibliography, https://dblp.org}
}

@article{DUFFIN1965303,
title = {Topology of series-parallel networks},
journal = {Journal of Mathematical Analysis and Applications},
volume = {10},
number = {2},
pages = {303-318},
year = {1965},
issn = {0022-247X},
doi = {https://doi.org/10.1016/0022-247X(65)90125-3},
url = {https://www.sciencedirect.com/science/article/pii/0022247X65901253},
author = {R.J Duffin}
}

@inproceedings{conf/icalp/Doumane0P24,
  author       = {Amina Doumane and
                  Samuel Humeau and
                  Damien Pous},
  editor       = {Karl Bringmann and
                  Martin Grohe and
                  Gabriele Puppis and
                  Ola Svensson},
  title        = {A Finite Presentation of Graphs of Treewidth at Most Three},
  booktitle    = {51st International Colloquium on Automata, Languages, and Programming,
                  {ICALP} 2024, July 8-12, 2024, Tallinn, Estonia},
  series       = {LIPIcs},
  volume       = {297},
  pages        = {135:1--135:18},
  publisher    = {Schloss Dagstuhl - Leibniz-Zentrum f{\"{u}}r Informatik},
  year         = {2024}
}

@inproceedings{DBLP:conf/icalp/Doumane22,
  author       = {Amina Doumane},
  editor       = {Mikolaj Bojanczyk and
                  Emanuela Merelli and
                  David P. Woodruff},
  title        = {Regular Expressions for Tree-Width 2 Graphs},
  booktitle    = {49th International Colloquium on Automata, Languages, and Programming,
                  {ICALP} 2022, July 4-8, 2022, Paris, France},
  series       = {LIPIcs},
  volume       = {229},
  pages        = {121:1--121:20},
  publisher    = {Schloss Dagstuhl - Leibniz-Zentrum f{\"{u}}r Informatik},
  year         = {2022},
  url          = {https://doi.org/10.4230/LIPIcs.ICALP.2022.121},
  doi          = {10.4230/LIPICS.ICALP.2022.121},
  bibsource    = {dblp computer science bibliography, https://dblp.org}
}

@article{Seese91,
title = {The structure of the models of decidable monadic theories of graphs},
journal = {Annals of Pure and Applied Logic},
volume = {53},
number = {2},
pages = {169-195},
year = {1991},
issn = {0168-0072},
doi = {https://doi.org/10.1016/0168-0072(91)90054-P},
url = {https://www.sciencedirect.com/science/article/pii/016800729190054P},
author = {D. Seese},
}

@book{DBLP:series/txtcs/FlumG06,
  author       = {J{\"{o}}rg Flum and
                  Martin Grohe},
  title        = {Parameterized Complexity Theory},
  series       = {Texts in Theoretical Computer Science. An {EATCS} Series},
  publisher    = {Springer},
  year         = {2006},
  url          = {https://doi.org/10.1007/3-540-29953-X},
  doi          = {10.1007/3-540-29953-X},
  isbn         = {978-3-540-29952-3},
  bibsource    = {dblp computer science bibliography, https://dblp.org}
}

@article{journals/lmcs/BojanczykP22,
  author       = {Mikolaj Bojanczyk and
                  Michal Pilipczuk},
  title        = {Optimizing tree decompositions in {MSO}},
  journal      = {Log. Methods Comput. Sci.},
  volume       = {18},
  number       = {1},
  year         = {2022},
  url          = {https://doi.org/10.46298/lmcs-18(1:26)2022},
  doi          = {10.46298/lmcs-18(1:26)2022},
  bibsource    = {dblp computer science bibliography, https://dblp.org}
}

@Inbook{Buechi90,
author="B{\"u}chi, J. Richard",
title="Weak Second-Order Arithmetic and Finite Automata",
bookTitle="The Collected Works of J. Richard B{\"u}chi",
year="1990",
publisher="Springer New York",
address="New York, NY",
pages="398--424",
isbn="978-1-4613-8928-6",
doi="10.1007/978-1-4613-8928-6_22",
url="https://doi.org/10.1007/978-1-4613-8928-6_22"
}

@article{Doner70,
author = {Doner, John},
title = {Tree Acceptors and Some of Their Applications},
year = {1970},
issue_date = {October, 1970},
publisher = {Academic Press, Inc.},
address = {USA},
volume = {4},
number = {5},
issn = {0022-0000},
url = {https://doi.org/10.1016/S0022-0000(70)80041-1},
doi = {10.1016/S0022-0000(70)80041-1},
journal = {J. Comput. Syst. Sci.},
month = {oct},
pages = {406–451},
numpages = {46}
}

@article{CourcelleI,
title = {The monadic second-order logic of graphs. I. Recognizable sets of finite graphs},
journal = {Information and Computation},
volume = {85},
number = {1},
pages = {12-75},
year = {1990},
issn = {0890-5401},
doi = {https://doi.org/10.1016/0890-5401(90)90043-H},
url = {https://www.sciencedirect.com/science/article/pii/089054019090043H},
author = {Bruno Courcelle},
}

@article{CourcelleV,
title = {The monadic second-order logic of graphs V: on closing the gap between definability and recognizability},
journal = {Theoretical Computer Science},
volume = {80},
number = {2},
pages = {153-202},
year = {1991},
issn = {0304-3975},
doi = {https://doi.org/10.1016/0304-3975(91)90387-H},
url = {https://www.sciencedirect.com/science/article/pii/030439759190387H},
author = {Bruno Courcelle}
}

@inproceedings{10.1145/2933575.2934508,
author = {Boja\'{n}czyk, Miko\l{}aj and Pilipczuk, Micha\l{}},
title = {Definability Equals Recognizability for Graphs of Bounded Treewidth},
year = {2016},
isbn = {9781450343916},
publisher = {Association for Computing Machinery},
address = {New York, NY, USA},
url = {https://doi.org/10.1145/2933575.2934508},
doi = {10.1145/2933575.2934508},
booktitle = {Proceedings of the 31st Annual ACM/IEEE Symposium on Logic in Computer Science},
pages = {407–416},
numpages = {10},
location = {New York, NY, USA},
series = {LICS '16}
}

@book{courcelle_engelfriet_2012,
place={Cambridge},
series={Encyclopedia of Mathematics and its Applications},
title={Graph Structure and Monadic Second-Order Logic: A Language-Theoretic Approach},
DOI={10.1017/CBO9780511977619},
publisher={Cambridge University Press},
author={Courcelle, Bruno and Engelfriet, Joost},
year={2012},
collection={Encyclopedia of Mathematics and its Applications}}

@article{ARNBORG19901,
title = {Forbidden minors characterization of partial 3-trees},
journal = {Discrete Mathematics},
volume = {80},
number = {1},
pages = {1-19},
year = {1990},
issn = {0012-365X},
doi = {https://doi.org/10.1016/0012-365X(90)90292-P},
url = {https://www.sciencedirect.com/science/article/pii/0012365X9090292P},
author = {Stefan Arnborg and Andrzej Proskurowski and Derek G. Corneil}
}

@article{SEYMOUR199322,
title = {Graph Searching and a Min-Max Theorem for Tree-Width},
journal = {Journal of Combinatorial Theory, Series B},
volume = {58},
number = {1},
pages = {22-33},
year = {1993},
issn = {0095-8956},
doi = {https://doi.org/10.1006/jctb.1993.1027},
url = {https://www.sciencedirect.com/science/article/pii/S0095895683710270},
author = {P.D. Seymour and R. Thomas},
}

\appendix
\section{Formal Definition of Transductions}
\label{app:transductions}

Let $\relations$ and $\relations'$ be relational signatures. A
relation $\trans$ between $\relations$- and $\relations'$-structures
is a $k$-\emph{copying} $(\relations,\relations')$-\emph{transduction}
if each output structure $\astruc' \in \trans(\astruc)$ is produced
from $k$ disjoint copies, called \emph{layers}, of the input structure
$\astruc$. The transduction is said to be \emph{copyless} if $k=1$.
The outcome of the transduction also depends on the valutation of zero
or more \emph{set parameters} $X_1, \ldots, X_n \in \Vars$, that range
over the subsets of the input universe. The transduction is said to be
\emph{parameterless} if $n=0$. Formally, we define
$(\relations,\relations')$-transductions using \emph{transduction
schemes}, i.e., finite tuples of \mso{} formul{\ae}:
\begin{align*}
  \scheme=\tuple{\varphi,\set{\psi_i}_{i \in \interv{1}{k}},
    \set{\theta_{(\qrel,i_1,\ldots,i_{\arityof{\qrel}})}}_{
      \qrel\in\relations',~
      i_1,\ldots,i_{\arityof{\qrel}}\in\interv{1}{k}
  }}
\end{align*}
where: \begin{compactitem}[-]
\item $\varphi(X_1,\ldots,X_n)$ defines the input structures
  $(\univ,\struc)$ for which the transduction has an output
  $(\univ',\struc')$, i.e., $(\univ,\struc) \models^\store \varphi$,
  for a store $\store$ that maps each $X_i$ into a set $\store(X_i)
  \subseteq \univ$,
\item $\psi_i(x_1, X_1, \ldots, X_n)$ defines the elements from the
  $i$-th layer copied in the output universe:
  \[\univ' \isdef \set{(u,i) \in \univ \times \interv{1}{k} \mid
    (\univ,\struc) \models^{\store[x_1 \leftarrow u]} \psi_i}\]
\item
  $\theta_{(\qrel,i_1,\ldots,i_{\arityof{\qrel}})}(x_1, \ldots,
  x_{\arityof{\qrel}}, X_1, \ldots, X_n)$ define the interpretation of
  $\qrel\in\relations'$ in the output:
  \[\struc'(\qrel) \isdef \{\tuple{(u_1,i_1), \ldots, (u_{\arityof{\qrel}}, i_{\arityof{\qrel}})} \mid
  (\univ,\struc) \Models^{\store[x_1\leftarrow u_1, \ldots, x_{\arityof{\qrel}} \leftarrow u_{\arityof{\qrel}}]}
  \theta_{(\qrel,i_1,\ldots,i_{\arityof{\qrel}})},~ i_1,\ldots,i_{\arityof{\qrel}} \in \interv{1}{k}\}\]
\end{compactitem}
For a given store $\store$, the output of the transduction is denoted
by $\defdof{\scheme}{\store} \isdef (\univ',\struc')$, where the
structure $(\univ',\struc')$ is the one defined above. Note that the
store valuations of $X_1, \ldots, X_n$ are the same everywhere in the
definition of $(\univ',\struc')$. The set $\defd{\scheme}(\astruc)$ is
the closure under isomorphism of the set
$\set{\defdof{\scheme}{\store} \mid \astruc \models^\store \varphi}$,
i.e., the output structures are the structures isomorphic to some
$\defdof{\scheme}{\store}$, whose elements are not necessarily pairs
of the form $(u,i)\in\univ\times\interv{1}{k}$. A transduction
$\trans$ is \emph{definable} iff $\trans=\defd{\scheme}$, for some
transduction scheme $\scheme$.

\ifLongVersion
\else

\section{Proof of Theorem \ref{thm:triangle-graphs}}
\label{app:triangle}
\printProofs[triangle]

\section{Proof of Theorem \ref{thm:fan-graphs}}
\label{app:fan}
\printProofs[fan]

\section{Proofs of Technical Lemmas}
\label{app:proofs}
\printProofs
\fi

\end{document}